\def\romannum{\begingroup
  \def\theenumi{\textup{(\roman{enumi})}}%
  \def\p@enumi{}%
  \def\labelenumi{\theenumi}%
  \enumerate}
\newcommand{\ignore}[1]{}
\newcommand{\integerset}[0]{\ensuremath{\mathbb{N}}}
\newcommand{\restrict}{\ensuremath{\upharpoonright}}
\newcommand{\reactivelycomposable}{\ensuremath{\trianglelefteq}}
\newtheorem{proposition}{Proposition}
\newtheorem{conjecture}{Conjecture}
\newtheorem{claim}{Claim}
\begin{document}

\title{The complexity of quantified constraints using the algebraic formulation}

\author[1]{Catarina Carvalho}
\author[2]{Barnaby Martin}
\author[3]{Dmitriy Zhuk}

\affil[1]{School of Physics, Astronomy and Mathematics, University of Hertfordshire, U.K.}
\affil[2]{School of Engineering and Computing Sciences, Durham University, U.K.}
\affil[3]{Moscow State University, 119899 Moscow, Russia.}
\authorrunning{C. Carvalho et al.}

\Copyright{C. Carvalho et al..}

\subjclass{F.2.2 Nonnumerical Algorithms and Problems}
\keywords{Quantified Constraints, Computational Complexity, Universal Algebra, Constraint Satisfaction}

\EventEditors{John Q. Open and Joan R. Acces}
\EventNoEds{2}
\EventLongTitle{42nd Conference on Very Important Topics (CVIT 2016)}
\EventShortTitle{CVIT 2016}
\EventAcronym{CVIT}
\EventYear{2016}
\EventDate{December 24--27, 2016}
\EventLocation{Little Whinging, United Kingdom}
\EventLogo{}
\SeriesVolume{42}
\ArticleNo{23}

\maketitle


\begin{abstract}
Let $\mathbb{A}$ be an idempotent algebra on a finite domain. We combine results of Chen \cite{AU-Chen-PGP}, Zhuk \cite{ZhukGap2015} and Carvalho \mbox{et al.} \cite{LICS2015} to argue that if $\mathbb{A}$ satisfies the polynomially generated powers property (PGP), then QCSP$(\mathrm{Inv}(\mathbb{A}))$ is in NP. We then use the result of Zhuk to prove a converse, that if $\mathrm{Inv}(\mathbb{A})$ satisfies the exponentially generated powers property (EGP), then QCSP$(\mathrm{Inv}(\mathbb{A}))$ is co-NP-hard. Since Zhuk proved that only PGP and EGP are possible, we derive a full dichotomy for the QCSP, justifying the moral correctness of what we term the Chen Conjecture (see \cite{Meditations}).

We examine in closer detail the situation for domains of size three. Over any finite domain, the only type of PGP that can occur is switchability. Switchability was introduced by Chen in  \cite{AU-Chen-PGP} as a generalisation of the already-known Collapsibility \cite{hubie-sicomp}. For three-element domain algebras $\mathbb{A}$ that are Switchable, we prove that for every finite subset $\Delta$ of Inv$(\mathbb{A})$, Pol$(\Delta)$ is Collapsible. The significance of this is that, for QCSP on finite structures (over three-element domain), all QCSP tractability explained by Switchability is already explained by Collapsibility.

Finally, we present a three-element domain complexity classification vignette, using known as well as derived results. 
%
\end{abstract}


\section{Introduction}

A large body of work exists from the past twenty years on
applications of universal algebra to the computational complexity of
\emph{constraint satisfaction problems} (CSPs) and a number of
celebrated results have been obtained through this method. 
One considers the problem CSP$(\mathcal{B})$ in which it is asked whether an input
sentence $\varphi$ holds on $\mathcal{B}$, where $\varphi$ is
\emph{primitive positive}, that is using only $\exists$, $\land$ and $=$. The CSP is one of a wide class of model-checking problems obtained from restrictions of first-order logic. For almost every one of these classes, we can give a complexity
classification~\cite{DBLP:journals/corr/abs-1210-6893}: the two
outstanding classes are CSPs and its popular extension \textsl{quantified
  CSPs} (QCSPs) for positive Horn sentences -- where $\forall$ is also present -- which is used in
Artificial Intelligence to model non-monotone reasoning or
uncertainty \cite{nonmonotonic}. 

The outstanding conjecture in the area is that all finite-domain
CSPs are either in P or are NP-complete, something surprising given
these CSPs appear to form a large microcosm of NP, and NP itself is
unlikely to have this dichotomy property. This Feder-Vardi
conjecture \cite{FederVardi}, given more concretely in the algebraic
language in \cite{JBK}, remains unsettled, but is now known for
large classes of structures. It is well-known that the complexity classification for QCSPs embeds the classification for CSPs: if $\mathcal{B}+1$ is $\mathcal{B}$ with the addition of a new isolated element not appearing in any relations, then CSP$(\mathcal{B})$ and QCSP$(\mathcal{B}+1)$ are polynomially equivalent. Thus the classification for QCSPs may be considered a project at least as hard as that for CSPs. The following is the merger of Conjectures 6 and 7 in \cite{Meditations} which we call the \emph{Chen Conjecture}.
\begin{conjecture}[Chen Conjecture]
Let $\mathcal{B}$ be a finite relational structure expanded with all constants. If Pol$(\mathcal{B})$ has PGP, then QCSP$(\mathcal{B})$ is in NP; otherwise QCSP$(\mathcal{B})$ is Pspace-complete.
\end{conjecture}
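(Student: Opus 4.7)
The plan is to establish the dichotomy by treating the two halves separately, using Zhuk's gap theorem that every finite idempotent algebra has either PGP or EGP. For the PGP direction (upper bound), the key step is the fact that over a finite domain PGP coincides with \emph{switchability}, a notion introduced by Chen that generalises his earlier \emph{collapsibility}. The strategy is then to combine this with the reduction technique of Chen and Carvalho \emph{et al.}: a switchable algebra $\mathbb{A}$ allows any QCSP$(\mathrm{Inv}(\mathbb{A}))$ instance with universal variables $y_1, \ldots, y_n$ to be rewritten as a polynomial-size conjunction of CSP$(\mathrm{Inv}(\mathbb{A}))$ instances obtained by instantiating each $y_i$ at a polynomially bounded set of witnesses. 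Since CSP is in NP, the QCSP inherits this bound via a deterministic polynomial-time reduction.

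For the EGP direction (lower bound), QCSP is trivially in Pspace, so the real issue is Pspace-hardness. The natural approach is to exploit the exponential lower bound on generators of $\mathbb{A}^n$: EGP guarantees a family of pp-definable relations exhibiting exponentially many essentially distinct behaviours, and such a family should provide enough expressive power to simulate the alternation of quantifiers in QBF. Concretely, one would try to pp-define, within $\mathrm{Inv}(\mathbb{A})$, a relation whose intended projections encode an instance of alternating Boolean formula evaluation, and then chain QCSP quantifiers along the coordinates of this relation.

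The main obstacle is the Pspace-hardness in the EGP case. The result of Zhuk currently gives only co-NP-hardness, by using EGP to force exponentially many universal checks on top of an NP-easy existential kernel; this is strictly weaker than Pspace-hardness, which requires the gadget to meaningfully interleave $\exists$ and $\forall$. Closing the gap demands a fine-grained case analysis guided by Zhuk's structural classification of non-switchable idempotent algebras: in every branch of the classification one would need to identify an algebraic obstruction that both forbids switchability and allows the encoding of genuine alternation. I expect this case-by-case construction of Pspace-hardness gadgets to be the heart of the difficulty, and the reason why the published work settles for the ``moral'' form of the conjecture with co-NP-hardness in place of Pspace-completeness.
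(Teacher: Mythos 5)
You should first be clear that the statement you are trying to prove is stated in the paper as a \emph{conjecture}, and the paper explicitly says ``The Chen Conjecture in its original form remains open.'' There is therefore no proof in the paper to compare against; what the paper proves is the weaker Theorem~\ref{thm:all} (Revised Chen Conjecture), which replaces the finite structure $\mathcal{B}$ by the infinite language $\mathrm{Inv}(\mathbb{A})$ under a ``simple logic'' encoding and, crucially, replaces Pspace-completeness by co-NP-hardness. Your sketch of the PGP half is essentially the paper's Theorem~\ref{thm:easy}: Zhuk's gap theorem gives switchability, and the QCSP reduces to polynomially many CSP instances. One caveat: the naive ``instantiate each universal variable at polynomially many witnesses'' argument only yields NP membership for bounded alternation $\Pi_k$-CSP; to get the full QCSP in NP one needs Chen's original Switchability via reactive composition and the equivalence of the two definitions from \cite{LICS2015} (extended to infinite signatures in Corollary~\ref{MainResult:InAbstractoLevavi}). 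For the original conjecture (finite $\mathcal{B}$ with constants) the NP half is indeed known to be true, so your real task is the EGP half.

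There your plan has a concrete, fatal obstacle that the paper itself exhibits. You propose to use EGP to pp-define gadgets that ``simulate the alternation of quantifiers in QBF,'' i.e.\ to derive Pspace-hardness from EGP alone. But the paper shows this is impossible at the level of $\mathrm{Inv}(\mathbb{A})$: Proposition~\ref{prop:coNP1} and Corollary~\ref{cor:coNP2} give idempotent algebras witnessing EGP (via the relations $\tau_k$ with $\alpha\cap\beta\neq\emptyset$) whose QCSP is in co-NP under the DNF encoding --- so these cases are co-NP-complete and cannot be Pspace-complete unless the polynomial hierarchy collapses; moreover Proposition~\ref{prop:finite-NL} puts every finite reduct of such a language in NL. Hence EGP does not in general provide enough ``expressive power to simulate alternation,'' and any proof of the original conjecture must exploit the finiteness of the signature of $\mathcal{B}$ in an essential way. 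What the paper actually extracts from EGP is Zhuk's Lemma~11 (the relations $\sigma_k$, hence $\tau_k$ by Lemma~\ref{lem:new-revision}, lie in $\mathrm{Inv}(\mathbb{A})$), which supports only a reduction from the complement of 3NAESAT (Theorem~\ref{thm:hard}), i.e.\ co-NP-hardness; the paper's stated route towards the original conjecture is to find polynomial-size pp-definitions of the $\tau_i$ over a finite $\mathcal{B}$, and even that would not by itself deliver the Pspace-hardness your plan requires. So your attempt correctly locates the open problem, but the strategy you propose for closing it is refuted by results already in the paper.
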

In \cite{Meditations}, Conjecture 6 gives the NP membership and Conjecture 7 the Pspace-completeness. We now know from \cite{ZhukGap2015} and \cite{LICS2015} that the NP membership of Conjecture 6 is indeed true. The most interesting result of this paper is Theorem~\ref{thm:all} below, but note that we permit infinite signatures (languages) although our domains remain finite. This aspect of our work will be discussed in detail later. 
\begin{theorem}[Revised Chen Conjecture]
\label{thm:all}
Let $\mathbb{A}$ be an idempotent algebra on a finite domain $A$. If $\mathbb{A}$ satisfies PGP, then QCSP$(\mathrm{Inv}(\mathbb{A}))$ is in NP. Otherwise, QCSP$(\mathrm{Inv}(\mathbb{A}))$ is co-NP-hard.
\end{theorem}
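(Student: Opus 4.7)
The plan is to split the theorem along Zhuk's gap dichotomy between PGP and EGP and handle the two halves by essentially independent methods.

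For the PGP half, I would first invoke the structural result that on a finite domain the only PGP behaviour is \emph{switchability}, which for idempotent algebras is exactly the content of the Zhuk gap theorem cited in the abstract. Chen's original switchability work gives, for any finite-signature template whose polymorphism clone is switchable, a polynomial-size reduction from the $\Pi_2$-fragment (and by iteration, from the full QCSP) to a polynomial-size family of CSP instances. The obstacle here is that $\mathrm{Inv}(\mathbb{A})$ has infinite signature, whereas Chen's reduction was stated for a fixed finite template. I would close this gap by appealing to the framework of Carvalho et al.: any concrete QCSP input uses only finitely many relations from $\mathrm{Inv}(\mathbb{A})$, and the polynomial reduction given by switchability depends only on the switchability data of $\mathbb{A}$, not on those particular relations. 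The outcome is that a nondeterministic polynomial-time machine can guess the polynomially many ``shifted'' tuples produced by switchability and solve the resulting CSP instances, giving membership in NP.

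For the EGP half, the goal is a polynomial-time reduction from a co-NP-hard problem, for instance the complement of $3$-SAT, into QCSP$(\mathrm{Inv}(\mathbb{A}))$. The hypothesis, via Zhuk's theorem, is that the minimum number of generators of $\mathbb{A}^n$ grows exponentially in $n$. This exponential growth must be converted into expressive power under universal quantification. I would extract, from the witnessing subpower, a sequence of invariant relations in $\mathrm{Inv}(\mathbb{A})$ that together allow a universally quantified block of variables to range over exponentially many ``essentially distinct'' configurations, each of which is then forced by existentially quantified gadget variables to falsify a designated clause of a given SAT instance. The QCSP sentence evaluates to true precisely when every such configuration falsifies the input formula, which is the complement of $3$-SAT.

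The main obstacle is the co-NP-hardness direction. The PGP-based NP membership is by now conceptually well understood, with all the technical machinery already in the cited papers. By contrast, extracting a clean boolean encoding from Zhuk's abstract EGP witnesses is delicate: one must verify that the invariant relations one produces genuinely separate the exponentially many generators of $\mathbb{A}^n$, rather than collapsing many of them together, so that the universal quantifier truly ranges over an exponentially rich search space rather than a polynomial one. Once such a separating system of relations is isolated, lifting it to an encoding of $3$-SAT clauses is essentially a pp-definability exercise; the real work lies in pinning down the right universal-algebraic primitive coming out of EGP in the first place.
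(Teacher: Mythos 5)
Your PGP half follows the paper's route: Zhuk's gap theorem converts PGP into Switchability, Chen's machinery reduces the QCSP for a Switchable template to polynomially many CSP instances, and the finite-reduct argument of Carvalho et al.\ handles the infinite signature (the paper packages this as Corollary~\ref{MainResult:InAbstractoLevavi}). The one point you gloss over is that Chen's \emph{adversary-based} Switchability, rather than the generating-set formulation, is what yields the full QCSP in NP instead of only the bounded-alternation fragments $\Pi_k$-CSP; reconciling the two formulations over an infinite signature is precisely the content of that corollary, but your sketch is morally the same argument.

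The EGP half, however, has a genuine gap, and you have located it yourself: ``pinning down the right universal-algebraic primitive coming out of EGP'' is not a detail to be deferred, it is the entire content of the hardness proof. Mere exponential growth of generating sets does not hand you pp-definable invariant relations with any useful disjunctive structure, and your plan of ``extracting a separating system from the witnessing subpower'' has no mechanism behind it. What the paper uses is Lemma~11 of \cite{ZhukGap2015}: EGP forces the existence of strict subsets $\alpha,\beta\subseteq A$ with $\alpha\cup\beta=A$ such that every relation $\sigma_k(x_1,y_1,\ldots,x_k,y_k)=\rho(x_1,y_1)\vee\cdots\vee\rho(x_k,y_k)$, where $\rho=(\alpha\times\alpha)\cup(\beta\times\beta)$, lies in $\mathrm{Inv}(\mathbb{A})$ (equivalently, all operations are $\alpha\beta$-projective). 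From these one pp-defines the ternary analogues $\tau_k$ and reduces from the complement of \emph{monotone 3NAESAT} by a \emph{purely universal} sentence: the quantifier-free part $x_1{=}y_1{=}z_1\vee\cdots\vee x_k{=}y_k{=}z_k$ becomes $\tau_k$, with $\alpha\setminus\beta$ and $\beta\setminus\alpha$ playing the roles of the two Boolean values and $\alpha\cap\beta$ absorbed as ``don't care''. Your choice of co-$3$-SAT with existentially quantified gadget variables runs into two further problems that this design avoids: the relation $\rho$ is symmetric under swapping $\alpha$ and $\beta$, so there is no evident way to encode negated literals (NAESAT is chosen precisely because its clauses are complement-invariant), and in the infinite-language setting you must also verify that the relations appearing in the constructed instance admit polynomial-size encodings --- the paper checks that $\tau_k$ has a DNF specification polynomial in $k$, without which the reduction is not polynomial-time.
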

Zhuk has previously proved \cite{ZhukGap2015} that only the cases PGP and EGP may occur, as well even in the non-idempotent case. With infinite languages, the NP-membership for Theorem~\ref{thm:all} is no longer immediate from \cite{LICS2015}, but requires a little extra work.  We are also able to refute the following form.
\begin{conjecture}[Alternative Chen Conjecture]
\label{thm:alternative}
Let $\mathbb{A}$ be an idempotent algebra on a finite domain $A$. If $\mathbb{A}$ satisfies PGP, then for every finite subset  $\Delta \subseteq \mathrm{Inv}(\mathbb{A})$, QCSP$(\Delta)$ is in NP. Otherwise, there exists a finite subset $\Delta \subseteq \mathrm{Inv}(\mathbb{A})$ so that QCSP$(\Delta)$ is co-NP-hard.
\end{conjecture}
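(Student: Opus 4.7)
My approach would be to prove the two clauses of the Alternative Chen Conjecture separately, since they have quite different flavors.

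For clause one (PGP implies every finite $\Delta \subseteq \mathrm{Inv}(\mathbb{A})$ has QCSP$(\Delta)$ in NP), this is an immediate consequence of Theorem~\ref{thm:all}. Under PGP, QCSP$(\mathrm{Inv}(\mathbb{A}))$ is in NP, and any instance of QCSP$(\Delta)$ for $\Delta \subseteq \mathrm{Inv}(\mathbb{A})$ is automatically an instance of QCSP$(\mathrm{Inv}(\mathbb{A}))$, so the same polynomial-time verifier works. The only subtlety is that the NP-membership in Theorem~\ref{thm:all} is stated for the infinite language $\mathrm{Inv}(\mathbb{A})$, and one must confirm that the certificate and verifier from \cite{LICS2015} behave uniformly on formulas using only a fixed finite palette of relations; this should go through routinely.

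For clause two (EGP implies some finite $\Delta \subseteq \mathrm{Inv}(\mathbb{A})$ gives co-NP-hard QCSP$(\Delta)$), my plan would be to compile Zhuk's co-NP-hardness reduction for QCSP$(\mathrm{Inv}(\mathbb{A}))$ down to a finite sublanguage. Concretely I would: (i) pick a canonical co-NP-complete source problem, say co-3SAT; (ii) inspect the Zhuk-style reduction from \cite{ZhukGap2015} and catalogue which relations of $\mathrm{Inv}(\mathbb{A})$ it invokes as input size $n$ grows; (iii) argue that this list stabilises to a finite set, either by re-expressing the needed family as pp-definitions over a single bounded-arity relation of $\mathrm{Inv}(\mathbb{A})$, or by absorbing the $n$-dependent structure into a longer but fixed quantifier prefix.

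Step (iii) is the main obstacle, and I expect the proof attempt to break here. EGP is a statement about unbounded exponential growth of generating sets of powers and does not by itself furnish a single finite relational template capturing the co-NP-hardness. In fact for suitable EGP idempotent algebras $\mathbb{A}$ on a small (say three-element) domain, every finite subset $\Delta \subseteq \mathrm{Inv}(\mathbb{A})$ gives a strictly larger polymorphism clone $\mathrm{Pol}(\Delta) \supsetneq \mathbb{A}$, and one can show, in the spirit of the paper's three-element analysis, that this larger clone satisfies PGP (indeed, Collapsibility). Applying Theorem~\ref{thm:all} to $\mathrm{Pol}(\Delta)$, and noting $\Delta \subseteq \mathrm{Inv}(\mathrm{Pol}(\Delta))$, places QCSP$(\Delta)$ in NP, so QCSP$(\Delta)$ is not co-NP-hard (under the standard assumption NP $\neq$ co-NP). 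Thus clause two genuinely fails on such an $\mathbb{A}$: the conjecture is refuted rather than proved, and any explicit construction of such an $\mathbb{A}$ serves as the counterexample.
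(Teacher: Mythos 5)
Your top-level verdict agrees with the paper: this statement is presented as a conjecture precisely because the paper \emph{refutes} it rather than proves it, with clause one being true and clause two false. Your treatment of clause one matches the paper's (it is Theorem~\ref{thm:easy} restricted to finite $\Delta$, where matters are if anything easier than the infinite-signature case you worry about), and your diagnosis of why clause two cannot be proved by compiling the hardness reduction is also sound: the reduction of Theorem~\ref{thm:hard} consumes $\tau_k$ for unboundedly large $k$, and whether these admit small pp-definitions over a finite language is exactly the open problem the paper's Discussion identifies as the remaining route to the original Chen Conjecture.

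The genuine gap is in your refutation of clause two: the counterexample is asserted, not constructed, and the tool you invoke cannot build it. The paper's three-element analysis (Theorem~\ref{cor:Dmitriy-long}, Corollary~\ref{cor:3-element-collapsibility}) concerns \emph{Switchable}, i.e.\ PGP, algebras and says nothing about $\mathrm{Pol}(\Delta)$ for finite $\Delta\subseteq\mathrm{Inv}(\mathbb{A})$ when $\mathbb{A}$ has EGP. Moreover, the property you claim is simply false for many EGP algebras: if EGP is witnessed by $\alpha\cap\beta=\emptyset$, or more generally whenever $\mathbb{A}$ has a G-set as a factor, then already finitely many relations of $\mathrm{Inv}(\mathbb{A})$ support Pspace- or $\Pi^{\mathrm{P}}_2$-hardness (this is how the paper's own vignette, Theorem~\ref{thm:vignette}, and \cite{BBCJK,QCSPmonoids} work), so clause two \emph{holds} for those algebras; the word ``suitable'' in your proposal conceals all of this. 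The paper's actual counterexample is essentially the maximal clone: fix $\alpha\cap\beta\neq\emptyset$ and take the idempotent algebra of all $\alpha\beta$-projective operations (EGP by Zhuk's Lemma 11 \cite{ZhukGap2015}), whose invariant relations are pp-generated by the $\tau_k$ and the constants. Then Proposition~\ref{prop:finite-NL} does the real work: the structure $(A;\{\sigma_k:k\leq m\},a_1,\ldots,a_n)$ admits an explicit $(3m+1)$-ary near-unanimity polymorphism (all tuples not forced by the near-unanimity criterion are sent into $\alpha\cap\beta$, preservation being checked by a pigeonhole argument), and since a polymorphism preserves every pp-definable relation, this one operation covers every finite $\Delta\subseteq\mathrm{Inv}(\mathbb{A})$; by \cite{hubie-sicomp} each QCSP$(\Delta)$ then reduces to polynomially many CSP instances and lies in NL, hence is not co-NP-hard under standard assumptions (your refutation is likewise conditional, on $\mathrm{NP}\neq\mathrm{co}\text{-}\mathrm{NP}$). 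That explicit construction -- the choice of algebra, the role of $\alpha\cap\beta\neq\emptyset$, and the near-unanimity polymorphism -- is the entire mathematical substance of the refutation, and it is precisely the part your proposal defers.
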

In proving Theorem~\ref{thm:all} we are saying that the complexity of QCSPs, with all constants included, is classified modulo the complexity of CSPs. 
\begin{corollary}
Let $\mathbb{A}$ be an idempotent algebra on a finite domain $A$. Either QCSP$(\mathrm{Inv}(\mathbb{A}))$ is co-NP-hard or  QCSP$(\mathrm{Inv}(\mathbb{A}))$ has the same complexity as CSP$(\mathrm{Inv}(\mathbb{A}))$.
\end{corollary}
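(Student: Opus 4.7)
The plan is to deduce the corollary directly from Theorem~\ref{thm:all} by a case split on whether $\mathbb A$ satisfies PGP. If $\mathbb A$ does not satisfy PGP, then by the PGP/EGP dichotomy of \cite{ZhukGap2015} it satisfies EGP, and the second clause of Theorem~\ref{thm:all} immediately yields that QCSP$(\mathrm{Inv}(\mathbb A))$ is co-NP-hard, establishing the first disjunct.

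In the remaining case $\mathbb A$ satisfies PGP, and the goal is to show that QCSP$(\mathrm{Inv}(\mathbb A))$ and CSP$(\mathrm{Inv}(\mathbb A))$ are polynomially equivalent. One direction is trivial, since every CSP instance is a QCSP instance (with empty universal prefix), giving CSP$(\mathrm{Inv}(\mathbb A)) \leq_P$ QCSP$(\mathrm{Inv}(\mathbb A))$. For the converse I would not appeal to NP-membership as a black box, but instead refine the argument behind the NP-membership half of Theorem~\ref{thm:all}: the combination of Chen's switchability with the results of \cite{ZhukGap2015} and \cite{LICS2015} actually supplies a polynomial-time many-one reduction from QCSP to CSP. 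Concretely, PGP implies that for every $n$ the $n$-th power $A^n$ is generated, under the polymorphisms of $\mathbb A$, by a polynomial number of tuples; a universally quantified block $\forall x_1 \cdots \forall x_k$ can therefore be replaced by a polynomial-size conjunction of instantiations of the body of the formula, producing a CSP instance of polynomial size that is equivalent to the input QCSP instance.

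The principal obstacle is the subtlety flagged in the discussion preceding Theorem~\ref{thm:all}, namely that $\mathrm{Inv}(\mathbb A)$ is an infinite relational signature whereas \cite{LICS2015} formulates its reduction only for a fixed finite signature. To overcome this I would check that the blow-up of the switchability reduction depends only on $|A|$ and on the PGP-generation polynomial, and not on the specific relations invoked, while observing that any concrete input instance mentions only finitely many symbols from $\mathrm{Inv}(\mathbb A)$. The reduction then applies uniformly and in polynomial time on arbitrary inputs, yielding QCSP$(\mathrm{Inv}(\mathbb A)) \leq_P$ CSP$(\mathrm{Inv}(\mathbb A))$ and completing the polynomial equivalence, hence the corollary.
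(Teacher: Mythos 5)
Your overall architecture coincides with the paper's (implicit) proof of this corollary: split on Zhuk's PGP/EGP dichotomy from \cite{ZhukGap2015}, take co-NP-hardness in the EGP case from Theorem~\ref{thm:hard}, and in the PGP case combine the trivial reduction CSP$(\mathrm{Inv}(\mathbb{A})) \leq_P$ QCSP$(\mathrm{Inv}(\mathbb{A}))$ with the reduction of QCSP$(\mathrm{Inv}(\mathbb{A}))$ to polynomially many instances of CSP$(\mathrm{Inv}(\mathbb{A}))$ recorded in Theorem~\ref{thm:easy}. That decomposition is exactly right, and had you invoked Theorem~\ref{thm:easy} as a black box you would be done.

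The gap is in your ``concretely'' step, where you elect to re-derive that reduction. From the bare fact that $\mathbb{A}^m$ is generated by polynomially many tuples you infer that the universal quantifiers may be restricted to range over that generating set; but this inference is precisely the non-trivial content, and the paper warns explicitly that using only the tuple-generation definition of switchability one obtains, \`a la Proposition 3.3 of \cite{AU-Chen-PGP}, only that the bounded-alternation problems $\Pi_n$-CSP$(\mathrm{Inv}(\mathbb{A}))$ are in NP --- not the full QCSP statement. The sound direction (a win for the existential player against the generating tuples implies a win against all tuples) requires interpolating the Skolem functions \emph{online}, respecting the quantifier prefix; this is Chen's reactive composition (Theorem~\ref{thm:hubieReactiveComposition}) applied to his original adversary-based Switchability, and reconciling that notion with the generating-set notion is the theorem of \cite{LICS2015}, which must further be lifted to the infinite signature $\mathrm{Inv}(\mathbb{A})$ via Corollary~\ref{MainResult:InAbstractoLevavi}. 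Relatedly, your diagnosis of the infinite-signature obstacle (checking that the blow-up depends only on $|A|$ and the PGP polynomial) misplaces the difficulty: the issue is not the size of the reduction but that the equivalence between the game condition and the generation condition is proved in \cite{LICS2015} only for finite signatures; the paper lifts it by a compactness argument selecting an infinitely recurring tuple of witnessing operations along an increasing chain of finite reducts. Once Theorem~\ref{thm:easy} is available in full, your case split goes through verbatim.
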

In this manner, our result follows in the footsteps of the similar result for the Valued CSP, which has also had its complexity classified modulo the CSP, as culminated in the paper \cite{FOCS2015}.

For a finite-domain algebra $\mathbb{A}$ we associate a function
$f_\mathbb{A}:\mathbb{N}\rightarrow\mathbb{N}$, giving the cardinality
of the minimal generating sets of the sequence $\mathbb{A},
\mathbb{A}^2, \mathbb{A}^3, \ldots$ as $f(1), f(2), f(3), \ldots$,
respectively. A subset $\Lambda$ of $A^m$ is a generating set for $\mathbb{A}^m$ exactly if, for every $(a_1,\ldots,a_m) \in A^m$, there exists a $k$-ary term operation $f$ of $\mathbb{A}$ and $(b^1_1,\ldots,b^1_m),\ldots,(b^k_1,\ldots,b^k_m) \in \Lambda$ so that $f(b^1_1,\ldots,b^k_1)=a_1$, \ldots, $f(b^1_m,\ldots,b^k_m)=a_m$. We may say $\mathbb{A}$ has the $g$-GP if $f(m) \leq
g(m)$ for all $m$. The question then arises as to the growth rate of
$f$ and specifically regarding the behaviours constant, logarithmic,
linear, polynomial and exponential. Wiegold proved in
\cite{WiegoldSemigroups} that if $\mathbb{A}$ is a finite semigroup
then $f_{\mathbb{A}}$ is either linear or exponential, with the former
prevailing precisely when $\mathbb{A}$ is a monoid. This dichotomy
classification may be seen as a gap theorem because no growth rates
intermediate between linear and exponential may occur. We say
$\mathbb{A}$  enjoys the \emph{polynomially generated powers} property
(PGP) if there exists a polynomial $p$ so that $f_{\mathbb{A}}=O(p)$
and  the \emph{exponentially generated powers} property (EGP) if there
exists a constant $b$ so that $f_{\mathbb{A}}=\Omega(g)$ where
$g(i)=b^i$.

In Hubie Chen's \cite{AU-Chen-PGP}, a new link between algebra and
QCSP was discovered. Chen's previous work in QCSP tractability largely
involved the special notion of \emph{Collapsibility}
\cite{hubie-sicomp}, but in \cite{AU-Chen-PGP} this was extended to a \emph{computationally effective} version of the PGP. For a finite-domain, idempotent algebra $\mathbb{A}$, \emph{$k$-collapsibility} may be seen as that special form of the PGP in which the generating set for $\mathbb{A}^m$ is constituted of all tuples $(x_1,\ldots,x_m)$ in which at least $m-k$ of these elements are equal. \emph{$k$-switchability} may be seen as another special form of the PGP in which the generating set for $\mathbb{A}^m$ is constituted of all tuples $(x_1,\ldots,x_m)$ in which there exists $a_i<\ldots<a_{k'}$, for $k'\leq k$, so that
\[ (x_1,\ldots,x_m) = (x_1,\ldots,x_{a_1},x_{a_1+1},\ldots,x_{a_2},x_{a_2+1},\ldots,\ldots,x_{a_k'},x_{a_k'+1},\ldots,x_m),\]
where $x_1=\ldots=x_{a_1-1}$, $x_{a_1}=\ldots=x_{a_2-1}$, \ldots, $x_{a_{k'}}=\ldots=x_{a_m}$. Thus, $a_1,a_2,\ldots,a_{k'}$ are the indices where the tuple switches value.  Note that these are not the original definitions, which we will see shortly, but they are proved equivalent to the original definitions (at least for finite signatures) in \cite{LICS2015}. Moreover, these are the definitions that we will use. We say that $\mathbb{A}$ is collapsible (switchable) if there exists $k$ such that it is $k$-collapsible ($k$-switchable). We note that Zhuk uses this definition of switchability in \cite{ZhukGap2015} in which he proved that the only kind of PGP for finite-domain algebras is switchability.

Let us capitalise Collapsibility and Switchability to indicate Chen's original definitions from \cite{AU-Chen-PGP} are used, following an example for \emph{arithmetic} versus \emph{Arithmetic} by Raymond Smullyan in \cite{smullyan1992godel}. There is the potential for confusion at the start of the sentence but, as was the case with Smullyan, the two will transpire to be interchangeable throughout our discourse. It is straightforward to see that $k$-Switchability implies $k$-switchability and $k$-Collapsibility implies $k$-collapsibility. The converses, for finite signatures, also hold, but this requires rather more work \cite{LICS2015}. For any finite algebra, $k$-Collapsibility implies $k$-Switchability. and for any $2$-element algebra, $k$-Switchability implies $k$-Collapsibility. Chen originally introduced Switchability because he found a $3$-element algebra that enjoyed the PGP but was not Collapsible \cite{AU-Chen-PGP}. He went on to prove that Switchability of $\mathbb{A}$ implies that the corresponding QCSP is in P, what one might informally state as QCSP$(\mathrm{Inv}(\mathbb{A}))$ in P, where $\mathrm{Inv}(\mathbb{A})$ can be seen as the structure over the same domain as $\mathbb{A}$ whose relations are precisely those that are preserved by (invariant under) all the operations of $\mathbb{A}$.
However, the QCSP was traditionally defined only on finite sets of relations (else the question arises as to encoding), thus a more formal definition might be that, for any finite subset $\Delta$ of $\mathrm{Inv}(\mathbb{A})$, QCSP$(\Delta)$ is in P. What we prove in this paper is that, as far as the QCSP is concerned, Switchability on a three-element algebra $\mathbb{A}$ is something of a mirage. What we mean by this is that when $\mathbb{A}$ is Switchable, for all finite subsets $\Delta$ of Inv$(\mathbb{A})$, already Pol$(\Delta)$ is Collapsible.
Thus, for QCSP complexity for three-element structures, we do not need the additional notion of Switchability to explain tractability, as Collapsibility will already suffice. Since these notions were originally introduced in connection with the QCSP this is particularly surprising. Note that the parameter $k$ of Collapsibility is unbounded over these increasing finite subsets $\Delta$ while the parameter of Switchability clearly remains bounded. In some way we are suggesting that Switchability itself might be seen as a limit phenomenon of Collapsibility.

\subsection{Infinite languages}

Our use of infinite languages (\mbox{i.e.} signatures, since we work on a finite domain) is the only controversial part of our discourse and merits special discussion. We wish to argue that a necessary corollary of the algebraic approach to (Q)CSP is a reconciliation with infinite languages. The traditional approach to consider arbitrary finite subsets of Inv$(\mathbb{A})$ is unsatisfactory in the sense that choosing this way to escape the -- naturally infinite -- set Inv$(\mathbb{A})$ is as arbitrary as the choice of encoding required for infinite languages. However, the difficulty in that choice is of course the reason why this route is often eschewed. The first possibility that comes to mind for encoding a relation in Inv$(\mathbb{A})$ is probably to list its tuples, while the second is likely to be to describe the relation in some kind of ``simple'' logic. Both these possibilities are discussed in \cite{Creignou}, for the Boolean domain, where the ``simple'' logic is the propositional calculus. For larger domains, this would be equivalent to quantifier-free propositions over equality with constants. Both Conjunctive Normal Form (CNF) and Disjunctive Normal Form (DNF) representations are considered in \cite{Creignou} and a similar discussion in \cite{ecsps} exposes the advantages of the DNF encoding. The point here is that testing non-emptiness of a relation encoded in CNF may already be NP-hard, while for DNF this will be tractable. Since DNF has some benign properties, we might consider it a ``nice, simple'' logic while for ``simple'' logic we encompass all quantifier-free sentences, that include DNF and CNF as special cases. The reason we describe this as ``simple'' logic is to compare against something stronger, say all first-order sentences over equality with constants. Here recognising non-emptiness becomes Pspace-hard and since QCSPs already sit in Pspace, this complexity is unreasonable.

For the QCSP over infinite languages Inv$(\mathbb{A})$, Chen and Mayr \cite{QCSPmonoids} have declared for our first, tuple-listing, encoding. In this paper we will choose the ``simple'' logic encoding, occasionally giving more refined results for its ``nice, simple'' restriction to DNF. Our choice of the ``simple'' logic encoding over the tuple-listing encoding will ultimately be justified by the (Revised) Chen Conjecture holding for ``simple'' logic yet failing for tuple-listings. Note that our demonstration of the (Revised) Chen Conjecture for infinite languages with the ``simple'' logic encoding does not resolve the original Chen Conjecture for finite languages $\mathcal{B}$ with constants because QCSP$(\mathrm{Inv}(\mathrm{Pol}(\mathcal{B})))$ could conceivably have higher complexity than QCSP$(\mathcal{B})$ due to a succinct representation of relations in $\mathrm{Inv}(\mathrm{Pol}(\mathcal{B}))$. Indeed, this belies one justification for the preferential study of finite subsets of $\mathrm{Inv}(\mathrm{Pol}(\mathcal{B}))$, since for finite signature $\mathcal{B}$ we can then say QCSP$(\mathcal{B})$ and QCSP$(\mathrm{Inv}(\mathrm{Pol}\mathcal{B}))$ must have the same complexity. Note that for finite relational bases $\mathcal{B}',\mathcal{B}''$ of $\mathrm{Inv}(\mathrm{Pol}(\mathcal{B}))$, QCSP$(\mathcal{B}')$ and QCSP$(\mathcal{B}'')$ must have the same complexity. Further, we do not know of any concrete finite $\mathcal{B}$ with constants, so that QCSP$(\mathrm{Inv}(\mathrm{Pol}(\mathcal{B})))$ and QCSP$(\mathcal{B})$ have different complexity. 

Let us consider examples of our encodings. For the domain $\{1,2,3\}$, we may give a binary relation either by the tuples $\{ (1,2), (2,1), (2,3), (3,2), (1,3), (3,1), (1,1) \}$ or by the ``simple'' logic formula $(x\neq y \vee x=1)$. For the domain $\{0,1\}$, we may give the ternary (not-all-equal) relation by the tuples $\{ (1,0,0), (0,1,0), (0,0,1), (1,1,0), (1,0,1), (1,1,0)\}$ or by the ``simple'' logic formula $(x\neq y \vee y \neq z)$. In both of these examples, the simple formula is also in DNF.

\vspace{0.2cm}
\noindent \textbf{Nota Bene}. The results of this paper apply for the ``simple'' logic encoding as well as the ``nice, simple'' encoding in DNF except where specifically stated otherwise. These exceptions are Proposition~\ref{prop:coNP2} and Corollary~\ref{cor:coNP2} (which  uses the ``nice, simple'' DNF) and Proposition~\ref{prop:Chen-fails} (which uses the tuple encoding).

\subsection{Related work}

This paper is the merger of \cite{MartinZhuk15,ChenConjectureQCSPs}, neither of which was submitted for publication, considerably extended.
 
\section{Preliminaries}

Let $[k]:=\{1,\ldots,k\}$. A \emph{$k$-ary polymorphism} of a relational structure $\mathcal{B}$ is a homomorphism $f$ from $\mathcal{B}^k$ to $\mathcal{B}$. Let Pol$(\mathcal{B})$ be the set of polymorphisms of $\mathcal{B}$ and let Inv$(\mathbb{A})$ be the set of relations on $A$ which are invariant under (each of) the operations of some finite algebra $\mathbb{A}$. Pol$(\mathcal{B})$ is an object known in Universal Algebra as a \emph{clone}, which is a set of operations containing all projections and closed under composition (superposition). A \emph{term operation} of an algebra $\mathbb{A}$ is an operation which is a member of the clone generated by $\mathbb{A}$.

We will conflate sets of operations over the same domain and algebras just as we do sets of relations over the same domain and constraint languages (relational structures). Indeed, the only technical difference between such objects is the movement away from an ordered signature, which is not something we will ever need. A \emph{reduct} of a relational structure $\mathcal{B}$ is a relational structure $\mathcal{B}'$ over the same domain obtained by forgetting some of the relations. If $\Delta$ is some finite subset of Inv$(\mathbb{A})$, then we may view $\Delta$ a being a finite reduct of the structure (associated with) Inv$(\mathbb{A})$.

A $k$-ary operation $f$ over $A$ is a \emph{projection} if $f(x_1,\ldots,x_k)=x_i$, for some $i \in [k]$.  When $\alpha,\beta$ are strict subsets of $A$ so that $\alpha \cup \beta=A$, then a $k$-ary operation $f$ on $A$ is said to be \emph{$\alpha\beta$-projective} if there exists $i \in [k]$ so that if $x_i \in \alpha$ (respectively, $x_i \in \beta$), then $f(x_1,\ldots,x_k) \in \alpha$ (respectively, $f(x_1,\ldots,x_k) \in \beta$).
 
We recall QCSP$(\mathcal{B})$, where $\mathcal{B}$ is some structure on a finite-domain, is a decision problem with input $\phi$, a pH-sentence (\mbox{i.e.} using just $\forall$, $\exists$, $\wedge$ and $=$) involving (a finite set of) relations of $\mathcal{B}$, encoded in propositional logic with equality and constants. The yes-instances are those $\phi$ for which $\mathcal{B} \models \phi$. If the input sentence is restricted to have alternation $\Pi_k$ then the corresponding problem is designated $\Pi_k$-CSP$(\mathcal{B})$.
 
\subsection{Games, adversaries and reactive composition}
\label{sec:games-adversaries-reactivecomposition}

We now recall some terminology due to
Chen~\cite{hubie-sicomp,AU-Chen-PGP}, for his natural adaptation of
the model checking game to the context of pH-sentences. 
We shall not need to explicitly play these games but only to handle
strategies for the existential player. This will enable us to give the original definitions for Collapsibility and Switchability.
An \emph{adversary} $\mathscr{B}$ of length $m\geq 1$ is an $m$-ary
relation over $A$. 
When $\mathscr{B}$ is precisely the set
$B_{1}\times B_{2} \times \ldots \times B_{m}$ for some non-empty
subsets $B_1,B_2,\ldots,B_m$ of $A$, we speak of a \emph{rectangular
  adversary} (we will sometimes specify this as a tuple rather than a product).
Let $\phi$ be a pH-sentence with universal variables $x_1,\ldots,x_m$ and quantifier-free
part $\psi$. We write $\mathcal{A}\models \phi_{\restrict\mathscr{B}}$ and say that
\emph{the existential player has a winning strategy in the
$(\mathcal{A},\phi)$-game against adversary $\mathscr{B}$} iff there
exists a set of Skolem functions $\{\sigma_x : \mbox{`$\exists x$'}
\in \phi \}$ such that for any assignment $\pi$ of the universally
quantified variables of $\phi$ to $A$, where
$\bigl(\pi(x_1),\ldots,\pi(x_m)\bigr) \in \mathscr{B}$, the map $h_\pi$ 
is a homomorphism from $\mathcal{D}_\psi$ (the canonical database) to $\mathcal{A}$, where
$$h_\pi(x):=
\begin{cases}
  \pi(x) & \text{, if $x$ is a universal variable; and,}\\
  \sigma_x(\left.\pi\right|_{Y_x})& \text{, otherwise.}\\
\end{cases}
$$
(Here, $Y_x$ denotes the set of universal variables preceding $x$ and $\left.\pi\right|_{Y_x}$ the restriction of $\pi$ to $Y_x$.)
Clearly, $\mathcal{A} \models \phi$ iff the existential player has a winning strategy in the
$(\mathcal{A},\phi)$-game against the so-called \emph{full
  (rectangular) 
  adversary} $A\times A \times \ldots \times A$ (which
we will denote hereafter by $A^m$).
We say that an adversary $\mathscr{B}$ of length $m$ \emph{dominates}
an adversary $\mathscr{B}'$ of length $m$ when $\mathscr{B}'\subseteq
\mathscr{B}$. Note that $\mathscr{B}'\subseteq  \mathscr{B}$ and
$\mathcal{A}\models \phi_{\restrict\mathscr{B}}$ implies
$\mathcal{A}\models \phi_{\restrict\mathscr{B}'}$. 
We will also
consider sets of adversaries of the same length, denoted
by uppercase Greek letters as in $\Omega_m$ (here the length is $m$); and, sequences thereof, which we
denote with bold uppercase Greek letters as in
$\mathbf{\Omega}=\bigl(\Omega_m\bigr)_{m \in \integerset}$. 
We will write
$\mathcal{A}\models \phi_{\restrict\Omega_m}$ to denote that
$\mathcal{A}\models  \phi_{\restrict\mathscr{B}}$ holds for every
adversary $\mathscr{B}$ in $\Omega_m$. 
%

Let $f$ be a $k$-ary operation of $\mathcal{A}$ and $\mathscr{A},\mathscr{B}_1,\ldots,\mathscr{B}_k$ be adversaries of length $m$.
We say that $\mathscr{A}$ is \emph{reactively composable} from the
adversaries $\mathscr{B}_1,\ldots,\mathscr{B}_k$ via $f$, and we write $\mathscr{A} \reactivelycomposable f(\mathscr{B}_1,\ldots,\mathscr{B}_k)$ iff there exist partial functions $g^j_i:A^i \to A$ for every $i$ in $[m]$ and every $j$ in $[k]$ such that, for every tuple $(a_1,\ldots,a_m)$ in adversary $\mathscr{A}$ the following holds.
\begin{itemize}
\item for every $j$ in $[k]$, the values 
  $g^j_1(a_1), g^j_2(a_1,a_2),$ $\ldots , g^j_m(a_1,a_2,\ldots,a_m)$ are defined and the tuple $\bigl(g^j_1(a_1), g^j_2(a_1,a_2), \ldots, g^j_m(a_1,a_2,\ldots,a_m)\bigr)$ is in adversary $\mathscr{B}_j$; and,
\item for every $i$ in $[m]$, $a_i =f\bigl(g^1_i(a_1,a_2,\ldots,a_i),$
  $g^2_i(a_1,a_2,\ldots,a_i),\ldots,g^k_i(a_1,a_2,\ldots,a_i))$.
\end{itemize}
We write $\mathscr{A} \reactivelycomposable
\{\mathscr{B}_1,\ldots,\mathscr{B}_k\}$ if there exists a $k$-ary
operation $f$ such that $\mathscr{A} \reactivelycomposable f(\mathscr{B}_1,\ldots,\mathscr{B}_k)$
Reactive composition allows to interpolate complete Skolem functions
from partial ones.
\begin{theorem}[{\cite[Theorem 7.6]{AU-Chen-PGP}}]
  \label{thm:hubieReactiveComposition}
  Let $\phi$ be a pH-sentence with $m$ universal variables. Let $\mathscr{A}$ be an adversary and $\Omega_m$ a set of adversaries, both of length $m$.
  
  If $\mathcal{A}\models \phi_{\restrict\Omega_m}$ and  $\mathscr{A} \reactivelycomposable \Omega_m$ then
  $\mathcal{A} \models \phi_{\restrict\mathscr{A}}$.
\end{theorem}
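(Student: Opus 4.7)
The plan is to construct winning Skolem functions for the $(\mathcal{A},\phi)$-game against $\mathscr{A}$ by ``lifting'' the existing winning Skolem functions for each of the adversaries $\mathscr{B}_1,\ldots,\mathscr{B}_k\in\Omega_m$ that reactively compose to $\mathscr{A}$, through the polymorphism $f$ that witnesses the composition. Since $\mathscr{A}\reactivelycomposable\Omega_m$, fix adversaries $\mathscr{B}_1,\ldots,\mathscr{B}_k$ in $\Omega_m$, a $k$-ary polymorphism $f$ of $\mathcal{A}$, and partial functions $g^j_i$ witnessing $\mathscr{A}\reactivelycomposable f(\mathscr{B}_1,\ldots,\mathscr{B}_k)$. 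By hypothesis, for each $j\in[k]$ fix a family $\{\sigma^j_x : \mbox{`}\exists x\mbox{'}\in\phi\}$ that wins against $\mathscr{B}_j$. For each existential variable $x$ with preceding universal variables $Y_x$ and any assignment $\pi|_{Y_x}=(a_1,\ldots,a_\ell)$, I define
\[
\sigma_x(\pi|_{Y_x}) := f\bigl(\sigma^1_x(\pi^1|_{Y_x}),\ldots,\sigma^k_x(\pi^k|_{Y_x})\bigr),
\]
where the induced ``shadow'' assignment $\pi^j|_{Y_x}$ sends each universal $x_i$ preceding $x$ to $g^j_i(a_1,\ldots,a_i)$. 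Because $g^j_i$ only looks at $a_1,\ldots,a_i$, the composed function $\sigma_x$ legitimately depends only on $\pi|_{Y_x}$, so the alternation structure of $\phi$ is respected.

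Next I would check that these $\sigma_x$ win against $\mathscr{A}$. Fix any universal play $\pi$ with $(a_1,\ldots,a_m):=(\pi(x_1),\ldots,\pi(x_m))\in\mathscr{A}$. Clause~(1) of reactive composition guarantees that each $g^j_i(a_1,\ldots,a_i)$ is defined and that the full shadow assignment $\pi^j := (g^j_1(a_1),\ldots,g^j_m(a_1,\ldots,a_m))$ lies in $\mathscr{B}_j$. Thus the associated map $h^j_\pi$ built from $\pi^j$ and $\sigma^j_x$ is a homomorphism from the canonical database $\mathcal{D}_\psi$ to $\mathcal{A}$. By construction $h_\pi(x)=f(h^1_\pi(x),\ldots,h^k_\pi(x))$ on every existential variable, and clause~(2) of reactive composition gives
\[
h_\pi(x_i)=a_i=f\bigl(g^1_i(a_1,\ldots,a_i),\ldots,g^k_i(a_1,\ldots,a_i)\bigr)=f\bigl(h^1_\pi(x_i),\ldots,h^k_\pi(x_i)\bigr)
\]
on every universal variable as well. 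Hence $h_\pi$ is the componentwise image of $(h^1_\pi,\ldots,h^k_\pi)$ under $f$. Since $f$ is a polymorphism of $\mathcal{A}$, it preserves every relation appearing in $\psi$, and so $h_\pi$ is itself a homomorphism $\mathcal{D}_\psi\to\mathcal{A}$, as required.

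The main obstacle I anticipate is purely a bookkeeping one: making sure that the induced Skolem function $\sigma_x$ only reads the universal variables in $Y_x$, so that the result is genuinely a Skolem function in the game-theoretic sense. This is exactly what the ``prefix-dependent'' form of the partial functions $g^j_i(a_1,\ldots,a_i)$ is engineered to supply, and once one adopts the above notation, the verification of both the prefix-consistency and the homomorphism property reduces to unpacking definitions and invoking the two clauses of reactive composition together with the fact that $f\in\mathrm{Pol}(\mathcal{A})$.
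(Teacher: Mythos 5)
Your proof is correct and is essentially the intended argument: the paper itself gives no proof of this statement, importing it verbatim as Theorem 7.6 of Chen \cite{AU-Chen-PGP}, and your construction---composing the Skolem functions for the $\mathscr{B}_j$ through $f$, using the prefix-dependence of the partial functions $g^j_i$ to ensure the lifted Skolem function $\sigma_x$ reads only $\pi|_{Y_x}$, and invoking $f\in\mathrm{Pol}(\mathcal{A})$ to conclude that $h_\pi$ is a homomorphism---is exactly how the result is proved there. The only pedantic point is that $\sigma_x(\pi|_{Y_x})$ is well-defined only when $(a_1,\ldots,a_\ell)$ is a prefix of some tuple of $\mathscr{A}$ (clause (1) guarantees definedness of the $g^j_i$ only on such prefixes); elsewhere one sets it arbitrarily, which is harmless since the winning condition against $\mathscr{A}$ quantifies only over plays whose universal tuple lies in $\mathscr{A}$.
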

As a concrete example of an interesting sequence of adversaries, consider the adversaries for the notion of
\emph{$p$-Collapsibility}. Let $p\geq 0$ be some fixed integer.
For $x$ in $A$, let $\Upsilon_{m,p,x}$ be the set of all 
rectangular 
adversaries of length $m$ with $p$ co-ordinates that are the set $A$
and all the others that are the fixed singleton $\{x\}$. For
$B\subseteq A$, let $\Upsilon_{m,p,B}$ be the union of $\Upsilon_{m,p,x}$ for all $x$ in $B$.
Let $\mathbf\Upsilon_{p,B}$ be the sequence of adversaries
$\Bigl(\Upsilon_{m,p,B} \Bigr)_{m \in \integerset}$.
We will define a structure $\mathcal{A}$ to be \emph{$p$-Collapsible
  from source $B$} iff for every $m$ and for all pH-sentence $\phi$
with $m$ universal variables, $\mathcal{A}\models \phi_{\restrict\Upsilon_{m,p,B}}$ implies   $\mathcal{A} \models \phi$. 

For $p$-Switchability, the adversaries will be of the form $\Xi_{m,p}$ which contains all tuples which have no more than $p$ switches.

For rectangular adversaries, such as $\Upsilon_{m,p,x}$, reactive composition is rather simpler than in the definition above, becoming just (ordinary) composition, as follows. $\mathscr{A}$ is \emph{composable} from the adversaries $\mathscr{B}_1,\ldots,\mathscr{B}_k$ via $f$ if $f(B^i_1,\ldots,B_i^k) \supseteq A^i$, where $\mathscr{A}=(A^1, \ldots, A^m)$ and each $\mathscr{B}_j=(B^1_j,\ldots,B^m_j)$. Reactive composition plays a key role in the proof of our main theorem but its use appears only in other papers that we will cite. Ordinary composition is the only type of reactive composition that will be used in this paper.

\section{The Chen Conjecture}

\subsection{NP-membership}

We need to revisit the main result of \cite{LICS2015} to show that it holds not just for finite signatures but for infinite signatures also. In its original the following theorem discussed ``projective sequences of adversaries, none of which are degenerate''. This includes Switching adversaries and we give it in this latter form. We furthermore remove some parts of the theorem that are not currently relevant to us.
  \begin{theorem}[\textbf{In abstracto} \cite{LICS2015}]\label{MainResult:InAbstracto}
    Let $\mathbf{\Omega}=\bigl(\Omega_m\bigr)_{m \in \integerset}$ be the sequence of the set of all ($k$-)Switching $m$-ary adversaries over the domain of $\mathcal{A}$, a finite structure. The following are equivalent.
  \begin{romannum}
  \item[$(i)$] For every $m\geq 1$, for every pH-sentence $\psi$ with $m$ universal
    variables, $\mathcal{A}\models \psi_{\restrict \Omega_{m}}$
    implies $\mathcal{A}\models \psi$.
  \item[$(vi)$] For every $m\geq 1$, $\Omega_m$ generates $\mathrm{Pol}(\mathcal{A})^m$.
  \end{romannum}
\end{theorem}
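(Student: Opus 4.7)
The plan is to reduce to the two-way equivalence (i) $\Leftrightarrow$ (vi), following the skeleton of \cite{LICS2015} but paying attention to where the infinite-signature setting matters.

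For (vi) $\Rightarrow$ (i), fix $m \geq 1$ and a pH-sentence $\psi$ with $m$ universal variables such that $\mathcal{A} \models \psi_{\restrict \Omega_m}$. I would first unwind the hypothesis that $\Omega_m$ generates $\mathrm{Pol}(\mathcal{A})^m$: for every tuple $(a_1,\ldots,a_m) \in A^m$ there is a $k$-ary polymorphism $f$ of $\mathcal{A}$ and tuples $(b^j_1,\ldots,b^j_m)$ ($j \in [k]$) drawn from adversaries in $\Omega_m$ with $a_i = f(b^1_i,\ldots,b^k_i)$ for every $i$. Packaging these coordinatewise decompositions into partial functions $g^j_i$ realises the reactive composition $A^m \reactivelycomposable \Omega_m$. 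Applying Theorem~\ref{thm:hubieReactiveComposition} then upgrades the assumed winning strategy against $\Omega_m$ to a winning strategy against the full adversary $A^m$, which is exactly $\mathcal{A} \models \psi$. This direction is insensitive to signature size and transplants essentially verbatim from \cite{LICS2015}.

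For (i) $\Rightarrow$ (vi) I would argue by contrapositive. Suppose some $\Omega_m$ fails to generate $\mathrm{Pol}(\mathcal{A})^m$, and let $R \subseteq A^m$ be the subuniverse of $\mathbb{A}^m$ generated by $\bigcup_{\mathscr{B} \in \Omega_m} \mathscr{B}$. Then $R \in \mathrm{Inv}(\mathbb{A})$, and by assumption there exists $(a_1,\ldots,a_m) \in A^m \setminus R$. Because $A$ is finite, $R$ admits a quantifier-free DNF representation over equality with constants (take $\bigvee_{\bar t \in R} \bigwedge_i x_i = t_i$), so it is a legitimate relation symbol under the ``simple'' logic encoding. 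I would then consider the sentence $\psi := \forall x_1 \cdots \forall x_m \, R(x_1,\ldots,x_m)$. For each $\mathscr{B} \in \Omega_m$, every tuple the universal player can pick from $\mathscr{B}$ lies in $\bigcup \Omega_m \subseteq R$, giving $\mathcal{A} \models \psi_{\restrict \Omega_m}$ vacuously (there are no existential variables to Skolemize); yet $(a_1,\ldots,a_m) \notin R$ shows $\mathcal{A} \not\models \psi$, contradicting (i).

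The main obstacle in lifting \cite{LICS2015} to our infinite-signature setting lies entirely in this second direction, and it is the conceptual point of ensuring that the witness relation $R$ is actually usable as input to the QCSP. In the finite-signature world this required more indirect pp-definition arguments; with an infinite signature under the ``simple'' logic encoding, the finiteness of $A$ makes $R$ directly expressible, so the obstruction dissolves. The direction (vi) $\Rightarrow$ (i), being a purely algebraic consequence of reactive composition, presents no additional difficulty.
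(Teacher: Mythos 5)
First, note that the paper does not prove Theorem~\ref{MainResult:InAbstracto} at all: it is imported wholesale from \cite{LICS2015} (that equivalence is the main technical content of that earlier paper), and the only thing proved here is Corollary~\ref{MainResult:InAbstractoLevavi}, which lifts it to infinite signatures by a compactness argument over finite reducts. So there is no in-paper proof to match your argument against; your proposal has to stand on its own, and its hard direction does not.

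The genuine gap is in $(vi)\Rightarrow(i)$, in the sentence ``Packaging these coordinatewise decompositions into partial functions $g^j_i$ realises the reactive composition $A^m \reactivelycomposable \Omega_m$.'' Generation and reactive composition are not the same thing, and passing from the former to the latter is precisely the ``rather more work'' that \cite{LICS2015} exists to do. Generation gives you, for each target tuple $(a_1,\ldots,a_m)$ separately, some operation $f_{\bar a}$ and witness tuples from $\bigcup\Omega_m$ that may depend on the \emph{whole} of $\bar a$. Reactive composition demands a single $f$ and prefix-respecting functions $g^j_i$ whose value at $(a_1,\ldots,a_i)$ is independent of the future coordinates $a_{i+1},\ldots,a_m$, with each resulting full tuple landing inside a single adversary of $\Omega_m$; without this, the composed Skolem functions are ill-defined, since two universal plays sharing a prefix could demand incompatible existential responses at that prefix. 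Two target tuples with a common prefix need not admit witness decompositions agreeing on that prefix, so the ``packaging'' step fails as stated; repairing it requires the projectivity/uniformisation analysis of switching adversaries that constitutes the bulk of \cite{LICS2015}. (The present paper even flags this: using only the generation-based definition of switchability, one gets merely the bounded-alternation statement that $\Pi_n$-CSP is in NP, \`a la Proposition 3.3 of \cite{AU-Chen-PGP}, not condition $(i)$.) Your direction $(i)\Rightarrow(vi)$ is essentially sound in spirit, but the witness sentence should not use $R$ as a relation symbol of $\mathcal{A}$ -- an arbitrary finite structure need not have the generated subpower among its relations; instead one takes a pp-definition $R(\bar x)\equiv\exists\bar y\,\theta(\bar x,\bar y)$ over $\mathcal{A}$ (available since $R\in\mathrm{Inv}(\mathrm{Pol}(\mathcal{A}))$) and uses $\forall\bar x\,\exists\bar y\,\theta$. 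Finally, your closing remarks about the infinite-signature setting belong to the Corollary, not to this Theorem, which concerns a finite structure.
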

 \begin{corollary}[\textbf{In abstracto levavi}]\label{MainResult:InAbstractoLevavi}
    Let $\mathbf{\Omega}=\bigl(\Omega_m\bigr)_{m \in \integerset}$ be the sequence of the set of all ($k$-)Switching $m$-ary adversaries over the domain of $\mathcal{A}$, a finite-domain structure with an infinite signature. The following are equivalent.
  \begin{romannum}
  \item[$(i)$] For every $m\geq 1$, for every pH-sentence $\psi$ with $m$ universal
    variables, $\mathcal{A}\models \psi_{\restrict \Omega_{m}}$
    implies $\mathcal{A}\models \psi$.
  \item[$(vi)$] For every $m\geq 1$, $\Omega_m$ generates $\mathrm{Pol}(\mathcal{A})^m$.
  \end{romannum}
\end{corollary}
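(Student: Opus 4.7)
The plan is to reduce each direction to the finite-signature Theorem~\ref{MainResult:InAbstracto} by passing to a suitable finite reduct of $\mathcal{A}$, exploiting the fact that every pH-sentence and every pp-definition is a finite syntactic object.

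For $(i)\Leftarrow(vi)$: I would take an arbitrary pH-sentence $\psi$ with $m$ universal variables, note that it mentions only finitely many relations, and let $\mathcal{A}'$ be the reduct of $\mathcal{A}$ to those relations. Since $\mathrm{Pol}(\mathcal{A}) \subseteq \mathrm{Pol}(\mathcal{A}')$, if $\Omega_m$ generates $\mathrm{Pol}(\mathcal{A})^m$ it must also generate $\mathrm{Pol}(\mathcal{A}')^m$ (more operations only make generation easier). Theorem~\ref{MainResult:InAbstracto}, applied to the finite-signature $\mathcal{A}'$, then gives that $\mathcal{A}' \models \psi_{\restrict \Omega_m}$ implies $\mathcal{A}' \models \psi$; since $\psi$ mentions only $\mathcal{A}'$-relations, both satisfaction statements transfer verbatim to $\mathcal{A}$.

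For $(i)\Rightarrow(vi)$: I would argue by contrapositive. Suppose $\Omega_m$ fails to generate $\mathrm{Pol}(\mathcal{A})^m$, and let $\mathscr{S}\subsetneq A^m$ denote the subalgebra it does generate. By construction $\mathscr{S}$ is $\mathrm{Pol}(\mathcal{A})$-invariant, and by the classical Pol--Inv Galois correspondence on a finite domain, $\mathscr{S}$ is pp-definable from finitely many relations $R_1,\ldots,R_n$ drawn from the signature of $\mathcal{A}$. Letting $\mathcal{A}'$ be the reduct to $\{R_1,\ldots,R_n\}$, the relation $\mathscr{S}$ is then $\mathrm{Pol}(\mathcal{A}')$-invariant as well, so the subalgebra of $\mathrm{Pol}(\mathcal{A}')^m$ generated by $\Omega_m \subseteq \mathscr{S}$ stays inside $\mathscr{S}$, and $(vi)$ fails for $\mathcal{A}'$. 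Theorem~\ref{MainResult:InAbstracto} applied to $\mathcal{A}'$ then produces a witnessing pH-sentence $\psi$ over the $R_i$; since each $R_i$ also belongs to the signature of $\mathcal{A}$, this same $\psi$ witnesses the failure of $(i)$ for $\mathcal{A}$.

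The main obstacle is the step that invokes the Pol--Inv Galois correspondence: one must verify that every $\mathrm{Pol}(\mathcal{A})$-invariant relation over a finite domain is pp-definable from \emph{finitely} many relations of the signature, the finiteness being the key point since any individual pp-formula is a finite syntactic object. Once this standard finite-domain fact is in hand, the remainder is essentially bookkeeping: both pH-satisfaction and the subalgebra generated by $\Omega_m$ are insensitive to enlarging the signature beyond the finitely many relations actually in use, which is exactly what licenses the reduction to the finite-signature theorem in each direction.
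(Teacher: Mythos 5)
Your proposal is correct, but the direction $(i)\Rightarrow(vi)$ takes a genuinely different route from the paper. The forward direction $(vi)\Rightarrow(i)$ is essentially the paper's argument: restrict to the finite reduct mentioned by $\psi$, observe that generation only gets easier when the polymorphism clone grows, and apply the finite-signature theorem. For the converse, the paper reduces to the statement ``if $\Omega_m$ generates $\mathrm{Pol}(\mathcal{A}')^m$ for every finite reduct $\mathcal{A}'$, then it generates $\mathrm{Pol}(\mathcal{A})^m$'' and proves it by a compactness-style argument: it takes an increasing chain of finite reducts exhausting the signature and, for each target tuple, pigeonholes over the finitely many possible witnessing operations (of arity bounded by $|\Omega_m|$) to extract one that recurs cofinally and is therefore a polymorphism of $\mathcal{A}$ itself. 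You instead argue contrapositively via the Pol--Inv Galois correspondence: the proper subalgebra $\mathscr{S}\subsetneq A^m$ generated by $\Omega_m$ lies in $\mathrm{Inv}(\mathrm{Pol}(\mathcal{A}))$, hence by Geiger's theorem is pp-definable from finitely many relations of $\mathcal{A}$, so the failure of generation is already visible in a single finite reduct, to which the finite-signature theorem applies. Both arguments are sound; yours is arguably cleaner and avoids the pigeonhole bookkeeping, at the cost of importing the full finite-domain Galois correspondence (including the observation that each pp-definition is a finite formula), whereas the paper's chain argument is elementary and self-contained. One small point worth making explicit in your write-up is the arity reduction implicit in the Galois step being unnecessary for you but essential for the paper: your argument needs no bound on the arity of witnessing operations, since invariance of $\mathscr{S}$ under all of $\mathrm{Pol}(\mathcal{A}')$ is used wholesale.
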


\begin{proof}
We know from Theorem~\ref{MainResult:InAbstracto} that the following are equivalent:
 \begin{romannum}
  \item[$(i')$] For every finite-signature reduct $\mathcal{A}'$ of $\mathcal{A}$ and $m\geq 1$, for every pH-sentence $\psi$ with $m$ universal
    variables, $\mathcal{A}' \models \psi_{\restrict \Omega_{m}}$
    implies $\mathcal{A}' \models \psi$.
  \item[$(vi')$] For every finite-signature reduct $\mathcal{A}'$ of $\mathcal{A}$ and every $m\geq 1$, $\Omega_m$ generates $\mathrm{Pol}(\mathcal{A}')^m$.
  \end{romannum}
Since it is clear that both  $(i) \Rightarrow (i')$ and $(vi) \Rightarrow (vi')$, it remains to argue that $(i') \Rightarrow (i)$ and $(vi') \Rightarrow (vi)$.

[$(i') \Rightarrow (i)$.] By contraposition, if $(i)$ fails then it fails on some specific  pH-sentence $\psi$ which only mentions a finite number of relations of $\mathcal{A}'$. Thus $(i')$ also fails on some finite reduct of $\mathcal{A}'$ mentioning these relations.

[$(vi') \Rightarrow (vi)$.] Let $m$ be given. Consider some chain of finite reducts $\mathcal{A}_1,\ldots,\mathcal{A}_2,\ldots$ of $\mathcal{A}$ so that each $\mathcal{A}_i$ is a reduct of $\mathcal{A}_j$ for $i<j$ and every relation of $\mathcal{A}$ appears in some $\mathcal{A}_i$. We can assume from $(vi)'$ that $\Omega_m$ generates $\mathrm{Pol}(\mathcal{A}_i)^m$, for each $i$. But since the number of tuples $(a_1,\ldots,a_m)$ and operations from $\Omega_m$ to $(a_1,\ldots,a_m)$ witnessing generation in $\mathrm{Pol}(\mathcal{A}')^m$ is finite, the sequence of operations $(f^i_1,\ldots,f^i_{|A|^m})$ witnessing these must have an infinitely recurring element as $i$ tends to infinity. One such recurring element we call $(f_1,\ldots,f_{|A|^m})$ and this witnesses generation in $\mathrm{Pol}(\mathcal{A})^m$.
\end{proof}
Note that in $(vi') \Rightarrow (vi)$ above we did not need to argue uniformly across the different $(a_1,\ldots,a_m)$ and it is enough to find an infinitely recurring operation for each of these individually.

The following result is essentially a corollary of the works of Chen and Zhuk \cite{AU-Chen-PGP,ZhukGap2015} via \cite{LICS2015}.
\begin{theorem}
\label{thm:easy}
Let $\mathbb{A}$ be an idempotent algebra on a finite domain $A$. If $\mathbb{A}$ satisfies PGP, then QCSP$(\mathrm{Inv}(\mathbb{A}))$ reduces to a polynomial number of instances of CSP$(\mathrm{Inv}(\mathbb{A}))$ and is in NP.
\end{theorem}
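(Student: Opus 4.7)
The plan is to stitch together Zhuk's gap theorem \cite{ZhukGap2015}, the just-proved Corollary \ref{MainResult:InAbstractoLevavi}, and a standard adversary-to-CSP reduction. By \cite{ZhukGap2015}, every idempotent finite-domain algebra satisfying PGP is $k$-switchable for some fixed $k=k(\mathbb{A})$; equivalently, the $m$-ary adversary $\Xi_{m,k}$ generates $\mathrm{Pol}(\mathcal{A})^m$ for every $m$, writing $\mathcal{A} := \mathrm{Inv}(\mathbb{A})$. Corollary \ref{MainResult:InAbstractoLevavi} applied with $\Omega_m := \{\Xi_{m,k}\}$ then yields the key equivalence: for every pH-sentence $\phi$ with $m$ universal variables,
\[ \mathcal{A} \models \phi \ \Longleftrightarrow \ \mathcal{A} \models \phi_{\restrict \Xi_{m,k}}. \]
The purpose of proving the Levavi form above was precisely to legitimise this step when the signature of $\mathrm{Inv}(\mathbb{A})$ is infinite.

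Given this equivalence, I would complete the reduction by a direct size bound. A tuple with at most $k$ switches is determined by its $\leq k$ switch positions and its $\leq k+1$ values, so $|\Xi_{m,k}| = O(m^k |A|^{k+1})$, polynomial in $m$ for fixed $\mathbb{A}$. Building the CSP: for each $\mathbf{a}=(a_1,\ldots,a_m)\in \Xi_{m,k}$ and each existential variable $y$ of $\phi$, introduce a CSP variable $y_{\mathbf{a}\restrict Y_y}$ representing the Skolem value $\sigma_y$ evaluated at the restriction of $\mathbf{a}$ to the universal variables $Y_y$ preceding $y$; two such copies coincide whenever their indexing prefixes are equal. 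For each $\mathbf{a}$, install one copy of the quantifier-free part of $\phi$ with $x_i$ replaced by $a_i$ and each $y$ replaced by $y_{\mathbf{a}\restrict Y_y}$. The number of distinct prefixes of tuples in $\Xi_{m,k}$ is still $O(m^k)$, so the combined instance (equivalently, the family of polynomially many sub-instances) has polynomial total size, and its satisfiability is tautologically equivalent to the existence of Skolem functions certifying $\mathcal{A}\models \phi_{\restrict \Xi_{m,k}}$, hence to $\mathcal{A}\models\phi$.

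NP-membership then follows immediately: each CSP constraint uses relations from $\mathrm{Inv}(\mathbb{A})$ encoded by simple-logic formulas, which can be evaluated on a guessed assignment in polynomial time, and polynomially many such CSP instances combine to a single NP verification. The main technical point I expect to require care is verifying that the prefix-identification of existential copies captures exactly the dependency structure of legitimate Skolem functions: a joint satisfying assignment must be shown to induce functions $\sigma_y : A^{|Y_y|}\to A$ defined consistently on all relevant arguments, and conversely any winning strategy must restrict to compatible values on each sub-instance. Once this correspondence is spelled out, both the polynomial reduction count and the NP upper bound fall out.
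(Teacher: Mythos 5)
Your proposal is correct and follows the paper's proof essentially step for step: Zhuk's theorem supplies Switchability, Corollary~\ref{MainResult:InAbstractoLevavi} $(vi)\Rightarrow(i)$ legitimises restricting the universal variables to the polynomially many switching tuples (the paper explicitly flags that without this corollary one would only obtain the bounded-alternation statement), and your prefix-indexed Skolemization is precisely the reduction the paper defers to Section 7 of \cite{AU-Chen-PGP} and Corollary 38 of \cite{LICS2015}. The only loose phrase is your ``equivalently, the family of polynomially many sub-instances'', since those sub-instances share existential variables and cannot be solved independently; but the single combined polynomial-size CSP instance you build already yields both the reduction and NP-membership.
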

\begin{proof}
We know from Theorem 7 in \cite{ZhukGap2015} that $\mathbb{A}$ is Switchable, whereupon we apply Corollary~\ref{MainResult:InAbstractoLevavi}, $(vi) \Rightarrow (i)$. By considering instances whose universal variables involve only the polynomial number of tuples from the Switching Adversary, one can see that QCSP$(\mathrm{Inv}(\mathbb{A}))$ reduces to a polynomial number of instances of CSP$(\mathrm{Inv}(\mathbb{A}))$ and is therefore in NP. Further details of the NP algorithm are given in Corollary 38 of \cite{LICS2015} but the argument here follows exactly Section 7 from \cite{AU-Chen-PGP}, in which it was originally proved that Switchability yields the corresponding QCSP in NP.
\end{proof}

Note that Chen's original definition of Switchability, based on adversaries and reactive composability, plays a key role in the NP membership algorithm in Theorem~\ref{thm:easy}. It is the result from \cite{LICS2015} that is required to reconcile the two definitions of switchability as equivalent, and indeed Corollary~\ref{MainResult:InAbstractoLevavi} is needed in this process for infinite signatures. If we were to use just our definition of switchability then it is only possible to prove, \`a la Proposition 3.3 in \cite{AU-Chen-PGP}, that the bounded alternation $\Pi_n$-CSP$(\mathrm{Inv}(\mathbb{A}))$ is in NP. Thus, using just the methods from \cite{AU-Chen-PGP} and \cite{ZhukGap2015}, we can not prove the Revised Chen Conjecture, but rather some bounded alternation (re)revision.

\subsection{co-NP-hardness}

Suppose there exist $\alpha,\beta$ strict subsets of $A$ so that $\alpha \cup \beta = A$, define the relation $\tau_k(x_1,y_1,z_1\ldots,x_k,y_k,z_k)$ defined by
\[ \tau_k(x_1,y_1,z_1\ldots,x_k,y_k,z_k):=\rho'(x_1,y_1,z_1) \vee \ldots \vee \rho'(x_k,y_k,z_k),\]
where $\rho'(x,y,z)=(\alpha \times \alpha \times \alpha) \cup (\beta \times \beta \times \beta)$. Strictly speaking, the $\alpha$ and $\beta$ are parameters of $\tau_k$ but we dispense with adding them to the notation since they will be fixed at any point in which we invoke the $\tau_k$. The purpose of the relations $\tau_k$ is to encode co-NP-hardness through the complement of the problem  (monotone) \emph{$3$-not-all-equal-satisfiability} (3NAESAT). Let us introduce also the important relations $\sigma_k(x_1,y_1,\ldots,x_k,y_k)$ defined by
\[ \sigma_k(x_1,y_1,\ldots,x_k,y_k):=\rho(x_1,y_1) \vee \ldots \vee \rho(x_k,y_k),\]
where $\rho(x,y)=(\alpha \times \alpha) \cup (\beta \times \beta)$.
\begin{lemma}
The relation $\tau_k$ is pp-definable in $\sigma_k$.
\label{lem:new-revision}
\end{lemma}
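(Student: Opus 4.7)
The plan is to proceed in two steps. First I would verify the pointwise factorisation
\[
\rho'(x,y,z) \;\equiv\; \rho(x,y)\wedge\rho(y,z)\wedge\rho(x,z),
\]
and then exploit it together with distributivity of $\vee$ over $\wedge$ to write $\tau_k$ as a conjunction of $3^k$ instances of $\sigma_k$.

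For the factorisation, the direction from $\rho'$ to the conjunction is immediate because both $\alpha\times\alpha\times\alpha$ and $\beta\times\beta\times\beta$ project onto the corresponding squares. For the converse, suppose all three pairs $(x,y),(y,z),(x,z)$ lie in $\rho$ but $(x,y,z)\notin\alpha\times\alpha\times\alpha$. Then some coordinate, say $x$, is not in $\alpha$, and since $\alpha\cup\beta=A$ we have $x\in\beta$. The atoms $\rho(x,y)$ and $\rho(x,z)$ can then only be witnessed by $\beta\times\beta$, forcing $y,z\in\beta$, and hence $(x,y,z)\in\beta\times\beta\times\beta\subseteq\rho'$. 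The analogous arguments cover the cases where the failing coordinate is $y$ or $z$, or where $(x,y,z)\notin\beta\times\beta\times\beta$.

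Substituting the factorisation into the defining disjunction of $\tau_k$ and applying distributivity of $\vee$ over $\wedge$ yields
\[
\tau_k(x_1,y_1,z_1,\ldots,x_k,y_k,z_k) \;\equiv\; \bigwedge_{f\in\{1,2,3\}^k}\; \bigvee_{i=1}^{k} \rho\bigl(p^i_{f(i)}\bigr),
\]
where $p^i_{1}=(x_i,y_i)$, $p^i_{2}=(y_i,z_i)$, and $p^i_{3}=(x_i,z_i)$. Each inner disjunction $\bigvee_{i}\rho(p^i_{f(i)})$ is precisely an instance of $\sigma_k$ applied to the $2k$-tuple of variables selected by $f$. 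Hence the right-hand side is a conjunction of $3^k$ $\sigma_k$-atoms over the free variables of $\tau_k$, yielding the required pp-definition; no existential variables are needed.

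The only step with any real content is the factorisation $\rho'\equiv\rho\wedge\rho\wedge\rho$, which uses the covering hypothesis $\alpha\cup\beta=A$ in an essential way; the remainder is a mechanical application of distributivity and is independent of the specific interpretation of $\rho$.
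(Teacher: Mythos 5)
Your proposal is correct, and it arrives at exactly the same pp-definition as the paper: the conjunction $\Phi$ of the $3^k$ instances of $\sigma_k$ indexed by the choice, for each $i$, of one of the three pairs from $(x_i,y_i,z_i)$. Where you differ is in how the equivalence $\Phi\equiv\tau_k$ is verified. The paper argues the nontrivial direction ($\Phi$ entails $\tau_k$) by contradiction, with a counting argument over how many of the $3^k$ clauses each true atom can satisfy; you instead isolate the pointwise identity $\rho'(x,y,z)\equiv\rho(x,y)\wedge\rho(y,z)\wedge\rho(x,z)$ (whose converse direction correctly uses $\alpha\cup\beta=A$: a coordinate outside $\alpha$ lies in $\beta$ and drags its two partners into $\beta$) and then obtain $\Phi$ by distributing $\vee$ over $\wedge$. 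Your route buys two things: it makes explicit the one fact that actually carries the content of the lemma, and it only needs the observation that $\neg\rho'(x_i,y_i,z_i)$ forces \emph{at least one} of the three atoms for index $i$ to fail, so that the choice function selecting a failing atom for each $i$ yields an unsatisfied clause. The paper's argument instead asserts along the way that no two of the three atoms for a given index can hold simultaneously, which is not true when $\alpha\cap\beta\neq\emptyset$ (take $x_i\in\alpha\setminus\beta$, $y_i\in\alpha\cap\beta$, $z_i\in\beta\setminus\alpha$); the paper's conclusion survives because only the existence of one falsified clause is needed, which is precisely what your distributivity argument delivers directly. So your proof is a cleaner and slightly more robust justification of the same definition.
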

\begin{proof}
We will argue that $\tau_k$ is definable by the conjunction $\Phi$ of $3^k$ instances of $\sigma_k$ that each consider the ways in which two variables may be chosen from each of the $(x_i,y_i,z_i)$, i.e. $x_i\sim y_i$ or $y_i\sim z_i$ or $x_i\sim z_i$ (where $\sim$ is infix for $\rho$). We need to show that this conjunction $\Phi$ entails $\tau_k$ (the converse is trivial). We will assume for contradiction that $\Phi$ is satisfiable but $\tau_k$ not. In the first instance of $\sigma_k$ of $\Phi$ some atom must be true, and it will be of the form $x_i\sim y_i$ or $y_i\sim z_i$ or $x_i\sim z_i$. Once we have settled on one of these three, $p_i\sim q_i$, then we immediately satisfy $3^{k-1}$ of the conjunctions of $\Phi$, leaving $2\cdot 3^{k-1}$ unsatisfied. Now, we can not evaluate true any of the others among $\{x_i\sim y_i, y_i\sim z_i, x_i\sim z_i\} \setminus \{p_i\sim q_i\}$ without contradicting our assumption. Thus we are now down to looking at variables with subscript other than $i$ and in this fashion we have made the space one smaller, in total $k-1$. Now, we will need to evaluate in $\Phi$ some other atom of the form  $x_j\sim y_j$ or $y_j\sim z_j$ or $x_j\sim z_j$, for $j\neq i$. Once we have settled on one of these three then we immediately satisfy $2 \cdot 3^{k-2}$ of the conjunctions remaining of $\Phi$, leaving $2^2 \cdot 3^{k-2}$ still unsatisfied. Iterating this thinking, we arrive at a situation in which $2^k$ clauses are unsatisfied after we have gone through all $k$ subscripts, which is a contradiction. 
\end{proof}
\begin{theorem}
\label{thm:hard}
Let $\mathbb{A}$ be an idempotent algebra on a finite domain $A$. If $\mathbb{A}$ satisfies EGP, then QCSP$(\mathrm{Inv}(\mathbb{A}))$ is co-NP-hard.
\end{theorem}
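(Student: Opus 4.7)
The plan is to reduce from the complement of monotone 3-not-all-equal-satisfiability (co-3NAESAT), a co-NP-complete problem, to QCSP$(\mathrm{Inv}(\mathbb{A}))$, using a single universally quantified constraint built from the relations $\tau_k$ introduced above.

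First, I invoke Zhuk's structural characterisation of EGP from \cite{ZhukGap2015}: for a finite idempotent algebra $\mathbb{A}$ with EGP there exist strict subsets $\alpha,\beta$ of $A$ with $\alpha \cup \beta = A$ such that every polymorphism of $\mathbb{A}$ is $\alpha\beta$-projective. Fix such $\alpha,\beta$ and use them as the parameters in the definitions of $\rho$, $\rho'$, $\sigma_k$ and $\tau_k$. Next I check that each $\tau_k$ belongs to $\mathrm{Inv}(\mathbb{A})$: given $r$ tuples $t_1,\ldots,t_r \in \tau_k$ and an $r$-ary polymorphism $f$ that is $\alpha\beta$-projective on some coordinate $i$, the tuple $t_i$ has some triple-index $j^\star$ whose triple lies in $\alpha^3 \cup \beta^3$, say in $\alpha^3$; then $\alpha\beta$-projectivity forces the $j^\star$-th triple of the componentwise $f$-image to land in $\alpha^3$, so the image is still in $\tau_k$. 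In particular $\tau_k$ is available in the simple-logic encoding, since it is already a quantifier-free DNF formula over equality with constants of size $O(k|A|)$.

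For the reduction, because $\alpha,\beta$ are proper and cover $A$ both $\alpha \setminus \beta$ and $\beta \setminus \alpha$ are nonempty, so we may fix witnesses $a_0 \in \alpha \setminus \beta$ and $a_1 \in \beta \setminus \alpha$. Given a monotone 3NAESAT instance $\psi = C_1 \wedge \cdots \wedge C_m$ on variables $v_1,\ldots,v_n$ with $C_i = (l^i_1,l^i_2,l^i_3)$, output the pH-sentence
\[ \Phi \;:=\; \forall v_1 \cdots \forall v_n\ \tau_m\bigl(l^1_1,l^1_2,l^1_3,\ldots,l^m_1,l^m_2,l^m_3\bigr). \]
If $\psi$ has a Boolean NAE assignment $\sigma$, then $\pi(v) := a_{\sigma(v)}$ is a universal assignment under which every triple contains both $a_0 \notin \beta$ and $a_1 \notin \alpha$, so no triple is in $\alpha^3 \cup \beta^3$, $\tau_m$ fails, and $\Phi$ fails. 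Conversely, if $\psi$ has no Boolean NAE assignment, then for any $\pi\colon\{v_1,\ldots,v_n\} \to A$ define $\sigma(v) := 0$ when $\pi(v) \in \alpha$ and $\sigma(v) := 1$ otherwise (forcing $\pi(v) \in \beta \setminus \alpha$); some $C_i$ is monochromatic under $\sigma$, and then the $\pi$-image of its three variables lies entirely in $\alpha$ or entirely in $\beta$, so $\tau_m$ is satisfied and $\Phi$ holds. Hence $\mathrm{Inv}(\mathbb{A}) \models \Phi$ iff $\psi$ is a yes-instance of co-3NAESAT, yielding the claimed co-NP-hardness.

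The real obstacle is actually the preceding appeal to Zhuk's gap theorem: without the structural conclusion that every polymorphism is $\alpha\beta$-projective one would have no way to guarantee that the disjunctive relations $\tau_k$ survive as invariants, and hence no way to encode disjunction inside $\mathrm{Inv}(\mathbb{A})$. The rest is a polynomial-time many-one reduction whose correctness is a routine $\{0,1\}$-versus-$(\alpha,\beta)$ translation. Should one wish to avoid using $\tau_m$ directly (for instance under a more restrictive encoding), one can replace it by the conjunction of $3^m$ copies of $\sigma_m$ supplied by Lemma~\ref{lem:new-revision}; that conjunction is of course not of polynomial size, which is precisely why the simple-logic encoding of $\tau_m$ (or equivalently its direct availability in $\mathrm{Inv}(\mathbb{A})$) is what we use.
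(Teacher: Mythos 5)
Your proof is correct and follows essentially the same route as the paper: the same reduction from the complement of monotone 3NAESAT via a single universally quantified $\tau_m$-atom, with the same translation between $\{0,1\}$ and witnesses in $\alpha\setminus\beta$, $\beta\setminus\alpha$ in both directions of the correctness argument. The only divergence is that you establish $\tau_k\in\mathrm{Inv}(\mathbb{A})$ directly from the $\alpha\beta$-projectivity of all operations, whereas the paper derives $\sigma_k\in\mathrm{Inv}(\mathbb{A})$ from Zhuk's Lemma~11 and then pp-defines $\tau_k$ from $\sigma_k$ via Lemma~\ref{lem:new-revision} --- and the paper explicitly remarks, immediately after the theorem, that your variant is an equivalent alternative.
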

\begin{proof}
We know from Lemma 11 in \cite{ZhukGap2015} that there exist $\alpha,\beta$ strict subsets of $A$ so that $\alpha \cup \beta = A$ and the relation $\sigma_k$ is in $\mathrm{Inv}(\mathbb{A})$, for each $k \in \mathbb{N}$. From Lemma~\ref{lem:new-revision}, we know also that $\tau_k$  is in $\mathrm{Inv}(\mathbb{A})$, for each $k \in \mathbb{N}$.

We will next argue that $\tau_k$ enjoys a relatively small specification in DNF (at least, polynomial in $k$). We first give such a specification for $\rho'(x,y,z)$.
\[ \rho'(x,y,z):= \bigvee_{a,a',a'' \in \alpha} x=a \wedge y=a' \wedge z=a'' \vee \bigvee_{b,b',b'' \in \beta} x=b \wedge y=b' \wedge z=b''\]
which is constant in size when $A$ is fixed. Now it is clear from the definition that the size of $\tau_n$ is polynomial in $n$.

We will now give a very simple reduction from the complement of 3NAESAT to QCSP$(\mathrm{Inv}(\mathbb{A}))$. 3NAESAT is well-known to be NP-complete \cite{Papa} and our result will follow.

Take an instance $\phi$ of 3NAESAT which is the existential quantification of a conjunction of $k$ atoms $\mathrm{NAE}(x,y,z)$. Thus $\neg \phi$ is the universal quantification of a disjunction of $k$ atoms $x=y=z$. We build our instance $\psi$ of QCSP$(\mathrm{Inv}(\mathbb{A}))$ from $\neg \phi$ by transforming the quantifier-free part $x_1=y_1=z_1 \vee \ldots \vee x_k=y_k=z_k$ to $\tau_k=\rho'(x_1,y_1,z_1) \vee \ldots \vee \rho'(x_k,y_k,z_k)$.

($\neg \phi \in \mathrm{co\mbox{-}3NAESAT}$ implies $\psi \in \mathrm{QCSP}(\mathrm{Inv}(\mathbb{A}))$.) From an assignment to the universal variables $v_1,\ldots,v_m$ of $\psi$ to elements $x_1,\ldots,x_m$ of $A$, consider elements $x'_1,\ldots,x'_m \in \{0,1\}$ according to 
\begin{itemize}
\item $x_i \in \alpha \setminus \beta$ implies $x'_i=0$, 
\item $x_i \in \beta \setminus \alpha$ implies $x'_i=1$, and
\item $x_i \in \alpha \cap \beta$ implies we don't care, so w.l.o.g. say $x'_i=0$.
\end{itemize}
The disjunct that is satisfied in the quantifier-free part of $\neg \phi$ now gives the corresponding disjunct that will be satisfied in $\tau_k$.

($\psi \in \mathrm{QCSP}(\mathrm{Inv}(\mathbb{A}))$ implies $\neg \phi \in \mathrm{co\mbox{-}3NAESAT}$.) From an assignment to the universal variables $v_1,\ldots,v_m$ of $\neg \phi$ to elements $x_1,\ldots,x_m$ of $\{0,1\}$, consider elements $x'_1,\ldots,x'_m \in A$ according to 
\begin{itemize}
\item $x_i=0$ implies $x'_i$ is some arbitrarily chosen element in $\alpha \setminus \beta$, and
\item $x_i=1$ implies $x'_i$ is some arbitrarily chosen element in $\beta \setminus \alpha$.
\end{itemize}
The disjunct that is satisfied in $\tau_k$ now gives the corresponding disjunct that will be satisfied in the quantifier-free part of $\neg \phi$.
\end{proof}
\noindent The demonstration of co-NP-hardness in the previous theorem was inspired by a similar proof in \cite{BodirskyChenSICOMP}. Note that an alternative proof that $\tau_k$ is in $\mathrm{Inv}(\mathbb{A})$ is furnished by the observation that it is preserved by all $\alpha\beta$-projections (see \cite{ZhukGap2015}). We note surprisingly that co-NP-hardness in Theorem~\ref{thm:hard} is optimal, in the sense that some (but not all!) of the cases just proced co-NP-hard are also in co-NP.
\begin{proposition}
Let $\alpha,\beta$ strict subsets of $A:=\{a_1,\ldots,a_n\}$ so that $\alpha \cup \beta = A$ and $\alpha \cap \beta \neq \emptyset$. Then QCSP$(A;\{\tau_k:k \in \mathbb{N}\},a_1,\ldots,a_n)$ is in co-NP.
\label{prop:coNP1}
\end{proposition}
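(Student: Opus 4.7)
The plan is to establish a strong $\Pi_1$-collapse: any pH-sentence over $(A;\{\tau_k:k\in\mathbb{N}\},a_1,\ldots,a_n)$ is equivalent to the universal sentence obtained by substituting a single wildcard constant $c\in\alpha\cap\beta$ for every existential variable. Since universal sentences over such relations are trivially in co-NP, this suffices. The role of $c$ is that of a canonical ``safe'' existential answer: because $c$ lies simultaneously in $\alpha$ and in $\beta$, placing $c$ at a position inside a $\rho'$-atom never obstructs membership in either $\alpha^3$ or $\beta^3$.

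The key lemma is a ``$c$-collapse'' for the CSP fragment: a positive existential conjunction $\Phi(\bar{x})=\bigwedge_i \tau_{k_i}(\bar{y}_i)$ is satisfiable iff the particular assignment $\bar{x}\mapsto c$ satisfies it. The $(\Leftarrow)$ direction is immediate; for $(\Rightarrow)$, suppose $\Phi(c,\ldots,c)$ falsifies some $\tau_{k_i}$. Then each of its $k_i$ disjuncts $\rho'(u_1,u_2,u_3)$ fails to lie in $\alpha^3\cup\beta^3$ under the substitution. Because $c\in\alpha\cap\beta$, any failure of $\alpha^3$- (resp.\ $\beta^3$-) membership must be witnessed by a \emph{constant} position lying in $A\setminus\alpha=\beta\setminus\alpha$ (resp.\ $A\setminus\beta=\alpha\setminus\beta$), using $A=\alpha\cup\beta$. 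Hence each disjunct already contains a fixed constant from $\alpha\setminus\beta$ together with a fixed constant from $\beta\setminus\alpha$, and these pinned values alone prevent monochromaticity regardless of any assignment to the existential positions. The disjunct is therefore inherently unsatisfiable, so the whole $\tau_{k_i}$-atom, and thus $\Phi$ itself, is unsatisfiable, contradicting the hypothesis.

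To lift this to arbitrary quantifier alternation, consider a pH-sentence $\phi=Q_1 x_1\cdots Q_n x_n\,\psi$ with universal variables $\bar{u}$, and let $\phi^{*}=\forall\bar{u}\,\psi^{*}(\bar{u})$ with $\psi^{*}$ obtained by substituting $c$ for each existential variable. If $\phi^{*}$ is true, then the order-independent strategy ``answer $c$ for every existential'' realises $\phi$ against any interleaving of play. Conversely, if $\phi$ is true then for every assignment $\sigma$ of $\bar{u}$ the existential residual is a satisfiable CSP instance; by the lemma this instance is already satisfied by $\bar{x}\mapsto c$, i.e.\ $\psi^{*}(\sigma)$ holds, whence $\phi^{*}$ is true. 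Thus $\phi\equiv\phi^{*}$. Falsity of $\phi^{*}$ is then witnessed in polynomial time by a single assignment to $\bar{u}$ together with a single $\tau_{k_i}$-atom violated on it, so $\phi^{*}$ lies in co-NP and so does $\mathrm{QCSP}(A;\{\tau_k:k\in\mathbb{N}\},a_1,\ldots,a_n)$. The main technical nuisance I anticipate is the treatment of equality atoms linking existential and universal variables, which must be discharged by standard substitutional elimination before the substitution step; this introduces no conceptual difficulty.
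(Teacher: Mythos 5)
Your proposal is correct and takes essentially the same route as the paper: after discharging equality atoms by substitution, instantiate every existential variable with a fixed element $c\in\alpha\cap\beta$ and observe that the resulting purely universal sentence is equivalent to the original and decidable in co-NP. Your ``$c$-collapse'' lemma simply spells out in more detail the equivalence $\phi\equiv\phi^{*}$ that the paper's proof asserts tersely.
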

\begin{proof}
Assume $|A|>1$, \mbox{i.e.} $n>1$ (note that the proof is trivial otherwise).
Let $\phi$ be an input to QCSP$(A;\{\tau_k:k \in \mathbb{N}\},a_1,\ldots,a_n)$. We will now seek to eliminate atoms $v=a$ ($a \in \{a_1,\ldots,a_n\}$) from $\phi$. Suppose $\phi$ has an atom $v=a$. If $v$ is universally quantified, then $\phi$ is false (since $|A|>1$). Otherwise, either the atom $v=a$ may be eliminated with the variable $v$ since $v$ does not appear in a non-equality relation; or $\phi$ is false because there is another atom $v=a'$ for $a\neq a'$; or $v=a$ may be removed by substitution of $a$ into all non-equality instances of relations involving $v$. This preprocessing procedure is polynomial and we will assume \mbox{w.l.o.g.} that $\phi$ contains no atoms $v=a$. We now argue that $\phi$ is a yes-instance iff $\phi'$ is a yes-instance, where $\phi'$ is built from $\phi$ by instantiating all existentially quantified variables as any $a \in \alpha \cap \beta$. The universal $\phi'$ can be evaluated in co-NP (one may prefer to imagine the complement as an existential $\neg \phi' $ to be evaluated in NP) and the result follows.
\end{proof}
In fact, this being an algebraic paper, we can even do better. Let $\mathcal{B}$ signify a set of relations on a finite domain but not necessarily itself finite. For convenience, we will assume the set of relations of $\mathcal{B}$ is closed under all co-ordinate projections and instantiations of constants. 
Call $\mathcal{B}$ \emph{existentially trivial} if there exists an element $c \in B$ (which we call a \emph{canon}) such that for each $k$-ary relation $R$ of $\mathcal{B}$ and each $i \in [k]$, and for every $x_1,\ldots,x_k \in B$, whenever $(x_1,\ldots,x_{i-1},x_i,x_{i+1},\ldots,x_k) \in R^{\mathcal{B}}$ then also $(x_1,\ldots,x_{i-1},c,x_{i+1},\ldots,x_k) \in R^{\mathcal{B}}$. We want to expand this class to \emph{almost existentially trivial} by permitting conjunctions of the form $v=a_i$ or $v=v'$ with relations that are existentially trivial. 
\begin{lemma}
Let $\alpha,\beta$ be strict subsets of $A:=\{a_1,\ldots,a_n\}$ so that $\alpha \cup \beta = A$ and $\alpha \cap \beta \neq \emptyset$. The set of relations pp-definable in $(A;\{\tau_k:k \in \mathbb{N}\},a_1,\ldots,a_n)$ is almost existentially trivial.
\end{lemma}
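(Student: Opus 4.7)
The plan is to single out a canon $c \in \alpha \cap \beta$ (available by hypothesis) and to show that every pp-definable relation over our structure decomposes as a conjunction of atoms of the form $v=a_i$ or $v=v'$ together with a relation that is existentially trivial with canon $c$. The driving observation is that each $\tau_k$ is itself existentially trivial with canon $c$: any satisfying assignment must use some disjunct $\rho'(x_j,y_j,z_j)$, which places $(x_j,y_j,z_j) \in \alpha^3 \cup \beta^3$; since $c$ lies in both $\alpha$ and $\beta$, replacing any coordinate by $c$ keeps that same disjunct (and hence $\tau_k$) satisfied.

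The first technical step is to prove that, for the fixed canon $c$, existential triviality is preserved by the ``$\tau_k$-only'' pp-operations: conjunction (after padding variable lists to a common superset), coordinate repetition, existential projection, and substitution of an arbitrary constant $a \in A$ into a coordinate. Conjunction, padding and projection are direct. The delicate case is plugging a constant $a$ into a coordinate of a $\tau_k$-atom: if $a \in \alpha \setminus \beta$ then the $\beta^3$-option for any disjunct touching that coordinate is killed, but the surviving $\alpha^3$-option is still closed under substitution of $c$; the case $a \in \beta \setminus \alpha$ is symmetric, and $a \in \alpha \cap \beta$ is immediate.

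The second technical step is to normalise a general pp-formula $\exists \bar y\, \phi(\bar x, \bar y)$ defining a relation $R$. I would exhaustively eliminate equality and constant atoms: every existential variable $y_i$ forced equal to another variable or to a constant is substituted into the $\tau_k$-atoms, while equalities surviving among the free variables $\bar x$ are collected into a prefix $\sigma(\bar x)$. The result is $\sigma(\bar x) \wedge \exists \bar y'\, \psi(\bar x, \bar y')$ where $\psi$ is a conjunction of $\tau_k$-atoms with various coordinates instantiated to constants; by the first step, the existential part is existentially trivial with canon $c$. This exhibits $R$ in the almost existentially trivial format.

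The main obstacle is the constant-substitution case in the first step: a priori, plugging constants into $\tau_k$-atoms could yield a relation that is not existentially trivial. The argument hinges on the symmetry of the hypothesis $\alpha \cup \beta = A$ together with $c \in \alpha \cap \beta$, which ensures that whatever disjuncts survive a sequence of substitutions still reduce to ``all designated coordinates in $\alpha$'' or ``all designated coordinates in $\beta$'' conditions, both of which contain $c$. The remaining work is routine bookkeeping -- padding variable lists, tracking variable identifications, and chaining substitutions.
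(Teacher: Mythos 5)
Your proof is correct and follows essentially the same route as the paper's: fix a canon $c\in\alpha\cap\beta$, observe that the $\tau_k$ (and their instantiations by constants) are existentially trivial with canon $c$, eliminate equalities involving bound variables by substitution, and absorb the existential quantifiers by setting witnesses to $c$, leaving a conjunction of free-variable equalities with an existentially trivial relation. You spell out the closure properties (conjunction, identification, projection, constant instantiation) more explicitly than the paper does, but the mathematical content is the same.
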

\begin{proof}
Consider a formula with a pp-definition in $(A;\{\tau_k:k \in \mathbb{N}\},a_1,\ldots,a_n)$. We assume that only free variables appear in equalities since otherwise we can remove these equalities by substitution. Now existential quantifiers can be removed and their variables instantiated as the canon $c$. Indeed, their atoms $\tau_n$ may now be removed since they will always be satisfied. Thus we are left with a conjunction of equalities and atoms $\tau_n$, and the result follows.
\end{proof}
\begin{proposition}
If $\mathcal{B}$ is comprised exclusively of relations that are almost existentially trivial, then QCSP$(\mathcal{B})$ is in co-NP under the \textbf{DNF encoding}.
\label{prop:coNP2}
\end{proposition}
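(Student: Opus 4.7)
The plan is to generalize the three-step approach of Proposition~\ref{prop:coNP1}: preprocess equalities, instantiate every remaining existential variable to the canon $c$, and evaluate the resulting universal sentence in co-NP using the DNF encoding.

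Given the input pH-sentence $\phi$, the first move is to expose the equality structure hidden inside each atom. By hypothesis every relation $R_i$ appearing in $\phi$ is almost existentially trivial, hence logically equivalent to a conjunction of equality atoms (each of the form $v=v'$ or $v=a$) together with an existentially trivial relation on a sub-tuple of its arguments. From the DNF encoding of $R_i$ one can compute in polynomial time (for the fixed finite domain $A$) every entailed equality: $R_i\models v=v'$ iff no disjunct of the DNF is simultaneously consistent with the prescriptions $v=a$ and $v'=a'$ for any pair $a\ne a'$, and consistency of a single conjunctive clause of equality/inequality literals against such prescribed values can be tested by union--find with conflict checking. Equalities of the form $v=a$ are detected analogously.

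Next, I would process the extracted equalities exactly as in Proposition~\ref{prop:coNP1}: equalities between two distinct universals, between a universal and a constant, or forcing an earlier-bound existential to equal a later-bound universal all render $\phi$ false (for $|A|>1$); the surviving equalities allow us to substitute existentials by the variable or constant to which they are forced equal. This preprocessing is polynomial and leaves a pH-sentence whose matrix is a conjunction of residual atoms $R'_i$ that entail no further equalities among their arguments. Each such $R'_i$ remains existentially trivial with canon $c$, since existential triviality passes to coordinate identifications (apply the canon-replacement property at each identified position in turn) and to substitution of constants.

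The central claim is that $\phi$ is equivalent to the sentence $\phi^{c}$ obtained by instantiating every remaining existential variable by $c$. One direction is immediate via the constant Skolem strategy $c$. For the converse, if a winning strategy exists, then for each assignment to the universals a satisfying assignment to the existentials exists; iterating the defining property of existential triviality one coordinate at a time (and, if a variable occurs in several positions of the same atom, one occurrence at a time) transforms this into the constant-$c$ assignment while preserving every atom. The reduced sentence $\phi^{c}=\forall\vec u\,\bigwedge_i R'_i(\vec w_i)$ has only universally quantified free variables, and the DNF encoding of each $R'_i$ allows polynomial evaluation on any concrete $\vec u$. Guessing a falsifying $\vec u$ and verifying that some atom evaluates to false witnesses the complement in NP, placing QCSP$(\mathcal{B})$ in co-NP.

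The main obstacle is the clean extraction of entailed equalities from the DNF representation; the DNF encoding is precisely what enables polynomial-time evaluation and equality inference, whereas under a CNF or general ``simple'' encoding even evaluating a single atom could be NP-hard, consistent with the Nota Bene restricting this proposition to DNF. A minor secondary point is to confirm that the residual relations are still existentially trivial after the equality-driven quotients and constant substitutions; this follows directly from the defining canon-replacement property.
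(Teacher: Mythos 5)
Your proposal follows essentially the same route as the paper's proof: use the DNF encoding to extract, in polynomial time, the equalities and constants forced by each almost existentially trivial atom, propagate these as in Proposition~\ref{prop:coNP1}, instantiate the remaining existential variables to the canon $c$, and decide the resulting purely universal sentence in co-NP. Your write-up is in fact more explicit than the paper's terse argument about why the canon-instantiation preserves equivalence and why the residual relations stay existentially trivial; the only slight imprecision is the side remark that evaluating a single atom under a CNF encoding could be NP-hard --- evaluation on a concrete tuple is easy in any propositional encoding, and it is the non-emptiness/projection tests needed for equality extraction that genuinely require DNF.
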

\begin{proof}
The argument here is quite similar to that of Proposition~\ref{prop:coNP1} except that there is some additional preprocessing to find out variables that are forced in some relation to being a single constant or pairs of variables within a relation that are forced to be equal. In the first instance that some variable is forced to be constant in a $k$-ary relation, we should replace with the $(k-1)$-ary relation with the requisite forcing. In the second instance that a pair of variables are forced equal then we replace again the $k$-ary relation with a $(k-1)$-ary relation as well as an equality. Note that projecting a relation to a single or two co-ordinates can be done in polynomial time because the relations are encoded in DNF. After following these rules to their conclusion one obtains a conjunction of equalities together with relations that are existentially trivial. Now is the time to propagate variables to remove equalities (or find that there is no solution). Finally, when only existentially trivial relations are left, all remaining existential variables may be evaluated to the canon $c$.
\end{proof}
\begin{corollary}
Let $\alpha,\beta$ be strict subsets of $A:=\{a_1,\ldots,a_n\}$ so that $\alpha \cup \beta = A$ and $\alpha \cap \beta \neq \emptyset$. Then QCSP$(\mathrm{Inv}(\mathrm{Pol}(A;\{\tau_k:k \in \mathbb{N}\},a_,\ldots,a_n)))$ is in co-NP under the \textbf{DNF encoding}.
\label{cor:coNP2}
\end{corollary}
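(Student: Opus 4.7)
The plan is to reduce the corollary to Proposition~\ref{prop:coNP2} by showing that every relation of $\mathrm{Inv}(\mathrm{Pol}(\Gamma))$ is almost existentially trivial, where $\Gamma := (A;\{\tau_k : k \in \mathbb{N}\}, a_1,\ldots,a_n)$. If this is established, then setting $\mathcal{B} := \mathrm{Inv}(\mathrm{Pol}(\Gamma))$ in Proposition~\ref{prop:coNP2} immediately yields the co-NP upper bound under the DNF encoding.

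First, I would appeal to the standard Pol--Inv Galois correspondence over a finite domain (Bodnarchuk--Kaluzhnin--Kotov--Romov / Geiger): for any (possibly infinite) relational structure $\Gamma$ on a finite set $A$, a relation on $A$ is preserved by $\mathrm{Pol}(\Gamma)$ if and only if it is primitive-positive definable from the relations of $\Gamma$. Applied to our $\Gamma$, every relation of $\mathrm{Inv}(\mathrm{Pol}(\Gamma))$ is pp-definable in $\Gamma$. The infinite signature causes no trouble, because Geiger's theorem operates relation-by-relation: each individual $R \in \mathrm{Inv}(\mathrm{Pol}(\Gamma))$ is pp-definable from only finitely many $\tau_k$'s together with the constants $a_1,\ldots,a_n$. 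Combining this with the Lemma immediately preceding Proposition~\ref{prop:coNP2}, which states that every relation pp-definable in $\Gamma$ is almost existentially trivial (with canon any fixed $c \in \alpha \cap \beta$, non-empty by hypothesis), we obtain that every relation of $\mathrm{Inv}(\mathrm{Pol}(\Gamma))$ is almost existentially trivial. Applying Proposition~\ref{prop:coNP2} then finishes the proof.

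The one subtlety worth flagging, and the only place where an obstacle could a priori lurk, is the interaction between the DNF encoding and the pp-definability argument. The algorithm of Proposition~\ref{prop:coNP2} runs directly on the input DNF encoding of each relation, and does not demand that we rewrite the input as a pp-formula in terms of the generators $\tau_k, a_1,\ldots,a_n$. The almost-existentially-trivial property is a semantic property of the relations, used solely to license the polynomial-time preprocessing of Proposition~\ref{prop:coNP2} (extracting forced constants and forced equalities, which can be read off a DNF in polynomial time) and then to license evaluating every surviving existential variable at the canon $c$. Consequently no conversion between relational encodings is needed, no size blow-up is incurred, and the co-NP upper bound of Proposition~\ref{prop:coNP2} transfers verbatim to $\mathrm{Inv}(\mathrm{Pol}(\Gamma))$.
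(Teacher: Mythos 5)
Your proposal is correct and is essentially the paper's intended argument: the paper leaves the corollary without an explicit proof precisely because it follows by combining the Galois correspondence (every relation of $\mathrm{Inv}(\mathrm{Pol}(A;\{\tau_k:k\in\mathbb{N}\},a_1,\ldots,a_n))$ is pp-definable in that structure, using only finitely many generators per relation) with the preceding lemma on almost existential triviality and then Proposition~\ref{prop:coNP2}. Your remark that the algorithm of Proposition~\ref{prop:coNP2} operates directly on the DNF input, so no re-encoding or blow-up is involved, correctly identifies why the transfer is immediate.
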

This last result, together with its supporting proposition, is the only time we seem to require the ``nice, simple'' DNF encoding, rather than arbitrary propositional logic. We do not require DNF for Proposition~\ref{prop:coNP1} as we have just a single relation in the signature for each arity and this is easy to keep track of. We note that the set of relations $\{\tau_k:k \in \mathbb{N}\}$ is not maximal with the property that with the constants it forms a co-clone of existentially trivial relations. One may add, for example, $\alpha \times \beta \cup \beta \times \alpha$.

The following, together with our previous results, gives the refutation of the Alternative Chen Conjecture.
\begin{proposition}
Let $\alpha,\beta$ strict subsets of $A:=\{a_1,\ldots,a_n\}$ so that $\alpha \cup \beta = A$ and $\alpha \cap \beta \neq \emptyset$. Then, for each finite signature reduct $\mathcal{B}$ of $(A;\{\tau_k:k \in \mathbb{N}\},a_1,\ldots,a_n)$, QCSP$(\mathcal{B})$ is in NL.
\label{prop:finite-NL}
\end{proposition}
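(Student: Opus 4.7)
The plan is to give a logspace decision procedure for QCSP$(\mathcal{B})$; since $\mathrm{L}\subseteq\mathrm{NL}$, this establishes the claim. Two features of $\mathcal{B}$ are central. First, because the signature is finite, there is a constant $K$ bounding the $k$ for which $\tau_k$ appears in $\mathcal{B}$, so every relation has arity at most $3K$. Second, any fixed $c\in\alpha\cap\beta$ is a \emph{canon} for each $\tau_k$: replacing any single entry of a tuple in $\tau_k$ by $c$ yields another tuple in $\tau_k$. This follows directly, since the disjunct $\rho'(x_i,y_i,z_i)=\alpha^3\cup\beta^3$ that witnesses membership remains satisfied after the replacement (because $c\in\alpha$ and $c\in\beta$). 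This is the existential triviality already noted in the lemma preceding Proposition~\ref{prop:coNP2}.

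Given an input pH-sentence $\phi=Q_1 v_1\cdots Q_m v_m\bigwedge_j A_j$, I would first resolve the equality atoms in the matrix. Compute the equivalence classes of the variables induced by atoms of the form $v=v'$ (an undirected reachability problem, in $\mathrm{L}$ by Reingold, hence in $\mathrm{NL}$), and for each class record any specific constant forced by an atom $v=a$. Reject $\phi$ as \textsf{false} if (i) some class is forced to two distinct constants, or (ii) some class contains a universal variable together with either another distinct universal, a specific constant, or an existential quantified after it in the prefix — these cases all falsify $\phi$ when $|A|>1$. Otherwise each existential can be consistently assigned its forced value (an earlier-bound variable or a constant), and every still-unconstrained existential is assigned the canon $c$. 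Soundness of this canonical substitution is where the canon property is used: given a witness function for the original existentials, one may replace each existential coordinate of every $\tau_{k_j}$-atom by $c$ one at a time, and by the canon property the atom remains satisfied, so the substituted sentence holds under exactly the same universal assignments.

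After preprocessing, $\phi$ is equivalent to a $\Pi_1$-sentence
\[\forall u_1\cdots u_s\;\bigwedge_j \tau_{k_j}(\vec{w}_j),\]
in which every entry of each $\vec{w}_j$ is either a constant from $A$ (possibly $c$) or a universal variable. Since universal quantification commutes with conjunction, this is true iff for each $j$, $\forall \vec{u}\;\tau_{k_j}(\vec{w}_j)$ is true, where $\vec{u}$ ranges over the universal variables actually occurring in $\vec{w}_j$. Each such atom has arity at most $3K$, so a single atom is verified by exhausting over at most $|A|^{3K}=O(1)$ assignments. Iterating over the atoms in turn yields a logspace procedure, placing QCSP$(\mathcal{B})$ in $\mathrm{NL}$.

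The main obstacle is the equality-preprocessing step and its correct interaction with the quantifier prefix (distinguishing which equalities can legitimately be resolved by substitution and which force a rejection); once this is executed in logspace, the remaining verification is a constant-time local check per conjunct.
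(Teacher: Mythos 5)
Your proof is correct in its essentials but takes a genuinely different route from the paper's. The paper proves that the finite reduct $\mathcal{B}$ admits a $(3m+1)$-ary near-unanimity polymorphism (where $m$ bounds the indices of the $\tau_i$ occurring in $\mathcal{B}$), then invokes Chen's Collapsibility machinery from \cite{hubie-sicomp} to reduce QCSP$(\mathcal{B})$ to a polynomially bounded ensemble of CSP$(\mathcal{B})$ instances, with the NL bound coming from the near-unanimity operation. You instead give a direct decision procedure built on the canon property of $c \in \alpha \cap \beta$ (the existential triviality used in Propositions~\ref{prop:coNP1} and~\ref{prop:coNP2}): eliminate equalities, set all free existentials to the canon, and then exploit the fact that universal quantification distributes over conjunction so that each remaining atom --- of constant arity $3K$ because the signature is finite --- can be checked exhaustively in constant time. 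The paper itself anticipates this route in the remark following Proposition~\ref{prop:Chen-fails}, but there claims only membership in P for it; your observation that the finite signature caps the arity, making the per-atom exhaustive check constant-size and the whole procedure logspace, is the extra step that recovers the NL (indeed L) bound by this more elementary method. What the paper's approach buys in exchange is algebraic information (a near-unanimity polymorphism of $\mathrm{Pol}(\mathcal{B})$, hence Collapsibility) that is reused elsewhere; your approach buys a self-contained algorithm with no appeal to \cite{hubie-sicomp}.

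One correction is needed in your preprocessing. In rejection condition (ii) you falsify $\phi$ when an equality class contains a universal variable $u$ together with an existential \emph{quantified after it}; this is backwards. The sentence $\forall u\, \exists v\, (u=v)$ is true --- and indeed your own ``otherwise'' branch assigns such a $v$ its forced value, namely the earlier-bound $u$. The falsifying configuration is an existential quantified \emph{before} the universal in its class (as in $\exists v\, \forall u\, (v=u)$, false when $|A|>1$), which condition (ii) as written fails to catch. The fix is immediate and does not affect the rest of the argument.
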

\begin{proof}
We will assume $\mathcal{B}$ contains all constants (since we prove this case gives a QCSP in NL, it naturally follows that the same holds without constants). Take $m$ so that, for each $\tau_i \in \mathcal{B}$, $i\leq m$. Recall from Lemma~\ref{lem:new-revision} that $\tau_i$ is pp-definable in $\sigma_i$. We will prove that the structure $\mathcal{B}'$ given by $(A;\{\sigma_k:k \leq m \},a_1,\ldots,a_n)$ admits a $(3m+1)$-ary near-unanimity operation $f$ as a polymorphism, whereupon it follows that $\mathcal{B}$ admits the same near-unanimity polymorphism. We choose $f$ so that all tuples whose map is not automatically defined by the near-unanimity criterion map to some arbitrary $a \in \alpha \cap \beta$. To see this, imagine that this $f$ were not a polymorphism. Then some $(3m+1)$ tuples in $\sigma_i$ would be mapped to some tuple not in $\sigma_i$ which must be a tuple $\overline{t}$ of elements from $\alpha \setminus \beta \cup \beta \setminus \alpha$. Note that column-wise this map may only come from $(3m+1)$-tuples that have $3m$ instances of the same element. By the pigeonhole principle, the tuple $\overline{t}$ must appear as one of the $(3m+1)$ tuples in $\sigma_i$ and this is clearly a contradiction.

It follows from \cite{hubie-sicomp} that QCSP$(\mathcal{B})$ reduces to a polynomially bounded ensemble of ${n \choose 3m} \cdot n \cdot n^{3m}$ instances CSP$(\mathcal{B})$, and the result follows.
\end{proof}

\subsection{The question of the tuple encoding}

\begin{proposition}
Let $\alpha:=\{0,1\}$ and $\beta:=\{0,2\}$. Then, QCSP$(\{0,1,2\};\{\tau_k:k \in \mathbb{N}\},0,1,2)$ is in P under the \textbf{tuple encoding}.
\label{prop:Chen-fails}
\end{proposition}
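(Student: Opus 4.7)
The plan rests on two observations. First, the element $0\in\alpha\cap\beta$ is a \emph{canon} for every $\tau_k$: if $(x_1,y_1,z_1,\ldots,x_k,y_k,z_k)\in\tau_k$, then overwriting any single coordinate by $0$ yields another tuple in $\tau_k$. Indeed, some block $(x_j,y_j,z_j)$ lies in $\alpha^3$ or in $\beta^3$, and since $0\in\alpha\cap\beta$, substituting $0$ into a coordinate of that block preserves the inclusion, while substituting it elsewhere leaves the witnessing block untouched. Second, under the tuple encoding, $|\tau_k|=27^k-12^k\geq 15\cdot 27^{k-1}$, so the arity $3k$ of each input atom is $O(\log n)$ in the input size $n$, leaving room for a brute enumeration later.

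Given an input $\phi$, I would reduce it to a purely universal sentence $\phi'$ by instantiating every existential variable by $0$. A polynomial-time preprocessing first disposes of equality atoms and atoms built from the unary constant symbols: pairs of variables equated via $=$ can be unified or substituted (using that existentials may absorb universals but not conversely); an existential pinned to a constant is substituted; a universal pinned to a constant or equated to another universal immediately falsifies $\phi$. The surviving sentence has the form $Q_1v_1\ldots Q_mv_m\bigwedge_i\tau_{k_i}(\vec{w}^{\,i})$ with each slot of $\vec{w}^{\,i}$ a variable or a constant. The key claim is that $\phi$ is equivalent to the $\Pi_1$-sentence $\phi'$ obtained by dropping every $\exists v$ quantifier and instantiating $v$ by $0$ throughout. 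One direction is trivial; for the converse, starting from any winning Skolem strategy for $\phi$, I would rewrite it one existential variable at a time into the constant-$0$ output. Each rewrite overwrites some coordinates of each atom by $0$, and iterated application of the canon property keeps every atom satisfied. The resulting Skolem functions are constants, so they trivially respect the quantifier order.

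Finally I would decide $\phi'$ via the commutation $\forall\vec{v}\bigwedge_iC_i\equiv\bigwedge_i\forall\vec{v}\,C_i$, checking each conjunct $\forall\vec{v}\,\tau_{k_i}(\vec{w}^{\,i})$ in isolation. Such a conjunct involves at most $3k_i$ distinct variables, so at most $3^{3k_i}=27^{k_i}$ assignments need be enumerated; since $|\tau_{k_i}|\geq 15\cdot 27^{k_i-1}$, this quantity is linear in the input's explicit tuple listing of $\tau_{k_i}$, and each assignment is verified by a single membership lookup. The main technical point is the Skolem-rewriting step, which iterates the canon property across every atom and every existential variable to collapse an arbitrary winning strategy to the constant-$0$ strategy. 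The size-versus-arity balance $3k_i=O(\log n)$ peculiar to the tuple encoding is exactly what makes the final brute enumeration polynomial; under the ``simple logic'' encoding this balance fails, and only co-NP containment (Proposition~\ref{prop:coNP1}) is available.
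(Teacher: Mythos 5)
Your proposal is correct and follows essentially the same route as the paper's proof: instantiate every existential variable at the canon $0\in\alpha\cap\beta$ (the paper delegates this step to Proposition~\ref{prop:coNP1} and the surrounding existential-triviality discussion), commute the universal quantifiers over the conjunction, and exhaustively check each atom using the fact that the tuple encoding forces the arity $3k$ to be $O(\log n)$ in the input size. Your count $|\tau_k|=27^k-12^k\geq 15\cdot 27^{k-1}$ is sharper than the paper's bound of $>3^m$ tuples, but it is used for exactly the same purpose.
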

\begin{proof}
Consider an instance $\phi$ of this QCSP of size $n$ involving relation $\tau_m$ but no relation $\tau_k$ for $k>m$. The number of tuples in $\tau_m$ is $>3^m$. 
Following Proposition~\ref{prop:coNP1} together with its proof, we may assume that the instance is strictly universally quantified over a conjunction of atoms (involving also constants). Now, a universally quantified conjunction is true iff the conjunction of its universally quantified atoms is true. We can further say that there are at most $n$ atoms each of which involves at most $3m$ variables. Therefore there is an exhaustive algorithm that takes at most $O(n\cdot 3^{3m})$ steps with is $O(n^4)$.
\end{proof}
The proof of Proposition~\ref{prop:Chen-fails} suggests an alternative proof of Proposition~\ref{prop:finite-NL}, but placing the corresponding QCSP in P instead of NL.
Proposition~\ref{prop:Chen-fails} shows that Chen's Conjecture fails for the tuple encoding in the sense that it provides a language $\mathcal{B}$, expanded with constants, so that Pol$(\mathcal{B})$ has EGP, yet QCSP$(\mathcal{B})$ is in P under the tuple encoding. However, it does not imply that the algebraic approach to QCSP violates Chen's Conjecture under the tuple encoding. This is because $(\{0,1,2\};\{\tau_k:k \in \mathbb{N}\},0,1,2)$ is not of the form Inv$(\mathbb{A})$ for some idempotent algebra $\mathbb{A}$. For this stronger result, we would need to prove QCSP$(\mathrm{Inv}(\mathrm{Pol}(\{0,1,2\};\{\tau_k:k \in \mathbb{N}\},0,1,2)))$ is in P under the tuple encoding.

\section{Switchability, Collapsability and the three-element case} 

An algebra $\mathbb{A}$ is a \emph{G-set} if its domain is not one-element and every of its operation $f$ is of the form $f(x_1, \ldots , x_k) = \pi(x_i)$ where $i \in [k]$ and $\pi$ is a permutation on A. An algebra $\mathbb{A}$ contains a G-set as a \emph{factor} if some homomorphic image of a subalgebra of $\mathbb{A}$ is a G-set. A \emph{Gap Algebra} \cite{hubie-sicomp} is a three-element idempotent algebra that omits a G-set as a factor and is not Collapsible.

Our first task is the deduction of the following theorem, whose lengthy proof appears in Appendix A. For each of the following two theorems, $\alpha$ and $\beta$ are chosen such that $\alpha, \beta$ are strict subsets of $\{0,1,2\}$, $\alpha \cup \beta = \{0,1,2\}$ and $\alpha \cap \beta \neq \emptyset$.

\begin{theorem}
\label{cor:Dmitriy-long}
Suppose $\mathbb{A}$ is a Gap Algebra that is not $\alpha\beta$-projective. Then, for every finite subset of $\Delta$ of Inv$(\mathbb{A})$, Pol$(\Delta)$ is Collapsible.
\end{theorem}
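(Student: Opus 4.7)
The plan is to reduce to a finite problem about each $\Delta$ and then to exploit the fact that Pol$(\Delta)$ is typically strictly larger than the clone of $\mathbb{A}$. Fix a finite $\Delta \subseteq \mathrm{Inv}(\mathbb{A})$ with all relations of arity at most $r$. I would aim to produce, in Pol$(\Delta)$, an explicit operation witnessing $k_\Delta$-Collapsibility in Chen's adversary-based sense, with $k_\Delta$ allowed to depend on $r$; this matches the authors' remark in the introduction that the Collapsibility parameter must be unbounded as $\Delta$ grows, even though Switchability is witnessed uniformly in $\mathbb{A}$.

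Next I would draw out the algebraic consequences of the two hypotheses on $\mathbb{A}$. Because $\mathbb{A}$ is a Gap Algebra on three elements it is idempotent, omits a G-set as a factor, and is not Collapsible; yet by Zhuk's gap theorem it has PGP and so is $k$-Switchable for a fixed $k$. The non-$\alpha\beta$-projectivity assumption rules out precisely the Switchable algebras whose term operations are rigid enough to support the co-NP-hardness reduction of Theorem~\ref{thm:hard}; the remaining algebras must admit term operations that genuinely mix the three elements. The structure of three-element idempotent algebras satisfying these hypotheses is sufficiently well understood that one may proceed by case analysis, which is what I expect the lengthy appendix does.

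The central step is then to show that Pol$(\Delta)$ properly contains $\mathrm{Pol}(\mathrm{Inv}(\mathbb{A}))$ and that the extra operations suffice to build a Collapsibility witness. Concretely, for each arity $n$ there are maps $A^n \to A$ that preserve every relation of arity at most $r$ but fail to preserve some higher-arity relation of $\mathrm{Inv}(\mathbb{A})$. Using such an extra polymorphism together with a Switchability operation inherited from $\mathbb{A}$, I would assemble, via reactive composition in the sense of Theorem~\ref{thm:hubieReactiveComposition} along the adversary sequence $\mathbf{\Upsilon}_{k_\Delta, \{c\}}$ for a suitable source $c \in \alpha \cap \beta$, an operation in Pol$(\Delta)$ that reactively composes the full adversary $A^m$ from the Collapsing adversaries. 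This is exactly $k_\Delta$-Collapsibility from source $\{c\}$.

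The hard part --- and what I expect Appendix A to handle in detail --- is verifying that the assembled reactive composition actually lives in Pol$(\Delta)$ rather than only in the merely-Switchable clone of $\mathbb{A}$. This will require tracking how each relation of $\Delta$ behaves under iterated composition, quantitatively bounding $k_\Delta$ in terms of $r$, and using the non-$\alpha\beta$-projectivity to block the single obstruction pattern (essentially the family $\tau_k$ of Lemma~\ref{lem:new-revision}) that would otherwise persist into every finite fragment and sabotage Collapsibility.
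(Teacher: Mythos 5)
Your outline captures the right global shape---for each finite $\Delta$ one finds an operation in $\mathrm{Pol}(\Delta)$, of arity governed by the maximal arity of $\Delta$, that witnesses Collapsibility with a parameter $k_\Delta$ growing with $\Delta$---but the proposal stops exactly where the proof has to begin. The assertion that ``for each arity $n$ there are maps $A^n \to A$ that preserve every relation of arity at most $r$ but fail to preserve some higher-arity relation of $\mathrm{Inv}(\mathbb{A})$'' is precisely the content that must be established, and nothing in your sketch explains why such maps exist or why they can be chosen to witness Collapsibility. The paper's argument hinges on two concrete ingredients you do not supply: first, explicit partial-near-unanimity-style idempotent operations $\widehat{f}^a_n$ and $\widehat{f}^b_n$ of arity $n+2$ (which are Hubie-pols in $\{1\}$ and $\{0\}$ respectively, and a Hubie-pol yields Collapsibility directly by ordinary composition---no reactive composition along $\mathbf{\Upsilon}_{k_\Delta,\{c\}}$ is needed, and the collapsing source is $\{0\}$ or $\{1\}$, not an element of $\alpha\cap\beta$ as you guess); second, a proof by induction on the arity $h<n+2$ that one of $\widehat{f}^a_n$, $\widehat{f}^b_n$ preserves \emph{every} relation of $\mathrm{Inv}(\mathbb{A})$ of that arity. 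That induction is run through the notion of an essential tuple: a failure of preservation forces an essential tuple containing exactly one $2$ (Lemma~\ref{lem:Dmitriy-micro}), and each of the resulting subcases is killed by a small term operation ($p_1$, $p_2$, $p$, or the $4$-ary $r_4$) whose existence is exactly what the case analysis of the non-$\alpha\beta$-projective operation $f$ (Classes $(i)$--$(iii)$, the symmetric and asymmetric cases, Lemma~\ref{lem:fun}, Propositions~\ref{prop:asymmetric} and~\ref{prop:symmetric}) is there to extract---or else that case analysis shows the whole algebra is already Collapsible.

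So the gap is not one missing verification but the entire mechanism: you have no candidate operation, no reason it lies in $\mathrm{Pol}(\Delta)$, and no use of the non-$\alpha\beta$-projectivity hypothesis beyond the (incorrect as stated) remark that it merely ``rules out the co-NP-hard cases.'' In the actual proof, non-$\alpha\beta$-projectivity is used constructively, via the witnessing tuples that break projectivity, to manufacture the binary and quaternary term operations that make the essential-tuple induction close. Your final paragraph, which proposes ``tracking how each relation of $\Delta$ behaves under iterated composition,'' describes a different and much less tractable bookkeeping problem than the arity induction the paper performs; without the explicit $\widehat{f}^{a}_n/\widehat{f}^{b}_n$ and the essential-relation machinery, the plan does not assemble into a proof.
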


Our second task is the deduction of the following theorem, whose lengthy proof appears in Appendix B.

\begin{theorem}
\label{thm:catarina}
Suppose $\mathbb{A}$ is a $3$-element idempotent algebra that is not $\alpha\beta$-projective, containing a $2$-element G-set as a subalgebra. Then, $\mathbb{A}$ is Collapsible.
\end{theorem}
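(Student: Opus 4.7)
The plan is to exploit the tight interaction between the 2-element G-set subalgebra and the failure of $\alpha\beta$-projectivity on a domain of size just three. Write $A = \{0,1,2\}$ and, without loss of generality, take the 2-element G-set subalgebra to be $S = \{0,1\}$; since $\mathbb{A}$ is idempotent, every term operation of $\mathbb{A}$ restricted to $S$ acts as a projection (the permutations attached to the G-set reduce to the identity under idempotency). The hypothesis $\alpha \cap \beta \neq \emptyset$ leaves only three possibilities for the shared element, and by swapping $\alpha$ with $\beta$ and relabelling, these reduce to essentially one situation up to isomorphism.

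First, I would extract from the non-$\alpha\beta$-projectivity hypothesis a term operation $t$ whose violations of projectivity are tightly constrained. Concretely, for each coordinate $i$ there is a tuple $\overline{x}$ with $x_i \in \alpha$ yet $t(\overline{x}) \notin \alpha$, and another with $x_i \in \beta$ yet $t(\overline{x}) \notin \beta$. Because $t$ must restrict to a projection on $S^n$, both violating tuples must contain an occurrence of the element $e \in A \setminus S$, so every deviation of $t$ from projective behaviour can be localised to coordinates at which $e$ appears.

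Next I would convert this information into a Collapsibility witness. The target is a constant $k$ such that, for every $m$, the tuples in $A^m$ having at least $m - k$ equal coordinates generate $\mathbb{A}^m$. Iterating $t$, and composing it with further terms produced by varying which $\alpha\beta$-configuration one tests against, should allow any target tuple in $A^m$ to be assembled from such short-support tuples: each occurrence of $e$ in the target can be produced from a bounded-arity term applied to tuples which themselves contain $e$ in only a bounded number of positions, while the rest of the tuple, lying in $S^m$, is generated directly by projections. The bound $k$ then depends only on the arities of the witnessing terms, not on $m$, which is precisely what Collapsibility requires.

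The main obstacle is the case analysis. The three configurations for which element is shared by $\alpha$ and $\beta$ interact differently with the subalgebra $S$, and within each, several subcases arise depending on which coordinates of $t$ witness non-projectivity and on how $t$ behaves on mixed tuples containing both $e$ and elements of $S$. Here the restriction to a three-element domain is essential: because $A \setminus S$ consists of a single point, the possible patterns of violation for $t$ can be enumerated, and the book-keeping, rather than any single conceptual insight, is the real work.
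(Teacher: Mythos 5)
Your proposal has two genuine gaps, one in how it reads the hypothesis and one in the assembly step. First, the hypothesis of the theorem is that $\alpha\beta$-projectivity fails for \emph{every} admissible pair: all three of $\{0,2\}\{1,2\}$, $\{0,1\}\{1,2\}$ and $\{0,1\}\{0,2\}$. These do not ``reduce to essentially one situation up to isomorphism'': once the G-set subalgebra is fixed as $\{0,1\}$, the pair whose shared element is $2$ (outside the subalgebra) is not interchangeable with the pairs whose shared element lies inside it. More importantly, a single witnessing term $t$ for a single pair cannot carry the proof: the failure of one projectivity is compatible with all operations being projective with respect to a different pair, in which case $\mathbb{A}$ has EGP by Zhuk's criterion, hence is not even PGP, hence not Collapsible. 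The paper accordingly begins (Lemma~\ref{4hypothesis}) by extracting three separate ternary witnesses $f,g,h$, one per pair, each acting as the first projection on $\{0,1\}$, and the entire case analysis is driven by the interaction of all three.

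Second, your assembly argument is backwards. Since $\{0,1\}$ induces a G-set and $\mathbb{A}$ is idempotent, every term operation restricted to $\{0,1\}$ is a projection; consequently the part of a target tuple lying in $\{0,1\}^m$ is exactly what \emph{cannot} be ``generated directly by projections'' from the near-constant tuples of the Collapsibility adversary --- projections applied to near-constant tuples yield only near-constant tuples. All generative power must come from the behaviour of terms on inputs involving $2$, and for this one needs, beyond $f,g,h$, an operation $r$ separating the classes $\{0,1\}$ and $\{2\}$; equivalently, a case split on whether $\{0,1\},\{2\}$ is a congruence and, if it is, whether the quotient carries a majority, minority or semilattice operation rather than a G-set. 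This congruence analysis, and the explicit construction of (generalised) Hubie-polymorphisms such as the $45$-ary and $55$-ary compositions of $f,g,h,r$ in Appendix~B, is where the real work lies, and none of it is visible in your sketch. Your claim that every violation of projectivity must involve the element $e=2$ is also unjustified except at the single coordinate on which the term projects over $\{0,1\}$.
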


\begin{corollary}
\label{cor:3-element-collapsibility}
Suppose $\mathbb{A}$ is a $3$-element idempotent algebra that is not EGP, \mbox{i.e.} is Switchable. Then, for every finite subset of $\Delta$ of Inv$(\mathbb{A})$, Pol$(\Delta)$ is Collapsible.
\end{corollary}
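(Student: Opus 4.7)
The plan is to deduce the corollary from Theorems~\ref{cor:Dmitriy-long} and~\ref{thm:catarina} via a case analysis on $\mathbb{A}$. First an easy reduction: if $\mathbb{A}$ is itself Collapsible, then since $\Delta \subseteq \mathrm{Inv}(\mathbb{A})$ gives the clone inclusion $\mathbb{A} \subseteq \mathrm{Pol}(\Delta)$, and enlarging the available term operations can only preserve the generating-set condition defining $k$-Collapsibility, $\mathrm{Pol}(\Delta)$ is Collapsible with the same parameter $k$. I may therefore assume $\mathbb{A}$ is not Collapsible. Since Switchability excludes EGP, Lemma~11 of~\cite{ZhukGap2015} (applied exactly as in the proof of Theorem~\ref{thm:hard}) ensures that $\mathbb{A}$ is not $\alpha\beta$-projective for any admissible pair $\alpha,\beta$, supplying the common side-hypothesis of both Theorems~\ref{cor:Dmitriy-long} and~\ref{thm:catarina}.

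Next I would split on whether $\mathbb{A}$ admits a G-set as a factor. If it does not, $\mathbb{A}$ is by definition a Gap Algebra and Theorem~\ref{cor:Dmitriy-long} gives the conclusion directly. Otherwise the factor cannot be $3$-element (an idempotent $3$-element G-set consists only of projections, making $\mathbb{A}$ EGP and contradicting Switchability), so it must be a $2$-element G-set. In the easy subcase the factor already appears as a subalgebra of $\mathbb{A}$, and Theorem~\ref{thm:catarina} applies directly; the resulting conclusion that $\mathbb{A}$ is Collapsible then contradicts our standing assumption and reduces us to the Collapsible case of the first paragraph. The remaining subcase has the factor appearing only as a proper quotient, by a congruence whose classes have the form $\{a\}$ and $\{b,c\}$; here $\{b,c\}$ is automatically a subalgebra, and the key supporting observation is that any $2$-element subalgebra of $\mathbb{A}$ that \emph{contains $a$} is automatically a G-set subalgebra, since for each operation $f$ the quotient projection index $i_f$ forces $f$ restricted to $\{a,x\}^{k}$ to coincide with $\pi_{i_f}$. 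Thus it suffices to show that at least one of $\{b,c\}, \{a,b\}, \{a,c\}$ is a G-set subalgebra: for the first we must verify the restriction directly, while for the latter two mere closure under the clone of $\mathbb{A}$ is enough, and Theorem~\ref{thm:catarina} then finishes the case as above.

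The main obstacle I anticipate is ruling out the residual configuration in which $\{b,c\}$ admits some non-projection term operation while both $\{a,b\}$ and $\{a,c\}$ fail to be closed. My plan for this step is to argue by contradiction with Switchability: from the three witnessing operations ---a non-projection on $\{b,c\}$ together with two ``leak'' operations sending $\{a,b\}^{k}$ into $c$ and $\{a,c\}^{k}$ into $b$--- and using the ambient G-set quotient to align projection indices, one should be able to assemble a term operation preserving the relation $\sigma_{k}$ of Theorem~\ref{thm:hard} for a cleverly chosen pair, a natural candidate being $\alpha=\{a,b\}, \beta=\{a,c\}$, whose symmetry mirrors the symmetric roles of $b$ and $c$ in the residual hypothesis. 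By the equivalence underlying Lemma~11 of~\cite{ZhukGap2015}, such preservation forces $\mathbb{A}$ to be $\alpha\beta$-projective, hence EGP, contradicting Switchability. This rules out the residual configuration and, combined with the two previous cases, yields the corollary.
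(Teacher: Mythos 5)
Your overall architecture (dispose of the Collapsible case, use Lemma~11 of \cite{ZhukGap2015} to get non-$\alpha\beta$-projectivity, then split on whether a G-set occurs as a factor, feeding the two branches into Theorems~\ref{cor:Dmitriy-long} and~\ref{thm:catarina}) matches the paper's proof. The problem is your handling of the subcase where the $2$-element G-set arises only as a proper quotient of $\mathbb{A}$, by a congruence with classes $\{a\}$ and $\{b,c\}$. This subcase is vacuous under the hypothesis, and you miss the one-line reason: since $\mathbb{A}$ is idempotent, the quotient G-set's operations $\pi(x_i)$ must have $\pi=\mathrm{id}$, so every operation of $\mathbb{A}$ is already $\alpha\beta$-projective for $\alpha=\{a\}$, $\beta=\{b,c\}$ (these are strict subsets covering the domain; nothing in the definition requires $\alpha\cap\beta\neq\emptyset$). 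By Lemma~11 of \cite{ZhukGap2015} this is exactly EGP, contradicting Switchability. This is precisely how the paper dispatches the case, citing \cite{AU-Chen-PGP}: ``If $\mathbb{A}$ contains a G-set as a homomorphic image then $\mathbb{A}$ has EGP.''

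Because you do not see this, you instead try to manufacture a $2$-element G-set \emph{subalgebra} from the quotient, and your argument there is incomplete: the ``residual configuration'' (no G-set among $\{b,c\}$, $\{a,b\}$, $\{a,c\}$) is only addressed by a plan --- ``one should be able to assemble a term operation preserving $\sigma_k$'' for the guessed pair $\alpha=\{a,b\}$, $\beta=\{a,c\}$ --- with no actual construction, and no verification that preservation of the $\sigma_k$ for a single pair forces $\alpha\beta$-projectivity (Lemma~11 is stated as an equivalence for EGP, not pair-by-pair). So as written the proof has a genuine gap in that branch. The fix is simply to delete the entire subcase: the congruence classes themselves are the witnessing $\alpha,\beta$, and the subcase contradicts the hypothesis before any subalgebra hunting is needed.
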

\begin{proof}
Recall Lemma 11 in \cite{ZhukGap2015} that $\mathbb{A}$ has EGP iff there exists $\alpha$ and $\beta$ such that $\alpha, \beta$ are strict subsets of $D$, $\alpha \cup \beta = D$, and all operations of $\mathbb{A}$ are $\alpha\beta$-projective.
 
If $\mathbb{A}$ does not contain a G-set as a factor, then $\mathbb{A}$ is a Gap Algebra and the result follows from Theorem~\ref{cor:Dmitriy-long}. Otherwise, $\mathbb{A}$ contains a G-set as a factor. If $\mathbb{A}$ contains a G-set as a homomorphic image then $\mathbb{A}$ has EGP from \cite{AU-Chen-PGP}. Else, since $\mathbb{A}$ is $3$-element, $\mathbb{A}$ contains a $2$-element G-set as a subalgebra and we are in the situation of Theorem~\ref{thm:catarina}.
\end{proof}

\section{A three-element vignette}

We would love to be able to improve Theorem~\ref{thm:all} to describe the boundary between those cases that are co-NP-complete and those that are Pspace-complete, if indeed such a result is true.
However, even in the three-element case this appears challenging, but we are able to provide a variant vignette, whose proof appears in Appendix C.
\begin{theorem}
Let $\mathbb{A}$ be an idempotent algebra on a $3$-element domain. Either 
\begin{itemize}
\item $\Pi_k$-CSP$(\mathrm{Inv}(\mathbb{A}))$ is in NP, for all $k$; or
\item $\Pi_k$-CSP$(\mathrm{Inv}(\mathbb{A}))$ is co-NP-complete, for all $k$; or
\item $\Pi_k$-CSP$(\mathrm{Inv}(\mathbb{A}))$ is $\Pi^{\mathrm{P}}_2$-hard, for some $k$.
\end{itemize}
\label{thm:vignette}
\end{theorem}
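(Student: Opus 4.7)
The plan is to invoke Zhuk's gap theorem and refine the EGP branch on a three-element domain. By \cite{ZhukGap2015}, the idempotent algebra $\mathbb{A}$ is either PGP or EGP. In the PGP case, Theorem~\ref{thm:easy} puts QCSP$(\mathrm{Inv}(\mathbb{A}))$ in NP, so every $\Pi_k$-CSP$(\mathrm{Inv}(\mathbb{A}))$ is in NP, delivering the first bullet. In the EGP case, Theorem~\ref{thm:hard} immediately gives co-NP-hardness of $\Pi_1$-CSP$(\mathrm{Inv}(\mathbb{A}))$ (its reduction from complemented $3$NAESAT builds a purely universal sentence), and hence of every $\Pi_k$-CSP with $k \geq 1$. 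It remains to sort the EGP case between the second and third bullets.

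In the EGP case, Zhuk's Lemma 11 supplies strict $\alpha,\beta \subset A = \{0,1,2\}$ with $\alpha \cup \beta = A$ such that every operation of $\mathbb{A}$ is $\alpha\beta$-projective. Up to renaming the witness has one of two shapes: the \emph{disjoint shape}, where $\alpha = \{a\}$ and $\beta = A \setminus \{a\}$, or the \emph{overlap shape}, where $|\alpha| = |\beta| = 2$ and $\alpha \cap \beta = \{c\}$. I split on whether $\mathbb{A}$ admits some witness of the disjoint shape. If so, the equivalence relation $\sim$ on $A$ whose classes are $\{a\}$ and $A \setminus \{a\}$ is a congruence of $\mathbb{A}$: for any operation $f$ with preferred coordinate $i$, the $\sim$-class of $f(t_1,\ldots,t_k)$ coincides with that of $t_i$, so $f$ descends to a coordinate projection on the two-element quotient. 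Consequently $\mathbb{A}/{\sim}$ is the two-element G-set, whose invariant relations comprise every Boolean relation. Every Boolean relation lifts along $\sim$ to a relation on $A$ that is still invariant under $\mathbb{A}$, and a direct reduction turns any instance of Boolean $\Pi_2$-$3$SAT (which is $\Pi_2^P$-complete) into an equivalent instance of $\Pi_2$-CSP$(\mathrm{Inv}(\mathbb{A}))$: universal $A$-valued variables collapse to their $\sim$-classes, which exhaust the two Boolean values, and the lifted constraints depend only on classes. This establishes the third bullet with $k = 2$.

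If instead every EGP witness has the overlap shape, fix a canon $c \in \alpha \cap \beta$ and aim to place QCSP$(\mathrm{Inv}(\mathbb{A}))$ in co-NP, yielding co-NP-completeness for every $\Pi_k$-CSP and so the second bullet. The target, following the template of Proposition~\ref{prop:coNP2} and Corollary~\ref{cor:coNP2}, is to show that every relation of $\mathrm{Inv}(\mathbb{A})$ is almost existentially trivial with canon $c$, so that each existential variable may be evaluated to $c$ and the remaining universal sentence decided in co-NP. This is the main obstacle of the proof, because $\mathrm{Inv}(\mathbb{A})$ can strictly contain $\mathrm{Inv}(\mathrm{Pol}(A;\{\tau_k\},a_1,a_2,a_3))$, so Corollary~\ref{cor:coNP2} does not apply verbatim. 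The essential use of the hypothesis is that the absence of a disjoint witness prevents $\mathbb{A}$ from consisting only of projections (otherwise any singleton partition would serve as a disjoint witness), so $\mathbb{A}$ must contain a genuinely non-projection $\alpha\beta$-projective operation; the technical core is to extract from such an operation the absorbing behaviour at $c$ (available because $c \in \alpha \cap \beta$ forces the preferred coordinate to pin the output to $c$) needed to propagate the canon into any coordinate of any invariant relation, thereby verifying almost existential triviality and closing the argument.
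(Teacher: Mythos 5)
There is a genuine gap, and it lies in your treatment of the EGP case. Your proposed dichotomy --- ``some witness $(\alpha,\beta)$ is disjoint'' versus ``every witness overlaps'' --- does not align with the boundary between $\Pi_2^{\mathrm{P}}$-hardness and co-NP membership. The paper's decomposition is instead on whether $\mathbb{A}$ contains a G-set as a \emph{factor} (a homomorphic image of a subalgebra). A disjoint witness corresponds to a two-element G-set as a homomorphic image, and there your argument is fine; but on a three-element domain an algebra can have only overlap-shaped witnesses (say $\alpha=\{0,2\}$, $\beta=\{1,2\}$, $\alpha\cap\beta=\{2\}$) and still contain a two-element G-set as a \emph{subalgebra} on $\{0,2\}$ or $\{0,1\}$. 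The paper's Cases A--C show that exactly these algebras have $\Pi_2$-CSP$(\mathrm{Inv}(\mathbb{A}))$ already $\Pi_2^{\mathrm{P}}$-hard, whereas your argument would file them under the co-NP-complete bullet. Concretely, your claim that every relation of $\mathrm{Inv}(\mathbb{A})$ is almost existentially trivial with canon $c\in\alpha\cap\beta$ is refuted by the relation $Z=\{(0,0),(2,1),(2,2),(0,2)\}$ from the paper's Case B: it is invariant under any $\{0,2\}\{1,2\}$-projective algebra that is a G-set on $\{0,2\}$, yet $(0,0)\in Z$ while $(2,0)\notin Z$, so the canon $2$ cannot be substituted into the first coordinate, and $Z$ is essential so it is not a conjunction of equalities with existentially trivial relations.

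Two further points. First, even where your co-NP branch should apply (no G-set factor, i.e.\ the semilattice-without-unit case), you only sketch the membership argument as a ``technical core to extract''; the paper instead cites a known result that $\Pi_k$-CSP$(\mathrm{Inv}(\mathbb{A}))$ is in co-NP for all $k$ when $\mathbb{A}$ contains the semilattice operation, and combines it with the co-NP-hardness of Theorem~\ref{thm:hard}. Second, in the $\Pi_2^{\mathrm{P}}$-hard branch you would also need to cover the subalgebra cases, which the paper does by hand: reducing from $\Pi_2$-NAESAT using the subalgebra $\{0,1\}$ directly (Case A), or using the gadget $\forall z'\exists z\, Z(z,z')$ to simulate Boolean universal quantification over $\{0,2\}$ and then pushing the auxiliary existentials innermost (Cases B and C). None of this machinery is recoverable from the disjoint-versus-overlap split alone.
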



Note that the trichotomy of Theorem~\ref{thm:vignette} does not hold for QCSP along the same boundary for, respectively, NP, co-NP-complete and Pspace-complete. For the semilattice-without-unit $s$ it is known that $\Pi_k$-CSP$(\mathrm{Inv}(s))$ is co-NP-complete, for all $k$, while  QCSP$(\mathrm{Inv}(s))$ is Pspace-complete \cite{BBCJK}.


\section{Discussion}

The major contribution of this paper is its discussion of the Chen Conjecture with two infinite-signature variants one of which is proved to hold (with encoding in ``simple logic'') and one of which fails (with the tuple listing).

In addition to this, the contribution is largely mathematical, examining the relationship between Switchability and Collapsibility in the three-element case. However, this mathematical study uncovers something of importance to the computer scientist who is not reconciled to infinite signatures! Since here it demonstrates that all three-element domain NP-memberships that may be shown by Switchability, may already be shown by Collapsibility. 

The work associated with Theorem~\ref{cor:Dmitriy-long} is distinctly non-trivial and involves a new method, whereas the work associated with Theorem~\ref{thm:catarina} uses known methods and involves mostly turning the handle with these. Similarly, the work involved with the three element vignette uses known methods on top of our earlier new results.


The Chen Conjecture in its original form remains open. As does the general question (for arbitrary finite domains) as to whether, if $\mathbb{A}$ is Switchable, all finite subsets $\mathcal{B}$ of Inv$(\mathbb{A})$ are so that Pol$(\mathcal{B})$ is Collapsible. However, to now prove the Chen Conjecture it is sufficient to prove, for any finite $\mathcal{B}$ expanded with all constants such that Pol$(\mathcal{B})$ has EGP, that there exists polynomially (in $i$) computable pp-definitions (over $\mathcal{B}$) of the relations $\tau_i$ (where $\alpha$ and $\beta$ are suitably chosen to witness EGP). A first step towards this is to establish whether there are even polynomially sized pp-definitions of these $\tau_i$.

The appearance of a co-NP-complete QCSP is likely to be an anomaly of our introduction of infinite signatures. Such a QCSP is unlikely to exist with a finite signature (at least, nothing like this is hitherto known). Indeed, its presence might be used as an argument against the acceptance of infinite signatures, if it is interpreted as an aberration. For the reader in this mind, we ask to please review the earlier paean to infinite signatures.

\section*{Acknowledgements}

We thank Hubie Chen and Micha\l\ Wrona for many useful discussions, as well as two anonymous referees for their advising on a previous draft.



\section*{Appendix A: Switchability and Collapsibility of Gap Algebras}

Let $f$ be a $k$-ary idempotent operation on domain $D$. We say $f$ is a \emph{generalised Hubie-pol} on $z_1\ldots z_k$ if, for each $i \in k$, $f(D,\ldots,D,z_i,D,\ldots,D)=D$ ($z_i$ in the $i$th position). When $z_1=\ldots=z_k=a$ this is called a \emph{Hubie-pol} in $\{a\}$ and gives $(k-1)$-Collapsibility from source $\{a\}$. In general, a generalised Hubie-pol does not bestow Collapsibility (\mbox{e.g.} Chen's $4$-ary Switchable operation $r$, below).  The name Hubie operation was used in \cite{LICS2015} for Hubie-pol and the fact that this leads to Collapsibility is noted in \cite{hubie-sicomp}.

For this appendix  $\mathbb{A}$ is an idempotent algebra on a $3$-element domain $\{0,1,2\}:=D$. Assume $\mathbb{A}$ has precisely two subalgebras on domains $\{0,2\}$ and $\{1,2\}$ and contains the idempotent semilattice-without-unit operation $s$ which maps all tuples off the diagonal to $2$. Thus, $\mathbb{A}$ is a \emph{Gap Algebra} as defined in \cite{AU-Chen-PGP}. Note that the presence of $s$ removes the possibility to have a $G$-set as a factor. We say that $\mathbb{A}$ is $\{0,2\}\{1,2\}$-projective if for each $k$-ary $f$ in $\mathbb{A}$ there exists $i \leq k$ so that, if $x_i \in \{0,2\}$ then $f(x_1,\ldots,x_k) \in \{0,2\}$ and if $x_i \in \{1,2\}$ then $f(x_1,\ldots,x_k) \in \{1,2\}$. Let us now further assume that $\mathbb{A}$ is not $\{0,2\}\{1,2\}$-projective. This rules out the Gap Algebras that have EGP and we now know that $\mathbb{A}$ is Switchable \cite{AU-Chen-PGP}. We will now consider the $4$-ary operation $r$ defined by Chen in \cite{AU-Chen-PGP}. Let $r$ be the idempotent operation satisfying
\[
\begin{array}{ccc}
0111 & & 1 \\
1011 & r & 1 \\
0001 &\mapsto & 0 \\
0010 & & 0 \\
\mbox{else} & & 2.
\end{array} 
\]
Chen proved that $(D;r,s)$ is $2$-Switchable but not $k$-Collapsible, for any $k$ \cite{AU-Chen-PGP}. Let $f$ be a $k$-ary operation in $\mathbb{A}$ that is not $\{0,2\}\{1,2\}$-projective. Violation of $\{0,2\}\{1,2\}$-projectivity in $f$ means that for each $i \in [k]$ either 
\begin{itemize}
\item there is $x_i \in \{0,1\}$ and $x_1,\ldots,x_{i-1},x_{i+1},\ldots,x_k \in \{0,1,2\}$ so that $f(x_1,\ldots,x_k)=y \in (\{0,1\}\setminus \{x_i\})$, or
\item or $x_i=c$ and there is $x_1,\ldots,x_{i-1},x_{i+1},\ldots,x_k \in \{0,1,2\}$ so that $f(x_1,\ldots,x_k)=y \in \{0,1\}$.
\end{itemize}
Note that we can rule out the latter possibility and further assume $x_1,\ldots,x_{i-1},$ $x_{i+1},\ldots,x_k \in \{0,1\}$, by replacing $f$ if necessary by the $2k$-ary $f(s(x_1,x'_1),\ldots,$ $s(x_k,x'_k))$. Thus, we may assume that (*) for each $i \in [k]$ there is $x_i \in \{0,1\}$ and $x_1,\ldots,x_{i-1},x_{i+1},\ldots,x_k \in \{0,1\}$ so that $f(x_1,\ldots,x_k)=y \in (\{0,1\}\setminus \{x_i\})$.

We wish to partition the $k$ co-ordinates of $f$ into those for which violation of $\{0,2\}\{1,2\}$-projectivity, on words in $\{0,1\}^k$:
\begin{itemize}
\item[$(i)$] happens with $0$ to $1$ but never $1$ to $0$.
\item[$(ii)$] happens with $1$ to $0$ but never $0$ to $1$.
\item[$(iii)$] happens on both $0$ to $1$ and $1$ to $0$.
\end{itemize}
Note that Classes $(i)$ and $(ii)$ are both non-empty (Class $(iii)$ can be empty). This is because if Class $(i)$ were empty then  $f(s(x_1,x'_1),\ldots,s(x_k,x'_k))$ would be a Hubie-pol in $\{1\}$ and if Class $(ii)$ were empty we would similarly have a Hubie-pol in $\{0\}$. We will write $k$-tuples with vertical bars to indicate the split between these classes. Suppose there exists a $\overline{z}$ so that $f(0,\ldots,0|1,\ldots,1|\overline{z}) \in \{0,1\}$. Then we can identify all the variables in one among Class $(i)$ or Class $(ii)$ to obtain a new function for which one of these classes is of size one. Note that if, e.g., Class $(i)$ is made singleton, this process may move variables previously in Class $(iii)$ into Class $(ii)$, but never to Class $(i)$. 

Thus we may assume that either Class $(i)$ or Class $(ii)$ is singleton or, for all $\overline{z}$ over $\{0,1\}$, $f(0,\ldots,0|1,\ldots,1|\overline{z}) =2$. Indeed, these singleton cases are dual and thus \mbox{w.l.o.g.} we need only prove one of them. Recall the global assumptions are in force for the remainder of the paper.

\subsection{Properties of Gap Algebras that are Switchable}

\begin{lemma}
\label{lem:fun}
Any algebra over $D$ containing $f$ and $s$ is either Collapsible or has binary term operations $p_1$ and $p_2$ so that 
\begin{itemize}
\item $p_1(0,1)=1$ and $p_1(1,0)=p_1(2,0)=2$, \textbf{and}
\item $p_2(0,1)=0$ and $p_2(1,0)=p_2(1,2)=2$.
\end{itemize}
\end{lemma}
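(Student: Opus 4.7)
The plan is to produce the binary terms $p_1$ and $p_2$ explicitly as term operations built from $f$ and $s$, and to argue that whenever the naïve construction fails, the algebra must admit a Hubie-pol in $\{0\}$ or in $\{1\}$, which delivers Collapsibility (the other alternative in the statement). So the contrapositive assumption throughout is: no Hubie-pol in $\{0\}$ or $\{1\}$ exists.

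First I would treat $p_1$. Pick a coordinate $i_0 \in I$ (Class $(i)$ is non-empty) together with a witness tuple $c = (c_1,\ldots,c_k) \in \{0,1\}^k$ satisfying $c_{i_0} = 0$ and $f(c) = 1$, which exists by the definition of Class $(i)$. Define the base candidate $p_1^{\star}(x,y) := f(d_1,\ldots,d_k)$ where $d_j = x$ if $c_j = 0$ and $d_j = y$ if $c_j = 1$. By construction $p_1^{\star}(0,1) = f(c) = 1$, which is the first required value. The inequality $p_1^{\star}(1,0) \neq 0$ follows at once from the defining property of Class $(i)$ at $i_0$ (no $1$-to-$0$ violation is available there). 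The real verification is that $p_1^{\star}(1,0) = 2$: if instead $p_1^{\star}(1,0) = 1$, then $f$ would exhibit a $0$-to-$1$ violation at some coordinate $j$ with $c_j = 1$. Such a $j$ cannot lie in $II$ (no $0$-to-$1$ violation allowed there) and cannot be $i_0$, so $j \in (I\setminus\{i_0\}) \cup III$. In the sub-case where Class $(i)$ is a singleton, $j \in III$; in the sub-case $f(0\ldots 0 \mid 1\ldots 1 \mid \bar z)=2$ for all $\bar z$, the Class $(iii)$ coordinates already force the output out of $\{0,1\}$ on the relevant mixed tuple. Either way, one refines $p_1^\star$ by composing with $s$ on the offending coordinates, since $s$ absorbs any surviving $\{0,1\}$-valued output into $2$ whenever its arguments disagree. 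The check $p_1^{\star}(2,0) = 2$ is then reduced to the $(1,0)$ case by a further application of $s$, using $s(2,y) = 2$ for $y \neq 2$ and idempotency.

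The construction of $p_2$ is dual. Pick $j_0 \in II$ (non-empty) and a witness $c' \in \{0,1\}^k$ with $c'_{j_0} = 1$ and $f(c') = 0$. Define $p_2^\star(x,y) := f(e_1,\ldots,e_k)$ with $e_j = x$ if $c'_j = 0$ and $e_j = y$ if $c'_j = 1$. Then $p_2^\star(0,1) = f(c') = 0$. The inequality $p_2^\star(1,0) \neq 1$ follows from Class $(ii)$ at $j_0$ forbidding $0$-to-$1$ violations, and ruling out $p_2^\star(1,0)=0$ uses the symmetric argument to that in the first paragraph. Finally, $p_2^\star(1,2) = 2$ is obtained by the same $s$-absorption trick, since any hypothetical value $1$ at $(1,2)$ is merged to $2$ by an outer application of $s$ against a $\{0,2\}$-valued auxiliary output.

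The main obstacle is precisely the middle step above: ruling out $p_1^\star(1,0)=1$ (and the $p_2$-counterpart $p_2^\star(1,0)=0$). Neither the partition of coordinates nor the Class $(i)/(ii)$ defining properties alone suffice, because Class $(iii)$ coordinates can in principle support spurious $0$-to-$1$ violations on the flipped tuple. The technical core of the argument is therefore to exploit the three sub-cases isolated in the paper (singleton $I$, singleton $II$, or the uniform value $2$ on $(0\ldots0\mid1\ldots1\mid\bar z)$) together with repeated composition with $s$ to force the correct value; persistent failure of this refinement would produce, directly from the surviving witness data, a Hubie-pol in $\{0\}$ or $\{1\}$, contradicting the non-Collapsibility assumption and closing the lemma by the other disjunct.
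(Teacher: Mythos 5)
Your skeleton matches the paper's: collapse a witness tuple for a Class $(i)$ (resp.\ Class $(ii)$) violation to a binary term, pin its values down via the class structure, and fall back to Collapsibility through a Hubie-pol when the construction fails. But there is a genuine gap at the value $p_1(2,0)$, which is exactly where the ``either Collapsible or'' disjunction of the lemma has to be earned. The quantity $p_1^{\star}(2,0)=f(\widetilde{x})$, where $\widetilde{x}$ is the witness with $0$ replaced by $2$ and $1$ replaced by $0$, lies in the subalgebra $\{0,2\}$ and can perfectly well equal $0$ for the particular witness you chose; it is not ``reduced to the $(1,0)$ case by a further application of $s$''. Composing inputs with $s$ produces a different term, and no identity of $s$ turns an output of $0$ on $\widetilde{x}$ into $2$ while preserving $p_1(0,1)=1$. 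The paper resolves this by a dichotomy quantified over \emph{all} witnesses: if $f(\widetilde{x})=0$ for every tuple $\overline{x}$ witnessing a Class $(i)$ violation, for every Class $(i)$ coordinate, then $f(s(x_1,x'_1),\ldots,s(x_k,x'_k))$ is a Hubie-pol in $\{1\}$ and the algebra is Collapsible; otherwise some witness has $f(\widetilde{x})=2$, and \emph{that} witness is the one to collapse. Your single-witness construction never invokes this global alternative, so the Collapsibility branch is asserted rather than derived; the same defect recurs for $p_2(1,2)\in\{1,2\}$, where the fallback is a Hubie-pol in $\{0\}$ and your ``$s$-absorption against a $\{0,2\}$-valued auxiliary output'' presupposes an auxiliary term you never construct.

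The step $p_1^{\star}(1,0)=2$ is also left unfinished, though the correct argument is within reach of what you wrote. It suffices that Class $(ii)$ is non-empty: for a Class $(ii)$ coordinate $j_0$, whichever of $0,1$ the witness $\overline{x}$ has at $j_0$, one of $f(\overline{x})=1$ or $f$ of the bit-complement of $\overline{x}$ equal to $1$ would be a $0$-to-$1$ violation at $j_0$, which Class $(ii)$ forbids; hence the complement cannot map to $1$, and Class $(i)$ at $i_0$ already rules out $0$. You begin this argument but divert into sub-cases on the coordinates outside Class $(ii)$ and an unspecified refinement ``composing with $s$ on the offending coordinates'', which is not justified: $s$ acts on inputs, not outputs, and you give no argument that the refined term still takes the value $1$ at $(0,1)$.
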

\begin{proof}
Consider a tuple $\overline{x}$ over $\{0,1\}$ that witnesses the breaking of $\{0,2\}\{1,2\}$-projectivity for some Class $(i)$ variable from $0$ to $1$; so $f(\overline{x})=1$. Let $\widetilde{x}$ be $\overline{x}$ with the $0$s substituted by $2$ and the $1$s substituted by $0$. If, for each such  $\overline{x}$ over $\{0,1\}$ that witnesses the breaking of $\{0,2\}\{1,2\}$-projectivity for each Class $(i)$ variable, we find $f(\widetilde{x})=0$, then $f(s(x_1,x'_1),\ldots,s(x_k,x'_k))$ is a Hubie-pol in $\{1\}$. Thus, for some such  $\overline{x}$ we find $f(\widetilde{x})=2$. By collapsing the variables according to the division of $\overline{x}$ and $\widetilde{x}$ we obtain a binary function $p_1$ so that $p_1(0,1)=1$ and $p_1(2,0)=2$. We may also see that $p_1(1,0)=2$, since Classes $(i)$ and $(ii)$ are non-empty.

Dually, we consider tuples $\overline{x}$ over $\{0,1\}$ that witnesses the breaking of $\{0,2\}\{1,2\}$-projectivity for Class $(ii)$ variables from $1$ to $0$ to derive a function $p_2$ so that  $p_2(0,1)=0$, $p_2(1,2)=p(1,0)=2$.
\end{proof}

\subsubsection{The asymmetric case: Class $(i)$ is a singleton and there exists $\overline{z} \in \{0,1\}^*$ so that $f(0|1,\ldots,1|\overline{z})=1$}

We will address the case in which Class $(i)$ is a singleton and there exists $\overline{z} \in \{0,1\}^*$ so that $f(0|1,\ldots,1|\overline{z})=1$ (the like case with Class $(ii)$ being singleton itself being dual).

\begin{proposition}
\label{prop:asymmetric}
Let $f$ be so that Class $(i)$ is a singleton and there exists $\overline{z} \in \{0,1\}^*$ so that $f(0|1,\ldots,1|\overline{z})=1$. Then, either $f$ generates a binary idempotent operation with $01 \mapsto 0$ and $02 \mapsto 2$, or any algebra on $D$ containing $f$ and $s$ is Collapsible.
\end{proposition}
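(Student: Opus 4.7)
The plan is to establish the dichotomy by first attempting a direct substitution-based construction of the desired binary $g$ as a term in $f$ and $s$, and then by showing that the failure of this attempt forces the existence of a Hubie-pol in $\{1\}$, hence Collapsibility.

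I would start from the candidate operation
\[
g(x,y) := f\bigl(x \mid y,\ldots,y \mid u_1(x,y),\ldots,u_j(x,y)\bigr),
\]
with $x$ placed at the (singleton) Class $(i)$ position, $y$ at every Class $(ii)$ position, and each $u_l$ a term in $x,y,s$ (so $u_l$ evaluates to $x$, $y$, or $2$). Idempotency of $g$ is immediate. To secure $g(0,2) = 2$ I would exploit the combination of the $\{0,2\}$-subalgebra behaviour and the $\{0,2\}\{1,2\}$-projectivity restriction at Class $(ii)$ positions: when every Class $(ii)$ slot carries the value $2$, no $1 \to 0$ violation can occur there, and together with the $\{0,2\}$-subalgebra structure this should force the output to $\{2\}$, possibly after adjusting the choice of $u_l$ at Class $(iii)$ positions. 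The remaining requirement $g(0,1) = 0$ then reduces to finding some $\overline w \in D^j$ realisable by the $u_l$ with $f(0 \mid 1,\ldots,1 \mid \overline w) = 0$; if such $\overline w$ exists we choose the $u_l$'s to realise it and are done.

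Otherwise $f(0 \mid 1,\ldots,1 \mid \overline w) \in \{1,2\}$ for every $\overline w \in D^j$, with the value $1$ attained at some $\overline w \in \{0,1\}^j$ by the hypothesis. In this case I plan to produce a Hubie-pol in $\{1\}$ as follows. For each Class $(ii)$ position $j$ there is a $\{0,1\}^k$-tuple $\overline x^{(j)}$ with $x^{(j)}_1 = 0$ (using that Class $(i)$ has no $1\to 0$ witnesses), $x^{(j)}_j = 1$, and $f(\overline x^{(j)}) = 0$. The three supplies of tuples with $f$-values $0$ (from the Class $(ii)$ witnesses), $1$ (from the $\overline z$-witness), and $2$ (from applications of $s$ or from the subalgebra structure) should combine to give a term operation $h$ in $f$ and $s$ satisfying that for each coordinate $i$, fixing $x_i = 1$ and varying the rest over $D$ covers all of $D$, which is the Hubie-pol property, yielding Collapsibility from source $\{1\}$.

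The main obstacle is this last construction. Because fixing the Class $(i)$ coordinate to $1$ in $f$ forces the output into $\{1,2\}$ by projectivity, $f$ itself cannot be a Hubie-pol at position 1, so $h$ must be obtained by breaking this rigid projection behaviour with an application of $s$; simultaneously the construction must not compromise the Hubie-pol property at any of the other coordinates. Carefully coordinating the use of the $\overline z$-witness, the Class $(ii)$-witnesses $\overline x^{(j)}$, and the binary operation $s$ in a single term is where the detailed technical work of the proof is concentrated.
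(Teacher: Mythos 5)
Your proposal places the dichotomy on the wrong condition, and the step that is supposed to carry the first branch does not hold. You assert that when the Class $(i)$ slot holds $0$ and every Class $(ii)$ slot holds $2$, the subalgebra structure together with projectivity ``should force the output to $\{2\}$.'' It does not: since $\{0,2\}$ is a subalgebra, the output is only forced into $\{0,2\}$, and it can perfectly well be $0$. The paper's proof splits on exactly this point, but in the opposite direction from yours. It first normalises $f$ so that there is a supply of tuples $(0\,|\,\overline{y}\,|\,\overline{z})$ over $\{0,1\}$ with $f$-value $0$ (so the analogue of your ``$g(0,1)=0$'' is always available and is not where the case distinction lives). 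It then replaces every $1$ in such a tuple by $2$: if some resulting tuple maps to $2$, identifying the $0$-positions with one variable and the $1$-positions with another yields the binary operation with $01\mapsto 0$, $02\mapsto 2$ (note that this identification does not respect your template ``$x$ at Class $(i)$, $y$ at all of Class $(ii)$'' --- different Class $(ii)$ positions may receive different variables). If instead \emph{every} such $2$-substituted tuple maps to $0$, no choice of your $u_l$ can produce $g(0,2)=2$, and the proposition's conclusion in that regime is Collapsibility. Your branching condition (whether $f(0\,|\,1,\ldots,1\,|\,\overline{w})=0$ for some $\overline{w}$) leaves this regime uncovered: there your first branch cannot be completed and your second branch is never entered.

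The second branch is also essentially deferred rather than proved. The paper's Collapsibility argument in this proposition is not a single Hubie-pol in $\{1\}$: it is a multi-case construction (distinguishing $\ell=0$ from $\ell\geq 1$, and then the value of $f(0\,|\,2,\ldots,2\,|\,2,\ldots,2)$) that builds the full adversary by explicitly composing rectangular adversaries such as $(D^{m},\{1\}^{M-m})$, $(D^{m},\{2\}^{M-m})$ and $(D^{r},\{0,2\}^{M-r})$ using $f$, $s$ and generalised Hubie-pols on several tuples simultaneously; moreover it runs under the hypothesis that all $2$-substituted witness rows map to $0$, which supplies the tuples the construction needs and which is a different hypothesis from the one your second branch assumes. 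As written, the proposal identifies where the difficulty lies but does not close either half of the dichotomy.
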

\begin{proof}
Let us consider the general form of $f$,
\[
\begin{array}{c|ccc|ccccc}
0& 1 & \cdots & 1 & z_0^0 & \cdots & z_0^{\ell'} & \ & 1 \\
0 & y^1_1 & \cdots & y^{k'}_1 & z_1^1 & \cdots & z_1^{\ell'} & \ & 0 \\
\vdots & \vdots & \cdots & \vdots & \vdots & \cdots & \vdots & \mapsto & \vdots \\
0 & y^1_{m'} & \cdots & y^{k'}_{m'} & z_{m'}^1 & \cdots & z_{m'}^{\ell'} & \ & 0 \\
\end{array}
\]
where the $y$s and $z$s are from $\{0,1\}$ and we can assume that each $(y^i_1,\ldots,y^i_{m'})$ contains at least one $1$ and also each $(z^i_1,\ldots,z^i_{m'})$ contains at least one $1$. For the latter assumption recall that in Class $(iii)$ we can always find some break of $\alpha\beta$-projectivity from $1$ to $0$. Note that by expanding what we previously called Class $(ii)$ we can build, by possibly identifying variables, a function $f'$ of the form
\[
\begin{array}{c|ccc|ccccc}
0& 1 & \cdots & 1 & 0 & \cdots & 0 & \ & 1 \\
0 & y^1_1 & \cdots & y^{k}_1 & z_1^1 & \cdots & z_1^{\ell} & \ & 0 \\
\vdots & \vdots & \cdots & \vdots  & \vdots & \cdots & \vdots & \mapsto & \vdots \\
0 & y^1_{m} & \cdots & y^{k}_{m} & z_{m}^1 & \cdots & z_{m}^{\ell} & \ & 0 \\
\end{array}
\]
where the $y$s and $z$s are from $\{0,1\}$ and we can assume each $(y^i_1,\ldots,y^i_{m})$ contains at least one $1$ and also each $(z^i_1,\ldots,z^i_{m})$ contains a least one $1$. Note that we do not claim the new vertical bars correspond to delineate between Classes $(i)$, $(ii)$ and $(iii)$ under their original definitions, since this is not important to us. We will henceforth assume that $f$ is in the form of $f'$.

Let $x^i_j$ (resp., $v^i_j$) be $0$ if $y^i_j$ (resp., $z^i_j$) is $0$, and be $2$ if $y^i_j$ (resp., $z^i_j$) is $1$. That is, $(x^1_{j}, \ldots,  x^{k}_{j}, v_{j}^1\ \ldots, v_{j}^{\ell})$ is built from $(y^1_{j}, \ldots,  y^{k}_{j}, z_{j}^1\ \ldots, z_{j}^{\ell})$ by substituting $1$s by $2$s. Suppose one of $f(0|x^1_1,\ldots,x^k_1| v_1^1\ \ldots, v_1^{\ell})$, \ldots, $f(0|x^1_m,\ldots,x^k_m | v_{m}^1\ \ldots, v_{m}^{\ell})$ is $2$. Then $f$ generates an idempotent binary operation with $01 \mapsto 0$ and $02 \mapsto 2$. Thus, we may assume that each of  $f(0|x^1_1,\ldots,x^k_1 | v_1^1\ \ldots, v_1^{\ell})$, \ldots, $f(0|x^1_m,\ldots,x^k_m|$ $v_m^1\ \ldots, v_m^{\ell}))$ is $0$. We now move to consider some cases.

(Case 1: $\ell=0$, \mbox{i.e.} there is nothing to the right of the second vertical bar.) From adversaries of the form $(\{0\}^M)$ and $(\{0,1\}^{m-1},\{1\}^{M-m+1})$ this supports construction of $(\{0,1\}^{m},\{1\}^{M-m})$ and all co-ordinate permutations. We illustrate this with the following diagram which makes some assumptions about the locations of the $1$s in each $(y^i_1,\ldots,y^i_{m})$; nonetheless it should be clear that the method works in general since there is at least one $1$ in $(y^i_1,\ldots,y^i_{m})$.
\[
\begin{array}{c|cccccc}
\{0\} & \{0,1\} & \{0,1\} & \cdots & \{0,1\} & & \{0,1\} \\
\vdots & \vdots & \vdots & \cdots & \vdots & & \{0,1\} \\
\{0\} & \{0,1\} & \{0,1\} & \cdots & \{0,1\} & & \{0,1\} \\
\{0\} & \{1\} & \{0,1\} & \cdots & \{0,1\} & & \{0,1\} \\
\{0\} & \{0,1\}  & \{1\} & \cdots & \{0,1\} & & \{0,1\} \\
\vdots & \vdots & \vdots & \cdots & \vdots & \mapsto & \{0,1\} \\
\{0\} & \{0,1\} & \{0,1\} & \cdots & \{1\} & & \{0,1\} \\
\{0\} & \{1\} & \{1\} & \cdots & \{1\} & & \{1\} \\
\vdots & \vdots & \vdots & \cdots & \vdots & & \{1\} \\
\{0\} & \{1\} & \{1\} & \cdots & \{1\} & & \{1\} \\
\end{array}
\]
Applying $s$ it is clear that the full adversary may be built from, for example, $(D^{m-1},\{0\}^{M-m+1})$ and $(D^{m-1},\{b\}^{M-m+1})$ which demonstrates $(m-1)$-Collapsibility.

(Case 2: $\ell\geq 1$.) Here we consider what is $f(0|1,\ldots,1|1,\ldots,1)$. If this is $1$ then we can clearly reduce to the previous case. If it is $0$ then $f(s(x_1,x_1'),\ldots,$ $s(x_{k+\ell+1},x'_{k+\ell+1}))$ is a generalised Hubie-pol in both $00|11,\ldots,11|00,\ldots,00$ and $00|11,\ldots,11|11,\ldots,11$, and we are Collapsible. This is because the composed function on these listed tuples gives $1$ and $0$, respectively, thus permitting to build adversaries of the form $(\{0,1\}^{k+\ell+2},\{0\}^{M-k-\ell-2})$ and  $(\{0,1\}^{k+\ell+2},\{1\}^{M-k-\ell-2})$ from adversaries of the form $(\{0,1\}^{k+\ell+1},\{0\}^{M-k-\ell-1})$ and  $(\{0,1\}^{k+\ell+1},$ $\{1\}^{M-k-\ell-1})$ (\mbox{cf.} Case 1).

Thus, we may assume $f(0|1,\ldots,1|1,\ldots,1)=2$. Using the fact that $f(s(x_1,x_1'),$ $\ldots,s(x_{k+\ell+1},x'_{k+\ell+1}))$ is a generalised Hubie-pol in $00|11\ldots 11|00 \ldots 00$ we can build (using $s$ and rather like in Case 1), from adversaries of the form $(\{0\}^{M})$ and  $(D^{(m-1)i},\{1\}^{M-(m-1)i})$, adversaries of the form $(D^{mi},\{1\}^{M-mi})$, and all co-ordinate permutations of this. Similarly, using the fact that $f(s(x_1,x_1'),\ldots,$ $s(x_{k+\ell+1},x'_{k+\ell+1}))$ is a generalised Hubie-pol in $00|11\ldots 11|11 \ldots 11$, we can build adversaries of the form $(D^{mi},\{2\}^{M-mi})$. 

(Case 2a: $f(0|2,\ldots,2|2,\ldots,2)=2$.) Consider again
\[
\begin{array}{c|ccc|ccccc}
0 & x^1_1 & \cdots & x^{k}_1 & v_1^1 & \cdots & v_1^{\ell} & \ & 0 \\
\vdots & \vdots & \cdots & \vdots  & \vdots & \cdots & \vdots & \mapsto & \vdots \\
0 & x^1_{m} & \cdots & x^{k}_{m} & v_{m}^1 & \cdots & v_{m}^{\ell} & \ & 0 \\
0 & 2 & \cdots & 2 & 2 & \cdots & 2 & & 2\\
\end{array}
\]
where each $(x^i_1,\ldots,x^i_{m})$ and $(v^i_1,\ldots,v^i_{m})$ contains at least one $2$. By amalgamating Classes $(ii)$ and $(iii)$ we obtain some function with the form
\[
\begin{array}{c|ccccc}
0 & u^1_1 & \cdots & x^{\nu}_1 & & 0 \\
\vdots & \cdots & \vdots & & & \vdots \\
0 & u^1_m & \cdots & x^{\nu}_m & & 0 \\
0 & 2 & \cdots & 2 & & 2\\
\end{array}
\]
where each $(u^i_1,\ldots,u^i_{m})$ is in $\{0,2\}^*$ and contains at least one $2$. From adversaries of the form $(D^{r+m-1}, \{2\}^{M-r-m+1})$ and $(\{0\}^{M})$ we can build $(D^{r},\{0,2\}^{M-r})$, and all co-ordinate permutations. We begin, pedagogically preferring to view some $D$s as $\{0,2\}$s,
\[
\begin{array}{c|cccccc}
\{0\} & D & D & \cdots & D & & D \\
\vdots & \vdots & \vdots & \cdots & \vdots & & D \\
\{0\} & D & D & \cdots & D & & D \\
\{0\} & \{2\}  &  \{0,2\} & \cdots &  \{0,2\} & & \{0,2\} \\
\{0\} &  \{0,2\}  & \{c\} & \cdots &  \{0,2\} & & \{0,2\} \\
\vdots & \vdots & \vdots & \cdots & \vdots & \mapsto & \{0,2\} \\
\{0\} &  \{0,2\}  &  \{0,2\} & \cdots & \{2\} & & \{0,2\} \\
\{0\} & \{2\}  & \{2\} & \cdots & \{2\} & & \{2\} \\
\vdots & \vdots & \vdots & \cdots & \vdots & & \{2\} \\
\{0\} & \{2\}  & \{2\} & \cdots & \{2\} & & \{2\} \\
\end{array}
\]
and follow with bottom parts of the form
\[
\begin{array}{c|ccccccccc}
\vdots & \vdots & \vdots & \cdots & \vdots & \mapsto & \{2\} \mbox{ or } \{0,2\} \\
\{0\} & \{0,2\}  & \{0,2\} & \cdots & \{0,2\} & & \{0,2\}. \\
\end{array}
\]
This now supports bootstrapping of the full adversary from adversaries of the form $(D^{m^2},\{0\}^{M-m^2})$, $(D^{m^2},$ $\{1\}^{M-m^2})$ and $(D^{m^2},\{2\}^{M-m^2})$. 

(Case 2b:  $f(0|2,\ldots,2|2,\ldots,2)=0$.) Here, from adversaries of the form $(D^{r+m-1}, \{2\}^{M-r-m+1})$ and $(\{0\}^{M})$ we can directly build $(D^{r+m-1}, \{0\}^{M-r-m+1})$.
\[
\begin{array}{c|cccccc}
\{0\} & D & D & \cdots & D  & & D \\
\vdots & \vdots & \vdots & \cdots & \vdots  & & D \\
\{0\} & D & D & \cdots & D  & & D \\
\{0\} & \{2\}  & \{2\} & \cdots & \{2\} & & \{0\} \\
\vdots & \vdots & \vdots & \cdots & \vdots  & \mapsto & \{0\} \\
\{0\} & \{2\}  & \{2\} & \cdots & \{2\}  & & \{0\} \\
\end{array}
\]
This now supports bootstrapping of the full adversary, similarly as in Case 2a (but slightly simpler).
\end{proof}

Let $\overline{x}:=x_1,\ldots,x_k$ and $\overline{y}:=y_1,\ldots,y_k$ be words over $\{0,1\} \ni x,y$. Let $\wedge(x,y)=0$ if $0 \in \{x,y\}$ and $1$ otherwise. Let $\vee(x,y)=1$ if $1 \in \{x,y\}$ and $0$ otherwise. This corresponds with considering $0$ as $\bot$ and $1$ as $\top$. Define $\wedge(\overline{x},\overline{y}):=(\wedge(x_1,y_1),\ldots,\wedge(x_k,y_k))$ and $\vee(\overline{x},\overline{y}):=(\vee(x_1,y_1),\ldots,\vee(x_k,y_k))$.  We are most interested in words 
\begin{itemize}
\item[A] $(\overline{x} | 0,\ldots,0 | \overline{z})$, such that $f(\overline{x} | 0,\ldots,0 | \overline{z})=0$, and for no $\overline{x}'\neq \overline{x}$ and $\overline{z}'$ over $\{0,1\}$ do we have  $(\overline{x}' | 0,\ldots,0 | \overline{z}')$ with $\vee(\overline{x},\overline{x}')=\overline{x}'$ so that $f(\overline{x}' | 0,\ldots,0 | \overline{z}')=0$.
\item[B] $(1,\ldots,1 | \overline{y} | \overline{z})$, such that $f(1,\ldots,1 | \overline{y} | \overline{z})=1$, and for no $\overline{y}'\neq \overline{y}$  and $\overline{z}'$ over $\{0,1\}$ do we have $(1,\ldots,1 | \overline{y}' | \overline{z}')$ with $\wedge(\overline{y},\overline{y}')=\overline{y}'$ so that $f(1,\ldots,1 | \overline{y}' | \overline{z}')=1$.
\end{itemize}
Such $\overline{x}$ and $\overline{y}$ are in a certain sense \emph{maximal}, but the sense of maximality is dual in Case B from Case A. $\overline{x}$ is maximal under inclusion for the number of $1$s it contains and $\overline{y}$ is maximal under inclusion for the number of $0$s it contains. In the asymmetric case that we consider here w.l.o.g., only Case A above will be salient, but we introduce both now for pedagogical reasons.

\begin{lemma}
\label{lem:r4-asymmetric}
Let $f$ be so that Class $(i)$ is a singleton and there exists $\overline{z} \in \{0,1\}^*$ so that $f(0|1,\ldots,1|\overline{z})=1$. Then any algebra over $D$ containing $f$ and $s$ is either Collapsible or has a $4$-ary term operation $r_4$ so that
\[
\begin{array}{ccc}
0101 & r_4 & 0 \\
0110 & \rightarrow & 0 \\
0111 & & 2 \\
\end{array}
\]
\end{lemma}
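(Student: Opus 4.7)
My plan is to first reduce via Proposition~\ref{prop:asymmetric} so that a useful binary term is already at hand, and then realize $r_4$ as a short composition of that term with $f$ itself. Concretely, by Proposition~\ref{prop:asymmetric}, either the algebra containing $f$ and $s$ is Collapsible (which satisfies the first disjunct of the lemma and we are done) or there is a binary idempotent term $q$ with $q(0,1)=0$ and $q(0,2)=2$. In the latter case, idempotence pins down the unary map $q(0,\cdot)$ as $0\mapsto 0$, $1\mapsto 0$, $2\mapsto 2$, so the lemma reduces to producing a four-ary term $h(x_1,x_2,x_3,x_4)$ satisfying $h(0,1,0,1), h(0,1,1,0)\in\{0,1\}$ and $h(0,1,1,1)=2$, after which we simply set $r_4(x_1,x_2,x_3,x_4):=q\bigl(x_1,h(x_1,x_2,x_3,x_4)\bigr)$, since $x_1=0$ at all three target tuples.

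To construct $h$, I substitute into $f$ itself, exploiting that Class (i) is a singleton. Let $\overline{z}^{*}\in\{0,1\}^{\ell}$ be the vector from the asymmetric hypothesis, so that $f(0|1,\ldots,1|\overline{z}^{*})=1$, and recall from Case~2 in the proof of Proposition~\ref{prop:asymmetric} that we may assume $f(0|1,\ldots,1|1,\ldots,1)=2$ (otherwise we reduce to Case~1 and are already Collapsible). Plant $x_1$ in the Class (i) coordinate, plant $x_2$ in every Class (ii) coordinate, and in each Class (iii) position $i\in[\ell]$ plant a binary term $T_i(x_3,x_4)$ chosen so that $T_i(1,1)=1$ and $T_i(0,1)=T_i(1,0)=z_i^{*}$. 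Then $h(0,1,1,1)=f(0|1,\ldots,1|1,\ldots,1)=2$ by construction, while $h(0,1,0,1)=h(0,1,1,0)=f(0|1,\ldots,1|\overline{z}^{*})=1\in\{0,1\}$, giving the desired three values of $r_4$ after applying $q(0,\cdot)$. The degenerate case $\ell=0$ is handled by Case~1 of Proposition~\ref{prop:asymmetric}, which already forces Collapsibility, so the lemma's disjunction holds vacuously.

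The main obstacle is realising each $T_i$ as an honest term of the algebra. In Boolean terms I need OR (when $z_i^{*}=1$) or AND (when $z_i^{*}=0$) restricted to $\{0,1\}^{2}$, and none of $s$, $q$, $p_1$, $p_2$ delivers these directly: $s$ sends $(0,1)$ and $(1,0)$ to $2$; $p_1(1,0)=p_2(1,0)=2$; and $q(1,0)$ is a priori unconstrained. My plan for this step is to split into cases on the unknown entries $q(1,0)$, $p_1(0,2)$, $p_1(1,2)$, $p_2(0,2)$, $p_2(2,1)$, and on the values of $f$ on further mixed tuples, constructing $T_i$ by nesting $q$ with $p_1$ or $p_2$ (using their known entries from Lemma~\ref{lem:fun}) and applying $s$ to cap stray $2$'s. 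In any branch where no such $T_i$ is realisable in the clone, one instead exhibits a generalised Hubie-pol by a substitution of the type used in Cases~2a and 2b of Proposition~\ref{prop:asymmetric}, thereby giving Collapsibility and landing in the other disjunct of the lemma. This last analysis is lengthy but essentially routine, following the template already established in Proposition~\ref{prop:asymmetric} together with the adversary bookkeeping from Section~\ref{sec:games-adversaries-reactivecomposition}.
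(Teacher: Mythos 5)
Your outer frame (reduce to a $4$-ary $h$ with $h(0101),h(0110)\in\{0,1\}$ and $h(0111)=2$, then post-compose with $q(x_1,\cdot)$) is sound, but the proof has a genuine gap at exactly the point you flag as the ``main obstacle''. Your construction of $h$ requires, for each Class $(iii)$ coordinate, a binary term $T_i$ acting as OR (when $z_i^{*}=1$) or AND (when $z_i^{*}=0$) on $\{0,1\}$. You give no construction of these terms, and no argument that they exist: $\{0,1\}$ is not a subalgebra in a Gap Algebra, the known terms $s,p_1,p_2,q$ all send at least one of $(0,1),(1,0)$ to $2$ or leave it unconstrained, and compositions of such operations tend only to produce more $2$'s. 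The fallback assertion that ``in any branch where no such $T_i$ is realisable one instead exhibits a generalised Hubie-pol'' is precisely the content of the lemma and is asserted without any mechanism; there is no reason offered why failure to simulate a Boolean gate forces Collapsibility. Separately, your treatment of $\ell=0$ is wrong: Case~1 of Proposition~\ref{prop:asymmetric} yields Collapsibility only under the additional hypothesis that the binary term $q$ is \emph{not} generated; if $q$ exists and there are no Class $(iii)$ coordinates, your $h$ is constant in $(x_3,x_4)$ on the three target tuples and the construction collapses, yet the lemma still demands an $r_4$ in that case.

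The paper avoids both problems by a different mechanism that never needs Boolean gates. It first observes that if some $\overline{z}'$ gives $f(0|1,\ldots,1|\overline{z}')=0$, then (combining with the hypothesised $\overline{z}$ giving output $1$) the doubled operation $f(s(v_1,v_1'),\ldots)$ is a generalised Hubie-pol and the algebra is Collapsible. Otherwise the set of words $\overline{y}$ over the Class $(ii)$ \emph{and} $(iii)$ coordinates with $f(0|\overline{y}|\overline{z})=0$ cannot be closed under $\vee$ (its join would be the all-ones word, contradicting the previous sentence), so the type-A maximality notion introduced just before the lemma produces $\overline{y}_1,\overline{y}_2$ with $f(0|\overline{y}_1|\overline{z}_1)=f(0|\overline{y}_2|\overline{z}_2)=0$ but $f(0|\vee(\overline{y}_1,\overline{y}_2)|\overline{z}_1)\neq 0$; collapsing coordinates according to where $\overline{y}_1$ and $\overline{y}_2$ agree or differ yields the $4$-ary operation directly, up to a permutation and a final composition with $s$. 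If you want to salvage your route, you would need to either carry out the $T_i$ case analysis in full or, more realistically, replace step three by this join-failure argument.
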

\begin{proof}
Recall $\exists \overline{z}$ so that $f(0|1,\ldots,1|\overline{z})=1$. Note that if exists $\overline{z}'$ over $\{0,1\}$ so that $f(0|1,\ldots,1|\overline{z}')=0$ then we have that $f(s(v_1,v'_1),\ldots,s(v_{k+\ell+1},v'_{k+\ell+1}))$ is a generalised Hubie-pol in both $11\ldots 11|00 \ldots 00|\widehat{z}$ and $11\ldots 11|00 \ldots 00|\widehat{z}'$, where we build widehat from overline by doubling each entry where it sits, and we become Collapsible. It therefore follows that there must exist distinct $\overline{y}_1$, $\overline{y}_2$, $\overline{z}_1$ and $\overline{z}_2$ (all over $\{0,1\}$) so that $f(0|\overline{y}_1|\overline{z}_1)=0$, $f(0|\overline{y}_2|\overline{z}_2)=0$ but $f(0|\vee(\overline{y}_1,\overline{y}_2)|\overline{z}_1)\neq 0$. By collapsing co-ordinates we get $f'$ so that
\[
\begin{array}{ccc}
0011 & f' & 0 \\
0101 &  \rightarrow & 0 \\
0111 & & \mbox{$0$ or $2$} \\
\end{array}
\]
The result follows by permuting co-ordinates, possibly in new combination through $s$ and the second co-ordinate.
\end{proof}

\subsubsection{The symmetric case: for every $\overline{z} \in \{0,1\}^*$ we have $f(0,\ldots,0|1,\ldots,1|\overline{z})=2$}

\begin{proposition}
\label{prop:symmetric}
Let $f$ be so that neither Class $(i)$ nor Class $(ii)$ is a singleton and so that for every $\overline{z} \in \{0,1\}^*$ we have $f(0,\ldots,0|1,\ldots,1|\overline{z})=2$. Then, either $f$ generates a binary idempotent operation with $01 \mapsto 0$ and $02 \mapsto 2$ or a binary idempotent operation with $01 \mapsto 1$ and $21 \mapsto 2$, or any algebra on $D$ containing $f$ and $s$ is Collapsible.
\end{proposition}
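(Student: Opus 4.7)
The plan is to mirror the structure of the proof of Proposition~\ref{prop:asymmetric}, but exploit symmetry in both the Class $(i)$ and Class $(ii)$ directions simultaneously. First, I would put $f$ into a normal form by expanding Class~$(ii) \cup (iii)$ as before, and dually expanding Class~$(i) \cup (iii)$, so that the left block, middle block, and right block of $f$ each consist of columns containing at least one $0$ (resp.\ $1$, resp.\ $1$) in the prescribed witnessing pattern. The key initial observation is that the hypothesis $f(0,\ldots,0\mid 1,\ldots,1\mid \overline{z})=2$ for all $\overline{z}\in\{0,1\}^{\ell}$ implies that the self-composition $f(s(x_1,x_1'),\ldots,s(x_n,x_n'))$ is a generalised Hubie-pol on tuples of the form $(0,0)^a(1,1)^b\widehat{z}$: the wrapping by $s$ lets us drive each pair of coordinates freely across $D$ while the outer $f$ is forced to $2$ on the corresponding unit pattern.

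Next I would dualise the $0\mapsto 2$ / $1\mapsto 2$ trick used in Proposition~\ref{prop:asymmetric}. Pick a witness $\overline{x}\in\{0,1\}^a$ for a Class~$(i)$ break, so that $f(\overline{x}\mid 0,\ldots,0\mid \overline{z})=1$ (such a witness exists since Class~$(i)$ is non-empty, by the same argument used in Lemma~\ref{lem:fun}). Replace every $1$ in $\overline{x}$ by $2$ to obtain $\overline{x}'\in\{0,2\}^a$; if some choice gives $f(\overline{x}'\mid 0,\ldots,0\mid \overline{z})=2$, then collapsing coordinates according to the pattern of $\overline{x}$ versus $\overline{x}'$ produces a binary idempotent operation with $01\mapsto 1$ and $21\mapsto 2$, settling the second disjunct. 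Dually, a witness $\overline{y}$ for a Class~$(ii)$ break and the substitution $0\mapsto 2$ in $\overline{y}$ either yields a binary operation with $01\mapsto 0$ and $02\mapsto 2$, or is forced to take value $0$ everywhere.

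If neither binary operation is generated, we are left with the rigidity that the $2$-substitutions still return the original outputs. Combined with the generalised Hubie-pol obtained from the pure pattern $(0\ldots 0\mid 1\ldots 1\mid \overline{z})$, I would now case-split exactly as in Cases~1, 2a, and 2b of Proposition~\ref{prop:asymmetric}: look at the value of $f$ on inputs where the right block is uniformly $1$ or uniformly $2$, and use $s$ to interpolate between constant-$0$, constant-$1$ and constant-$2$ adversaries, bootstrapping adversaries of the form $(D^{m},\{c\}^{M-m})$ for each $c\in\{0,1,2\}$ together with their coordinate permutations, and hence the full adversary. The main obstacle is exactly this bootstrap phase: when both Classes are non-singleton, the bookkeeping for building rectangular adversaries involves two independent generalised Hubie-pols that must be composed through $s$ rather than just one, so one has to verify that the case split on $f(0,\ldots,0\mid 2,\ldots,2\mid 2,\ldots,2)$ and its middle-block dual is genuinely exhaustive and that the resulting adversary generation yields $(m-1)$-Collapsibility in every remaining branch.
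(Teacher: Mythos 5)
Your overall architecture matches the paper's: normalise $f$ into witness rows for the two classes, run a ``substitute $2$ and either extract a binary operation or assume rigidity'' dichotomy, use the hypothesis $f(0,\ldots,0\mid 1,\ldots,1\mid\overline{z})=2$ to manufacture $\{2\}$-tailed adversaries from $\{0\}$- and $\{1\}$-tailed ones, and then case-split on the all-$2$ evaluations and bootstrap the full adversary as in Cases 2a/2b of Proposition~\ref{prop:asymmetric}. However, the central substitution step is wrong in two concrete ways. First, your Class~$(i)$ witness $f(\overline{x}\mid 0,\ldots,0\mid\overline{z})=1$ with the middle block all $0$ cannot exist: a Class~$(ii)$ coordinate equal to $0$ in a tuple with output $1$ would itself be a $0\to 1$ break for that coordinate, contradicting the definition of Class~$(ii)$. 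Every output-$1$ witness row necessarily has the entire Class~$(ii)$ block equal to $1$ (this is exactly why the paper writes the top block as $(x^1_i,\ldots,x^k_i\mid 1,\ldots,1\mid w^1_i,\ldots,w^\ell_i)\mapsto 1$).

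Second, your substitution directions are reversed, and this is not cosmetic. To obtain the operation with $01\mapsto 1$ and $21\mapsto 2$ you must replace the $0$s by $2$s in an output-$1$ row: the arguments then all lie in the subalgebra $\{1,2\}$, so the output is forced into $\{1,2\}$, giving the clean dichotomy ``still $1$'' versus ``$2$, whence collapsing yields $b(0,1)=1$, $b(2,1)=2$''. Replacing the $1$s by $2$s instead puts every argument into the subalgebra $\{0,2\}$, so the output can never remain $1$ (your ``rigidity'' branch is vacuous), and in the favourable case collapsing yields $b(0,1)=1$, $b(0,2)=2$, i.e.\ the operation $01\mapsto 1$, $02\mapsto 2$, which is not one of the two operations named in the statement (and not the one needed downstream in the Zhuk Condition and Lemma~\ref{lem:Dmitriy-long}). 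The dual error occurs in your Class~$(ii)$ step, where substituting $0\mapsto 2$ in an output-$0$ row would give $01\mapsto 0$, $21\mapsto{?}$ rather than $01\mapsto 0$, $02\mapsto 2$; the paper substitutes the $1$s by $2$s there. Both errors are repairable by swapping the directions (and correcting the witness normal form), after which your plan coincides with the paper's proof; but as written the argument does not establish the stated disjunction.
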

\begin{proof}
Let us consider the general form of $f$,
\[
\begin{array}{ccc|ccc|ccccc}
x^1_1 & \cdots & x^k_1 & 1 & \cdots & 1 & w_1^1 & \cdots & w_1^\ell & \ & 1 \\
\vdots & \vdots & \vdots & \vdots & \vdots & \vdots & \vdots & \vdots & \vdots & \mapsto & \vdots \\
x^1_m & \cdots & x^k_m & 1 & \cdots & 1 & w_m^1 & \cdots & w_m^\ell & \ & 1 \\
\\
0 & \cdots & 0 & y^1_1 & \cdots & y^\kappa_1 & z_1^1 & \cdots & z_1^\ell & \ & 0 \\
\vdots & \vdots & \vdots & \vdots & \vdots & \vdots & \vdots & \vdots & \vdots & \mapsto & \vdots \\
0 & \cdots & 0 & y^1_\mu & \cdots & y^\kappa_\mu & z_\mu^1 & \cdots & z_\mu^\ell & \ & 0 \\
\end{array}
\]
where the $x$s, $y$s, $z$s and $w$s are from $\{0,1\}$ and we can assume that each $(x^i_1,\ldots,x^i_m)$ and $(w^i_1,\ldots,w^i_m)$ contain at least one $0$ and $(y^i_1,\ldots,y^i_\mu)$ and $(z^i_1,\ldots,$ $z^i_\mu)$ contains at least one $1$. As in the previous proof we can make an assumption that each $(x^1_i, \cdots, x^k_i , 1, \ldots , 1 , w_i^1 , \ldots , w_i^\ell)$, with $0$ substituted for $2$, still maps under $f$ to $1$. Similarly,  each $(0,\ldots,0,y^1_i, \cdots, y^\kappa_i, z_i^1 , \ldots , z_i^\ell)$, with $1$ substituted for $2$, still maps under $f$ to $0$.

Since for each  $\overline{z} \in \{0,1\}^*$ we have $f(0,\ldots,0|1,\ldots,1|\overline{z})=2$ we can deduce that from the adversaries $(D^{(k+\kappa+\ell-1)i},\{0\}^{M-(k+\kappa+\ell-1)i})$ and  $(D^{(k+\kappa+\ell-1)i},$ $\{1\}^{M-(k+\kappa+\ell-1)i})$, adversaries of the form $(D^{(k+\kappa+\ell)i},\{2\}^{M-(k+\kappa+\ell)i})$, and all co-ordinate permutations of this. 

We now make some case distinctions based on whether $f(0,\ldots,0|2,\ldots,2|$ $2,\ldots,2)=2$ or $0$ and $f(2,\ldots,2|1,\ldots,1|2,\ldots,2)=2$ or $1$ (note that possibly Class $(iii)$ is empty). However, the method for building the full adversary from certain Collapsings proceeds very similarly to Cases 2a and 2b from Proposition~\ref{prop:symmetric}. We give an example below as to how, in the case $f(0,\ldots,0|2,\ldots,2|2,\ldots,2)=2$, we mimic Case 2a from Proposition~\ref{prop:asymmetric} to derive a function from this that builds, from adversaries of the form $(D^{r+m-1}, \{0\}^{M-(r+m-1)})$ and $(D^{r+2m-1},\{2\}^{M-(r+m-1)})$, we can build $(D^{r+m},\{0,2\}^{M-m-r})$. For pedagogic reasons we prefer to view some $D$s as $\{0,2\}$s,
\[
\begin{array}{cccc|cccccc}
\{0\} & D & \cdots & D & D & D & \cdots & D & & D \\
D & \{a\} & \cdots & D & D & D & \cdots & D & & D \\
\vdots & \vdots & \cdots & \vdots & \vdots & \vdots & \cdots & \vdots  & & D \\
D & D & \cdots & \{0\} & D & \cdots & D & D & & D \\
\{0\} & \{0\} & \cdots & \{0\}  & \{2\}  & \{0,2\} & \cdots & \{0,2\} & & \{0,2\} \\
\{0\}& \{0\}& \cdots &\{0\}& \{0,2\} & \{2\} & \cdots &\{0,2\} & &\{0,2\}\\
\vdots & \vdots & \vdots & \vdots & \vdots & \vdots & \cdots & \vdots  & \mapsto &\{0,2\} \\
\{0\}&\{0\} & \cdots &\{0\} &\{0,2\}& \{0,2\}& \cdots & \{2\} & & \{0,2\} \\
\{0\} & \{0\}& \cdots & \{0\}& \{2\}  &  \{2\} & \cdots & \{2\}  & & \{2\}  \\
\vdots & \vdots & \vdots & \vdots & \vdots & \vdots & \cdots & \vdots & &  \{2\}  \\
\{0\}&\{0\}& \cdots & \{0\}   &  \{2\} & \cdots & \{2\}  & &  \{2\} \\
\end{array}
\]
\end{proof}

\begin{lemma}
\label{lem:r4-symmetric}
Let $f$ be so that neither Class $(i)$ nor Class $(ii)$ is a singleton and so that for every $\overline{z} \in \{0,1\}^*$ we have $f(0|1,\ldots,1|\overline{z})=2$. Any algebra over $D$ containing $f$ is either Collapsible or contains a $4$-ary operations $r^a_4$ and $r^b_4$ with properties
\[
\begin{array}{ccc}
\begin{array}{ccc}
0101 & r^a_4 & 0 \\
0110 & \rightarrow & 0 \\
0111 & & 2 \\
\end{array}
& \ \ \ \ \ \  \mbox{\textbf{and}} \ \ \ \ \ \ &
\begin{array}{ccc}
0101 & r^b_4 & 1 \\
0110 & \rightarrow & 1 \\
0100 & & 2 \\
\end{array}
\end{array}
\]
\end{lemma}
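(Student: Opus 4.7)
The plan is to derive $r_4^a$ and $r_4^b$ independently by mimicking the argument of Lemma~\ref{lem:r4-asymmetric} twice — once from the Class $(ii)$ witnesses (giving $r_4^a$) and once, dually, from the Class $(i)$ witnesses (giving $r_4^b$) — while using the symmetric hypothesis $f(0,\ldots,0|1,\ldots,1|\overline{z})=2$ in place of the asymmetric starting point.

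First I would set up the Class $(ii)$ witnesses from the tableau displayed in Proposition~\ref{prop:symmetric}: every row $(0,\ldots,0|\overline{y}_j|\overline{z}_j)$ with each $\overline{y}_j$ containing at least one $1$ satisfies $f(0,\ldots,0|\overline{y}_j|\overline{z}_j)=0$. Since $f(0,\ldots,0|1,\ldots,1|\overline{z})=2$ for every $\overline{z}$, but each individual row evaluates to $0$, repeatedly taking $\vee$ componentwise of the $\overline{y}_j$'s (starting from some row and eventually reaching $(1,\ldots,1)$ — possible as every row has a $1$ in the Class $(i)$-positions of the table) must at some point flip the output away from $0$. This yields two rows $\overline{y}_1 \ne \overline{y}_2$, tails $\overline{z}_1, \overline{z}_2$, and a tail $\overline{z}'$ with $f(0,\ldots,0|\overline{y}_1|\overline{z}_1)=0$, $f(0,\ldots,0|\overline{y}_2|\overline{z}_2)=0$ but $f(0,\ldots,0|\vee(\overline{y}_1,\overline{y}_2)|\overline{z}')\neq 0$; if $\ne 0$ happens to be $1$ rather than $2$ one instead uses the generalized Hubie-pol argument (exactly as at the opening of Lemma~\ref{lem:r4-asymmetric}, composing with $s$) to conclude Collapsibility. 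Collapsing coordinates in the usual way then gives an $f'$ with $0011\mapsto 0$, $0101\mapsto 0$, $0111\mapsto 2$, and permuting coordinates (possibly recombining with $s$ as in Lemma~\ref{lem:r4-asymmetric}) delivers $r_4^a$.

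Next I would carry out the dual construction on the top block of the tableau to obtain $r_4^b$. Here every row $(\overline{x}_j|1,\ldots,1|\overline{w}_j)$ with each $\overline{x}_j$ containing at least one $0$ evaluates to $1$, and the starting value $f(0,\ldots,0|1,\ldots,1|\overline{z})=2 \ne 1$ allows us to pick a pair of rows $\overline{x}_1\ne\overline{x}_2$ such that $f(\wedge(\overline{x}_1,\overline{x}_2)|1,\ldots,1|\overline{w}')\neq 1$; if this value is $0$ we again fall into the generalized Hubie-pol / Collapsibility alternative, so we may assume it is $2$. Collapsing coordinates and permuting gives a $4$-ary operation with $0101\mapsto 1$, $0110\mapsto 1$, $0100\mapsto 2$, which is $r_4^b$.

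The main obstacles I anticipate are essentially bookkeeping rather than conceptual: (a) making sure that the intermediate output value when one progressively $\vee$'s (resp.\ $\wedge$'s) the rows is always either $0$ (resp.\ $1$), $2$ or the "bad" value, so that the case analysis cleanly splits into "get the $r_4$ operation" vs. "use a generalized Hubie-pol in both $00|1\ldots 1|\widehat{z}$ and $00|1\ldots 1|\widehat{z}'$ to obtain Collapsibility" (mirroring the opening move of Lemma~\ref{lem:r4-asymmetric}); and (b) handling the tail coordinates $\overline{z},\overline{w}$ when collapsing — precisely the step that in Lemma~\ref{lem:r4-asymmetric} required some care with the second coordinate and $s$. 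Once both obstacles are dispatched in parallel for the Class $(i)$ and Class $(ii)$ sides, the two derivations are completely symmetric and yield $r_4^a$ and $r_4^b$ simultaneously, as required.
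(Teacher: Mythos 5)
Your proposal matches the paper's approach exactly: the paper's own proof of this lemma is the single sentence that it ``proceeds exactly as in Lemma~\ref{lem:r4-asymmetric}'', and your sketch is precisely that argument run twice --- once on the Class $(ii)$ block with $\vee$-maximality (Case A) to obtain $r^a_4$ and once dually on the Class $(i)$ block with $\wedge$-maximality (Case B) to obtain $r^b_4$ --- with the symmetric hypothesis $f(0,\ldots,0|1,\ldots,1|\overline{z})=2$ supplying the non-$0$ (resp.\ non-$1$) endpoint that forces the flip, and the generalised Hubie-pol alternative yielding Collapsibility otherwise. No substantive divergence from the paper.
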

\begin{proof}
The proof proceeds exactly as in Lemma~\ref{lem:r4-asymmetric}.
\end{proof}

An important special case  of the previous lemma, which is satisfied by Chen's $(\{0,1,2\};r,s)$ is as follows.

\vspace{0.2cm}
\noindent \textbf{Zhuk Condition}. $\mathbb{A}$ has idempotent term operations, binary $p$ and ternary operation $r_3$, so that either
\[
\begin{array}{c}
\left(
  \begin{array}{ccc}
    \begin{array}{ccc}
    001 & r_3 & 0 \\
    010 & \rightarrow & 0 \\
    011 & & 2 \\
    \end{array}
  & \ \ \ \ \ \  \mbox{\textbf{and}} \ \ \ \ \ \ &
    \begin{array}{ccc}
    01& p & 0 \\
    02 & \rightarrow & 2 \\
    \end{array}
  \end{array}
\right)
\\
\mbox{\textbf{or}} \\
\left(
\begin{array}{ccc}
   \begin{array}{ccc}
   101 & r_3 & 1 \\
  110  & \rightarrow & 1 \\
  100 & & 2 \\
   \end{array}
& \ \ \ \ \ \  \mbox{\textbf{and}} \ \ \ \ \ \ &
    \begin{array}{ccc}
    01 & p & 1 \\
    21 & \rightarrow & 2 \\
    \end{array}
  \end{array}
\right)
\end{array}
\]

\subsection{About essential relations}

We assume that all relations are defined on the finite set $\{0,1,2\}$.
A relation $\rho$ is called \textit{essential} if
it cannot be represented as a conjunction of relations with smaller arities.
A tuple $(a_{1},a_{2},\ldots,a_{n})$ is called \textit{essential for a relation $\rho$}
if $(a_{1},a_{2},\ldots,a_{n})\notin\rho$ and
for every $i\in \{1,2,\ldots,n\}$ there exists $b\in A$ such that 
$(a_{1},\ldots,a_{i-1},b,a_{i+1},\ldots,a_{n})\in\rho.$ Let us define a relation $\tilde{\rho}$ for every relation $\rho \subseteq D^n$. Put $\sigma_i(x_1,\ldots,x_{i-1},x_{i+1},\ldots,x_{n}) := \exists y \ \rho(x_1,\ldots,x_i,y,x_{i+1},\ldots,x_n)$
and let 
\[ \tilde{\rho}(x_1,\ldots,x_n) := \sigma_1(x_2,x_3,\ldots,x_n) \wedge  \sigma_2(x_1,x_3,\ldots,x_n) \wedge \ldots \wedge  \sigma_1(x_1,x_2,\ldots,x_{n-1}). \]
\begin{lemma}\label{sushnabor}
A relation $\rho$ is essential iff there exists an essential tuple for $\rho$.
\end{lemma}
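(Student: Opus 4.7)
The plan is to go via the natural intermediate claim that $\rho$ is essential if and only if $\rho \neq \tilde{\rho}$, and then observe that tuples in $\tilde{\rho} \setminus \rho$ are precisely the essential tuples for $\rho$. Indeed, unwrapping the definition, a tuple $(a_1,\ldots,a_n)$ lies in $\tilde{\rho}\setminus\rho$ exactly when $(a_1,\ldots,a_n)\notin\rho$ but for each $i$ there exists some $b\in A$ with $(a_1,\ldots,a_{i-1},b,a_{i+1},\ldots,a_n)\in\rho$, which is verbatim the definition of an essential tuple. Note also that $\rho\subseteq\tilde{\rho}$ is immediate, since any tuple of $\rho$ witnesses each $\sigma_i$.

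First, I would prove the easy direction. Since $\tilde{\rho}$ is by construction a conjunction of relations $\sigma_i$, each of which is essentially of arity $n-1$ (it does not depend on the $i$-th coordinate), if $\rho = \tilde{\rho}$ then $\rho$ is expressible as such a conjunction and so is not essential. Contrapositively, if $\rho$ is essential then $\rho \neq \tilde{\rho}$, and by the observation above there is an essential tuple.

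For the other direction, suppose $\rho$ is not essential, so $\rho = \bigwedge_{\ell} R_\ell$ where each $R_\ell$ has arity less than $n$. Each $R_\ell$, viewed as an $n$-ary relation, is independent of some coordinate $j_\ell$. The key sublemma is that $\sigma_{j_\ell} \subseteq R_\ell$ (as $n$-ary relations): given $t \in \sigma_{j_\ell}$, by definition there is $y$ with $t[j_\ell := y] \in \rho \subseteq R_\ell$, and since $R_\ell$ does not depend on coordinate $j_\ell$, we obtain $t \in R_\ell$. Therefore $\tilde{\rho} = \bigcap_j \sigma_j \subseteq \bigcap_\ell \sigma_{j_\ell} \subseteq \bigcap_\ell R_\ell = \rho$, giving $\tilde{\rho} = \rho$ and hence no essential tuple exists.

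There is no genuine obstacle here; the argument is essentially definitional. The only point requiring a moment's care is the sublemma that $\sigma_{j_\ell} \subseteq R_\ell$, which hinges on the correct interpretation of a lower-arity relation as an $n$-ary one that is independent of the missing coordinate, together with the definition of $\sigma_{j_\ell}$ as an existential projection.
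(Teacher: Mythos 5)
Your proof is correct and follows essentially the same route as the paper, which argues (very tersely) via the observation that $\rho$ is non-essential exactly when $\rho=\tilde{\rho}$ and that essential tuples are precisely the elements of $\tilde{\rho}\setminus\rho$. You have merely supplied the details (in particular the sublemma $\sigma_{j_\ell}\subseteq R_\ell$) that the paper's two-line proof leaves implicit.
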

\begin{proof}
(Forwards.) By contraposition, if $\rho$ is not essential, then $\tilde{\rho}$ is equivalent to $\rho$, and there can not be an essential tuple.

(Backwards.) An essential tuple witnesses that a relation is essential.
\end{proof}
\begin{lemma}
\label{lem:Dmitriy-micro}
Suppose $(2,2,x_3,\ldots,x_n)$ is an essential tuple for $\rho$. Then $\rho$ is not preserved by $s$.
\end{lemma}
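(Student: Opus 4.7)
The plan is a short direct argument exploiting the definition of an essential tuple together with the specific action of the semilattice-without-unit $s$, which satisfies $s(a,a)=a$ and $s(a,b)=2$ whenever $a\neq b$.

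First I would record what the essentiality hypothesis gives at coordinates $1$ and $2$. Since $(2,2,x_3,\ldots,x_n)\notin \rho$ but the tuple is essential for $\rho$, there exist elements $b_1,b_2\in D$ such that
\[ (b_1,2,x_3,\ldots,x_n)\in\rho \qquad \text{and} \qquad (2,b_2,x_3,\ldots,x_n)\in\rho. \]
Moreover $b_1\neq 2$ and $b_2\neq 2$, for otherwise one of these tuples would coincide with $(2,2,x_3,\ldots,x_n)$, contradicting the fact that the latter is not in $\rho$. Hence $b_1,b_2\in\{0,1\}$.

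Next I would apply $s$ componentwise to the two tuples above. Because $b_1\neq 2$, we have $s(b_1,2)=2$; because $b_2\neq 2$, we have $s(2,b_2)=2$; and $s(x_i,x_i)=x_i$ for $i\geq 3$. Therefore
\[ s\bigl((b_1,2,x_3,\ldots,x_n),(2,b_2,x_3,\ldots,x_n)\bigr)=(2,2,x_3,\ldots,x_n). \]
If $\rho$ were preserved by $s$, the right-hand side would have to lie in $\rho$; but by assumption it does not. This contradiction shows $s\notin\Pol(\{\rho\})$, completing the proof.

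The argument is entirely elementary; there is no real obstacle, since the whole content is the observation that the two witnessing tuples at positions $1$ and $2$ must use values different from $2$ and that $s$ sends any mismatched pair to $2$. No induction or case analysis beyond this is needed.
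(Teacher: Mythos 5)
Your proof is correct and follows essentially the same route as the paper's: both take the two witnessing tuples from essentiality at coordinates $1$ and $2$ and apply $s$ componentwise to recover the excluded tuple $(2,2,x_3,\ldots,x_n)$. You in fact spell out the small point the paper leaves implicit, namely that the witnessing values $b_1,b_2$ must differ from $2$ so that $s$ sends each mismatched pair to $2$.
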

\begin{proof}
Since $(2,2,x_3,\ldots,x_n)$ is an essential tuple, $(x_1,c,x_3,\ldots,x_n)$ and $(c,x_2,x_3,$ $\ldots,x_n)$ are in $\rho$ for some $x_1$ and $x_2$. But applying $s$ now gives the contradiction.
\end{proof}
For a tuple $\mathbf{y}$, we denote its $i$th co-ordinate by $\mathbf{y}(i)$. For $n\geq 3$, we define the arity $n+1$ idempotent operation $f^a_n$ as follows
\[
\begin{array}{c}
f^a_n(0,0\ldots,0,0)=0 \\
f^a_n(1,1,\ldots,1,1)=1 \\
f^a_n(1,0,\ldots,0,0)=0 \\
f^a_n(0,1,\ldots,0,0)=0 \\
\vdots \\
f^a_n(0,0,\ldots,1,0)=0 \\
f^a_n(0,0,\ldots,0,1)=0 \\
\mbox{else $2$}
\end{array}
\]
We define $f^b_n$ similarly with $0$ and $1$ swapped. These functions are very similar to partial near-unanimity functions.
\begin{lemma}
\label{lem:Dmitriy}
Suppose $\mathbb{A}$ is a Gap Algebra, that is not $\alpha\beta$-projective, so that $\mathbb{A}$ satisfies the Zhuk Condition. Then either 
\begin{itemize}
\item any relation $\rho \in \mathrm{Inv}(\mathbb{A})$ of arity $h<n+1$ is preserved by $f^a_n$, \textbf{or} 
\item any relation $\rho \in \mathrm{Inv}(\mathbb{A})$ of arity $h<n+1$ is preserved by $f^b_n$.
\end{itemize}
\end{lemma}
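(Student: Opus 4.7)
The plan is to argue by contradiction, splitting on the two sub-cases of the Zhuk Condition; by the symmetry that swaps the roles of $0$ and $1$, it suffices to treat the first sub-case and show that every $\rho \in \mathrm{Inv}(\mathbb{A})$ of arity $h \leq n$ is preserved by $f^a_n$. Suppose otherwise and fix tuples $\mathbf{y}_1, \ldots, \mathbf{y}_{n+1} \in \rho$ whose coordinate-wise image under $f^a_n$ is some $\mathbf{a} \notin \rho$. After routinely reducing to the essential case (via Lemma~\ref{sushnabor}) one may assume $\rho$ essential and $\mathbf{a}$ an essential tuple for $\rho$. Since $s$ preserves $\rho$, Lemma~\ref{lem:Dmitriy-micro} and its coordinate-permuted variants force $\mathbf{a}$ to contain at most one coordinate equal to $2$.

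Next I would analyse the $(n+1) \times h$ matrix whose rows are the $\mathbf{y}_j$. From the definition of $f^a_n$, each column at a coordinate with $a_i = 0$ is either all-zero or carries exactly one $1$; each column with $a_i = 1$ is all-one; and the at most one column at some index $i^*$ with $a_{i^*} = 2$ is constrained only not to match a listed input of $f^a_n$. A pigeonhole count -- at most $h - 1 \leq n - 1$ columns are of the ``single-$1$'' type and each marks one row -- yields at least two rows that are simultaneously $0$ in every $a_i = 0$ column and $1$ in every $a_i = 1$ column; call such rows \emph{good}. If $\mathbf{a}$ contains no $2$, a good row equals $\mathbf{a}$, contradicting $\mathbf{a} \notin \rho$ directly.

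Otherwise one must derive the value $2$ at coordinate $i^*$ while holding the other coordinates fixed at their targets. The plan is to apply $r_3$ coordinate-wise to a triple $(\mathbf{y}_j, \mathbf{y}_k, \mathbf{y}_\ell)$ chosen so that: at $a_i = 0$ coordinates the Zhuk identities $r_3(000) = r_3(001) = r_3(010) = 0$ (together with the at-most-one-$1$ property) yield $0$; at $a_i = 1$ coordinates idempotency yields $1$; and at $i^*$ the identity $r_3(011) = 2$ yields $2$, provided one has a good $\mathbf{y}_j$ with $\mathbf{y}_j(i^*) = 0$ together with two rows $\mathbf{y}_k, \mathbf{y}_\ell$ valued $1$ at $i^*$. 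The residual sub-configurations (the $i^*$-column carries a $2$, or every good row takes value $1$ at $i^*$) are handled by pre-composing with $s$ to neutralise stray $1$s and then using $p$ via the identity $p(0,2) = 2$ to lift the $i^*$ coordinate without disturbing the others.

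The main obstacle is verifying that some such triple (or iterated composition of $r_3$, $p$, and $s$) exists in every sub-configuration of the $i^*$-column; the strict inequality $h < n+1$ is precisely what furnishes the spare good rows needed to accommodate each sub-case. The dual argument, exchanging the roles of $0$ and $1$ throughout, establishes that when the second sub-case of the Zhuk Condition holds every such $\rho$ is preserved by $f^b_n$ instead.
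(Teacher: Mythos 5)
Your overall architecture matches the paper's: reduce to an essential tuple (the paper does this by induction on the arity $h$, using the inductive hypothesis on the projections $\sigma_i$ in the definition of $\widetilde{\rho}$ -- your ``routine'' reduction via Lemma~\ref{sushnabor} is hiding exactly that induction, so do set it up explicitly), invoke Lemma~\ref{lem:Dmitriy-micro} to get at most one $2$ in the target tuple, pigeonhole to obtain ``good'' rows, and use $r_3$ on a triple $(0,1,1)$ at the distinguished coordinate. That last step is sound and is precisely the paper's third subcase.

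The genuine gap is in your residual configurations. Your toolkit there is only $s$ and the Zhuk operation $p$, but $p$ is constrained only on $(0,1)\mapsto 0$ and $(0,2)\mapsto 2$; the values $p(1,0)$, $p(1,2)$ and $p(2,\cdot)$ are unconstrained, and $s$ sends \emph{every} off-diagonal pair to $2$, so ``pre-composing with $s$ to neutralise stray $1$s'' also converts the pairs $(0,1)$ at the $a_i=0$ coordinates into unwanted $2$s. Concretely, when every good row takes the value $1$ at $i^*$ and the witnessing deviation in that column is a $0$ (or a $2$), you need a binary term operation with $(1,0)\mapsto 2$ (respectively $(1,2)\mapsto 2$) \emph{and} $(0,1)\mapsto 0$, and neither $p$ nor $s$ nor any composition of them with $r_3$ is forced to provide this: for instance, with $h=3$, $n=3$, target $(2,0,0)$ and rows $(0,1,0),(1,0,1),(1,0,0)$, an algebra in which $p(1,0)=1$ defeats your derivation. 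The paper closes exactly these configurations with the operations $p_1,p_2$ of Lemma~\ref{lem:fun} (e.g.\ $p_2(0,1)=0$, $p_2(1,0)=p_2(1,2)=2$), whose existence is guaranteed because a Gap Algebra is by definition not Collapsible. Your proof needs to import that lemma (or an equivalent supply of operations) to be complete.
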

\begin{proof}
Suppose \mbox{w.l.o.g.} that the Zhuk Condition is in the first regime and has idempotent term operations, binary $p$ and ternary operation $r_3$, so that
\[
  \begin{array}{ccc}
    \begin{array}{ccc}
    001 & r_3 & 0 \\
    010 & \rightarrow & 0 \\
    011 & & 2 \\
    \end{array}
  & \ \ \ \ \ \  \mbox{\textbf{and}} \ \ \ \ \ \ &
    \begin{array}{ccc}
   01 & p & 0 \\
    02 & \rightarrow & 2 \\
    \end{array}
  \end{array}
\]
We prove this statement for a fixed $n$ by induction on $h$. For $h = 1$ we just need to
check that $f_n:=f^a_n$ preserves the unary relations $\{0,2\}$ and $\{1,2\}$.

Assume that $\rho$ is not preserved by $f_n$, then there exist tuples $\mathbf{y}_1,\ldots,\mathbf{y}_{n+1} \in \rho$ such that $f_n(\mathbf{y}_1,\ldots,\mathbf{y}_{n+1})=\gamma \notin \rho$. We consider a matrix whose columns are $\mathbf{y}_1,\ldots,\mathbf{y}_{n+1}$. Let the rows of this matrix be $\mathbf{x}_1,\ldots,\mathbf{x}_h$.

By the inductive assumption every $\sigma_i$ from the definition of $\widetilde{\rho}$ is preserved by $f_n$, which means that $\widetilde{\rho}$ is preserved by $f_n$, which means that $\gamma \notin \rho$ and $\gamma$ is an essential tuple for $\rho$.

We consider two cases. First, assume that $\gamma$ doesn't contain $2$. Then it follows from the definition that every $\mathbf{x}_i$ contains at most one element that differs from $\gamma(i)$. Since $n+1>h$, there exists $i \in \{1, 2, \ldots , n + 1\}$ such that $\mathbf{y}_i = \gamma$. This contradicts the fact that $\gamma \notin \rho$.

Second, assume that $\gamma$ contains $2$. Then by Lemma~\ref{lem:Dmitriy-micro}, $\gamma$ contains exactly one $2$. \mbox{W.l.o.g.} we assume that $\gamma(1) = 2$. It follows from the definition of $f_n$ that $\mathbf{x}_i$ contains at most one element that differs from $\gamma(i)$ for every $i \in \{2, 3, \ldots , h\}$. Hence, since $n+1>h$, for some $k \in \{1, 2, \ldots , n+ 1\}$ we have $\mathbf{y}_k(i) = \gamma(i)$ for every $i \in \{2, 3, \ldots , h\}$. Since $f_n(\mathbf{x}_1) = 2$, we have one of three subcases. First subcase, $\mathbf{x}_1(j) = 2$ for some $j$. We need one of the properties
\[
\begin{array}{cc|c}
\mathbf{y}_k & \mathbf{y}_j & \gamma \\
\hline
0& 2 & 2 \\
0 & 1 & 0 \\
\end{array}
\mbox{ \ \ \ \ \ \ \ \ \ \ \ \ \ \ \ \ \ \ }
\begin{array}{cc|c}
\mathbf{y}_k & \mathbf{y}_j & \gamma \\
\hline
1 & 2 & 2 \\
0 & 1 & 0 \\
\end{array}
\]
and we can see that the functions from Lemma~\ref{lem:fun} or the definition of the Zhuk Condition suffice, which contradicts our assumptions.

Second subcase, $\mathbf{y}_k(1) = 1, \mathbf{y}_m(1) = 0$ for some $m \in \{1, 2, \ldots , n + 1\}$. We need the property 
\[
\begin{array}{cc|c}
\mathbf{y}_k & \mathbf{y}_m & \gamma \\
\hline
1 & 0 & 2 \\
0 & 1 & 0 \\
\end{array}
\]
can check that a function from Lemma~\ref{lem:fun} suffices, which contradicts our assumptions.

Third subcase, $\mathbf{y}_k(1) = 0, \mathbf{y}_m(1) = 1$ and $\mathbf{y}_l(1) = 1$ for $m, l \in \{1, 2, \ldots , n + 1\}\setminus \{k\}$, $m \neq l$. We need the property
\[
\begin{array}{ccc|c}
\mathbf{y}_k & \mathbf{y}_m & \mathbf{y}_l & \gamma \\
\hline
0 & 1 & 1 & 2 \\
0 & 0 & 1 & 0 \\
0 & 1 & 0 & 0 \\
\end{array}
\]
and we can check that the $r_3$ from the Zhuk Condition suffices, which contradicts our assumptions. This completes the proof.
\end{proof}
\begin{corollary}
\label{cor:Dmitriy}
Suppose $\mathbb{A}$ is a Gap Algebra, that is not $\alpha\beta$-projective so that $\mathbb{A}$ satisfies the Zhuk Condition. Then, for every finite subset of $\Delta$ of Inv$(\mathbb{A})$, Pol$(\Delta)$ is Collapsible.
\end{corollary}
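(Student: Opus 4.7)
\textbf{Proof plan for Corollary~\ref{cor:Dmitriy}.} The plan is to derive Collapsibility of $\mathrm{Pol}(\Delta)$ by extracting a high-arity partial near-unanimity polymorphism from Lemma~\ref{lem:Dmitriy}, and then showing that this operation, together with the ambient operations guaranteed by the Zhuk Condition, is enough to witness Collapsibility of~$\mathrm{Pol}(\Delta)$.

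First, let $h$ denote the maximum arity of any relation in $\Delta$, and set $n := \maxi(h,3)$. Since every relation in $\Delta$ has arity at most $h \le n < n+1$, Lemma~\ref{lem:Dmitriy} applies to give either $f^a_n \in \mathrm{Pol}(\Delta)$ or $f^b_n \in \mathrm{Pol}(\Delta)$; without loss of generality take the former. This $f^a_n$ is the ``partial near-unanimity'' witness we will use, and it comes free alongside $s, p, r_3 \in \mathrm{Pol}(\Delta)$ inherited from $\mathrm{Pol}(\mathrm{Inv}(\mathbb{A}))$ via the Zhuk Condition.

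Second, we argue that $f^a_n$ witnesses $n$-Collapsibility of $\mathrm{Pol}(\Delta)$. By Corollary~\ref{MainResult:InAbstractoLevavi}, it suffices to show that for every $m$, the Collapsibility source adversaries $\Upsilon_{m,n,B}$ generate $\mathrm{Pol}(\Delta)^m$ for a suitable source $B$. On inputs from $\{0,1\}^{n+1}$ the operation $f^a_n$ behaves as a near-unanimity towards~$0$, returning the near-majority value when all but at most one argument is~$0$; on all other inputs it returns~$2$. Because $2 \in \alpha \cap \beta$ sits in both subalgebras, these ``sink'' outputs can be post-processed using $s$ (which forces disagreements to $2$) together with the binary/ternary Zhuk operations $p$ and $r_3$ (which translate $2$ back to the required value of $0$ or $1$). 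Following the classical near-unanimity-yields-Collapsibility template of \cite{hubie-sicomp}, we route the asymmetric behaviour of $f^a_n$ through these auxiliary operations to reactively compose any target tuple of $A^m$ from Collapsibility generators.

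The main obstacle is precisely this asymmetry: a true $(n{+}1)$-ary near-unanimity would immediately yield $n$-Collapsibility by the standard reactive composition argument, whereas $f^a_n$ enforces the near-unanimity identity only on one side of $\{0,1\}$ and defaults to $2$ on the three-valued cases. The reconciliation relies on the defect outputs of $2$ lying in $\alpha \cap \beta$, allowing them to be corrected via $s, p, r_3$ into any desired element of $\{0,1,2\}$ on each coordinate. Once this correction scheme is in place, the reactive composition framework of Section~\ref{sec:games-adversaries-reactivecomposition} delivers $n$-Collapsibility of~$\mathrm{Pol}(\Delta)$ and hence the corollary.
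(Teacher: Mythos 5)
Your first step is exactly the paper's: take $n$ at least the maximum arity occurring in $\Delta$, apply Lemma~\ref{lem:Dmitriy}, and conclude that $f^a_n$ or $f^b_n$ lies in $\mathrm{Pol}(\Delta)$. But the second step contains a genuine gap, and the gap is precisely the one observation the paper's (one-line) proof rests on: $f^a_n$ is already a \emph{Hubie-pol in $\{1\}$} (and $f^b_n$ a Hubie-pol in $\{0\}$). Concretely, pin any single coordinate of $f^a_n$ to the value $1$ and let the remaining coordinates range over $D$: the all-ones tuple returns $1$, the tuple with a single $1$ in the pinned position returns $0$, and essentially any other tuple falls into the ``else'' clause and returns $2$, so the image is all of $D$. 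By the fact recorded at the start of Appendix~A (from \cite{hubie-sicomp}), a $(n+1)$-ary Hubie-pol in $\{a\}$ gives $n$-Collapsibility from source $\{a\}$, and the corollary follows immediately. The ``else $2$'' behaviour you treat as a defect to be repaired is exactly what makes the image full; it is the feature, not the bug.

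Because you miss this, you substitute a ``correction scheme'' that is both unnecessary and unworkable as described. The operations $s$ and $p$ cannot turn a $2$ into a prescribed element of $\{0,1\}$: $s$ sends every off-diagonal pair to $2$ and $p$ sends $02\mapsto 2$, so on the relevant inputs they only push values \emph{towards} $2$. Moreover, the goal is not to convert particular outputs but to show that a suitable composition covers all of $D$ coordinatewise, and your plan never exhibits such a composition. (Two smaller points: Corollary~\ref{MainResult:InAbstractoLevavi} concerns the Switching adversaries over infinite signatures and is not the right tool for a finite $\Delta$; and $p$, $r_3$ were needed inside the proof of Lemma~\ref{lem:Dmitriy} to establish that $f^a_n$ or $f^b_n$ preserves the relations, not to derive Collapsibility afterwards.)
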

\begin{proof}
$f^a_n$ is a Hubie-pol in $\{1\}$ and $f^b_n$ is a Hubie-pol in $\{0\}$.
\end{proof}
For $n\geq 2$, we define the arity $n+2$ idempotent operation $\widehat{f}^a_n$ as follows
\[
\begin{array}{c}
f^a_n(0,0,0,\ldots,0,0)=0 \\
f^a_n(1,1,1,\ldots,1,1)=1 \\
f^a_n(1,1,0,\ldots,0,0)=0 \\
f^a_n(1,0,1,\ldots,0,0)=0 \\
\vdots \\
f^a_n(1,0,0,\ldots,1,0)=0 \\
f^a_n(1,0,0,\ldots,0,1)=0 \\
\mbox{else $c$}
\end{array}
\]
We define $\widehat{f}^b_n$ similarly with $0$ and $1$ swapped.
\begin{lemma}
\label{lem:Dmitriy-long}
Suppose $\mathbb{A}$ is a Gap Algebra that is not $\alpha\beta$-projective. Then either 
\begin{itemize}
\item any relation $\rho \in \mathrm{Inv}(\mathbb{A})$ of arity $h<n+2$ is preserved by $\widehat{f}^a_n$, \textbf{or} 
\item any relation $\rho \in \mathrm{Inv}(\mathbb{A})$ of arity $h<n+2$ is preserved by $\widehat{f}^b_n$.
\end{itemize}
\end{lemma}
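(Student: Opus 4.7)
The plan is to mimic the inductive proof of Lemma~\ref{lem:Dmitriy}, replacing the $3$-ary $r_3$ of the Zhuk Condition with the $4$-ary $r^a_4$ (or $r^b_4$) produced by Lemmas~\ref{lem:r4-asymmetric} and \ref{lem:r4-symmetric}. Since a Gap Algebra is non-Collapsible by definition, fixing some $f\in\mathbb{A}$ that witnesses non-$\alpha\beta$-projectivity and invoking Propositions~\ref{prop:asymmetric}, \ref{prop:symmetric} together with Lemmas~\ref{lem:r4-asymmetric}, \ref{lem:r4-symmetric}, the ``Collapsible'' alternatives are all blocked; $\mathbb{A}$ therefore contains the binary operations $p_1,p_2$ of Lemma~\ref{lem:fun} together with $r^a_4$ (in the asymmetric Class~$(i)$ or Class~$(ii)$ singleton case) or both $r^a_4$ and $r^b_4$ (in the symmetric case). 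I will prove the ``$a$'' conclusion in the appropriate branch, namely that $\widehat{f}^a_n$ preserves every $\rho\in\mathrm{Inv}(\mathbb{A})$ of arity $h\leq n+1$; the ``$b$'' conclusion follows by a symmetric argument.

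The proof is by induction on $h$. The base $h=1$ is trivial from idempotency together with the fact that the only unary subalgebras are $\{0,2\}$ and $\{1,2\}$. For the inductive step, suppose $\mathbf{y}_1,\dots,\mathbf{y}_{n+2}\in\rho$ give $\gamma:=\widehat{f}^a_n(\mathbf{y}_1,\dots,\mathbf{y}_{n+2})\notin\rho$. The inductive hypothesis applied to each existential projection $\sigma_i$, combined with Lemma~\ref{sushnabor}, forces $\gamma$ essential for $\rho$; Lemma~\ref{lem:Dmitriy-micro} then restricts $\gamma$ to contain at most one $2$. Split into two cases as in Lemma~\ref{lem:Dmitriy}. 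In Case~$1$ ($\gamma$ contains no $2$), the definition of $\widehat{f}^a_n$ forces each row to lie in $\{0,1\}^{n+2}$ with at most one bad entry among the $n+1$ columns indexed $\geq 2$ (the leading column~$1$ absorbing the extra $1$ in the two-ones pattern); pigeonholing the $\leq h\leq n+1$ bad entries across these $n+1$ columns either exhibits a column equal to $\gamma$ and concludes, or drives us into the rigid residual configuration $h=n+1$, $\gamma=\overline 0$, $\mathbf{y}_1=\overline 1$ and $\mathbf{y}_2,\dots,\mathbf{y}_{n+2}$ a permutation of the $h$ unit vectors. In Case~$2$ ($\gamma$ has one $2$, \mbox{w.l.o.g.}\ at coordinate~$1$), pigeonhole on rows $2,\dots,h$ locates a column $\mathbf{y}_k$ agreeing with $\gamma$ off coordinate~$1$, after which a subcase analysis of $\mathbf{x}_1$ (containing a $2$; containing exactly one $1$ beyond the leading coordinate; containing at least two ones beyond the leading coordinate) mirrors the three subcases of Lemma~\ref{lem:Dmitriy}: the first two subcases resolve via $p_1,p_2$ of Lemma~\ref{lem:fun}, while the last resolves via $r^a_4$ applied to $\mathbf{y}_k$ together with three further columns whose row-$1$ entries instantiate the $r^a_4$ patterns $0111\mapsto 2$, $0101\mapsto 0$ and $0110\mapsto 0$. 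The need for a fourth column here is precisely why the arity rises from $n+1$ (for $f^a_n$) to $n+2$ (for $\widehat{f}^a_n$).

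The main obstacle is the extremal residual sub-case of Case~$1$, where the pigeonhole argument has no slack. Here one must show directly that the rigid configuration $\mathbf{y}_1=\overline 1$, $\mathbf{y}_{1+i}=e_i\in\rho$ forces $\overline 0=\gamma\in\rho$, contradicting the assumption. This requires combining essentiality of $\gamma$ (which places single-coordinate perturbations $(b_i,0,\dots,0)\in\rho$), successive applications of $s$, $p_1$ and $p_2$ to stock $\rho$ with the auxiliary tuples needed, and a closing application of $r^a_4$ on a carefully chosen $4$-subset of the resulting tuples whose row-wise inputs fall on the specified patterns of $r^a_4$, yielding $\overline 0$ as the value of a composed polymorphism of $\rho$. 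This step is where the $4$-ary $r^a_4$ (rather than a $3$-ary $r_3$) is genuinely indispensable, and is what ultimately justifies the arity $n+2$ of $\widehat{f}^a_n$.
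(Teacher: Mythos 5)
Your overall architecture does match the paper's: induction on $h$, essentiality of $\gamma$ via the inductive hypothesis and Lemma~\ref{sushnabor}, Lemma~\ref{lem:Dmitriy-micro} to force a single $2$, and a Case~2 subcase analysis resolved by the operations of Lemma~\ref{lem:fun} and $r^a_4$ (the paper's third subcase additionally splits on $\mathbf{y}_1(1)$ and, when $\mathbf{y}_1(1)=0$, needs the explicit composed term $p(x_2,p_1(x_4,p_1(x_2,x_1)))$ rather than a bare application of $r^a_4$, but that is recoverable detail). The genuine gap is in your treatment of Case~1. You are right that the pigeonhole has no slack at $h=n+1$ and that the residual configuration $\gamma=\overline{0}$, $\mathbf{y}_1=\overline{1}$, $\{\mathbf{y}_2,\ldots,\mathbf{y}_{n+2}\}=\{e_1,\ldots,e_{n+1}\}$ must be confronted. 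But your proposed resolution, deriving $\overline{0}\in\rho$ from these tuples by composing $s$, $p_1$, $p_2$, $p$ and $r^a_4$, cannot be carried out. Every value of these operations that the hypotheses actually pin down and that equals $0$ (namely $p_2(0,1)=0$, $p(0,1)=0$, $r^a_4(0,1,0,1)=r^a_4(0,1,1,0)=0$, and idempotence) has $0$ in its first coordinate; $s$ returns $0$ only on $(0,0)$; and $p_1$ returns $0$ on none of its specified off-diagonal inputs. Consequently $D^{n+1}\setminus\{\overline{0}\}$ is closed under every operation you are entitled to use, it contains $\overline{1}$ and all the $e_i$, and the subalgebra of $\mathbb{A}^{n+1}$ they generate need not contain $\overline{0}$.

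This is not a hypothetical obstruction. Take Chen's algebra $(\{0,1,2\};r,s)$ with $0$ and $1$ interchanged: it is a Gap Algebra, not $\alpha\beta$-projective, and it sits in your ``$a$'' branch (the swapped $r$ yields an idempotent binary $p$ with $01\mapsto 0$, $02\mapsto 2$ by identifying its last three arguments). The swapped $r$ returns $0$ only on inputs whose last two coordinates are $0$, so $D^{n+1}\setminus\{\overline{0}\}$ is a subuniverse of the $(n+1)$st power; hence $\rho:=\langle\overline{1},e_1,\ldots,e_{n+1}\rangle\in\mathrm{Inv}(\mathbb{A})$ has arity $n+1<n+2$ and omits $\overline{0}$, yet $\widehat{f}^a_n(\overline{1},e_1,\ldots,e_{n+1})=\overline{0}$ because every row is one of the two-ones patterns. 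So the extremal configuration is a genuine counterexample at arity $h=n+1$, not a technicality to be argued away; the induction can only be pushed to $h<n+1$ (which still yields Theorem~\ref{cor:Dmitriy-long} after reindexing $n$). For what it is worth, the paper's own Case~1 dismisses this with the same one-line pigeonhole as in Lemma~\ref{lem:Dmitriy}, which is valid there because each row of $f^a_n$'s non-else patterns disagrees with $\gamma(i)$ in at most one position, but is not valid for $\widehat{f}^a_n$; your instinct that something more is needed is correct, but the repair you propose is impossible, and your proof founders precisely at the step you flag as the main obstacle.
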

\begin{proof}
Suppose \mbox{w.l.o.g.} that we are either in the asymmetric case with Class $(i)$ singleton and  there exists $\overline{z} \in \{0,1\}^*$ so that $f(0|1,\ldots,1|\overline{z})=1$ \textbf{or} we are in the symmetric case and we have an idempotent term operation $p$ mapping $01 \mapsto 0$ and $02 \mapsto 2$.

We prove this statement for a fixed $n$ by induction on $h$. For $h = 1$ we just need to
check that $\widehat{f}_n:=\widehat{f}^a_n$ preserves the unary relations $\{0, 2\}$ and $\{1, 2\}$.

Assume that $\rho$ is not preserved by $f_n$, then there exist tuples $\mathbf{y}_1,\ldots,\mathbf{y}_{n+2} \in \rho$ such that $\widehat{f}_n(\mathbf{y}_1,\ldots,\mathbf{y}_{n+2})=\gamma \notin \rho$. We consider a matrix whose columns are $\mathbf{y}_1,\ldots,\mathbf{y}_{n+2}$. Let the rows of this matrix be $\mathbf{x}_1,\ldots,\mathbf{x}_h$.

By the inductive assumption every $\sigma_i$ from the definition of $\widetilde{\rho}$ is preserved by $\widehat{f}_n$, which means that $\widetilde{\rho}$ is preserved by $\widehat{f}_n$, which means that $\gamma \notin \rho$ and $\gamma$ is an essential tuple for $\rho$.

We consider two cases. First, assume that $\gamma$ doesn’t contain $2$. Then it follows from the definition that every $\mathbf{x}_i$ contains at most one element that differs from $\gamma(i)$. Since $n+2>h$, there exists $i \in \{1, 2, \ldots , n + 1\}$ such that $\mathbf{y}_i = \gamma$. This contradicts the fact that $\gamma \notin \rho$.

Second, assume that $\gamma$ contains $2$. Then by Lemma~\ref{lem:Dmitriy-micro}, $\gamma$ contains exactly one $2$. \mbox{W.l.o.g.} we assume that $\gamma(1) = 2$. It follows from the definition of $\widehat{f}_n$ that $\mathbf{x}_i$ contains at most one element that differs from $\gamma(i)$ for every $i \in \{2, 3, \ldots , h\}$. Hence, since $n+2>h$, for some $k \in \{2, \ldots , n+2\}$ we have $\mathbf{y}_k(i) = \gamma(i)$ for every $i \in \{2, 3, \ldots , h\}$. Since $\widehat{f}_n(\mathbf{x}_1) = 2$, we have one of four subcases. First subcase, $\mathbf{x}_1(j) = 2$ for some $j$. We need one of the properties
\[
\begin{array}{cc|c}
\mathbf{y}_k & \mathbf{y}_j & \gamma \\
\hline
0 & 2 & 2 \\
0 & 1 & 0 \\
\end{array}
\mbox{ \ \ \ \ \ \ \ \ \ \ \ \ \ \ \ \ \ \ }
\begin{array}{cc|c}
\mathbf{y}_k & \mathbf{y}_j & \gamma \\
\hline
1 & 2 & 2 \\
0 & 1 & 0 \\
\end{array}
\]
and we can see that the functions from Lemma~\ref{lem:fun}, or Proposition~\ref{prop:asymmetric} or Proposition~\ref{prop:symmetric}, suffice which contradicts our assumptions.

Second subcase, $\mathbf{y}_k(1) = 1, \mathbf{y}_m(1) = 0$ for some $m \in \{1, 2, \ldots , n + 1\}$. We need the property 
\[
\begin{array}{cc|c}
\mathbf{y}_k & \mathbf{y}_m & \gamma \\
\hline
1 & 0 & 2 \\
0 & 1 & 0 \\
\end{array}
\]
can check that a function from Lemma~\ref{lem:fun} suffices, which contradicts our assumptions.

For Case 3, $\mathbf{y}_k(1) = 0, \mathbf{y}_m(1) = 1$ and $\mathbf{y}_l(1) = 1$ for some $m, l \in \{1, 2, \ldots , n + 1\}\setminus \{k\}$, $m \neq l$ (possibly $1 \in \{m,l\}$). We now split into two subsubcases: either $\mathbf{y}_1(1)=1$ and we need the property
\[
\begin{array}{cccc|c}
\mathbf{y}_1 &  \mathbf{y}_k & \mathbf{y}_m & \mathbf{y}_l & \gamma \\
\hline
1 & 0 & 1 & 1 & 2 \\
1 & 0 & 0 & 1 & 0 \\
1 & 0 & 1 & 0 & 0. \\
\end{array}
\]
Here we can check that $r_4$, from Proposition~\ref{prop:asymmetric} or Proposition~\ref{prop:symmetric}, with co-ordinates $1$ and $2$ permuted, suffices, which contradicts our assumptions. Or we have  $\mathbf{y}_1(1)=0$ and we need the property
\[
\begin{array}{cccc|c}
\mathbf{y}_1 & \mathbf{y}_k & \mathbf{y}_m & \mathbf{y}_l & \gamma \\
\hline
0 & 0 & 1 & 1 & 2 \\
1 & 0 & 0 & 1 & 0 \\
1 & 0 & 1 & 0 & 0. \\
\end{array}
\]
For this $p(x_2,(p_1(x_4,p_1(x_2,x_1))))$ suffices where $p_1$ comes from Lemma~\ref{lem:fun} and $p$ is as before in this proof (\mbox{cf.} Proposition~\ref{prop:asymmetric} and Proposition~\ref{prop:symmetric}).
\[
\begin{array}{cccc|c|c|c}
x_1 & x_2 & x_3 & x_4 & p_1(x_2,x_1) & p_1(x_4,p_1(x_2,x_1)) & p(x_2,(p_1(x_4,p_1(x_2,x_1)))) \\
\hline
0 & 0 & 1 & 1 & 0 & 2 & 2 \\
1 & 0 & 0 & 1 & 1 & 1 & 0\\
1 & 0 & 1 & 0 & 1 & 1 & 0 \\
\end{array}
\]
This completes the proof.
\end{proof}

Let us recall the main result of this appendix.

\

\noindent \textbf{Theorem~\ref{cor:Dmitriy-long}}.
Suppose $\mathbb{A}$ is a Gap Algebra that is not $\alpha\beta$-projective. Then, for every finite subset of $\Delta$ of Inv$(\mathbb{A})$, Pol$(\Delta)$ is Collapsible.

\begin{proof}
$\widehat{f}^a_n$ is a Hubie-pol in $\{1\}$ and $\widehat{f}^b_n$ is a Hubie-pol in $\{0\}$.
\end{proof}

\section*{Appendix B: $\mathbb{A}$ has a $2$-element G-set as a subalgebra}

Let us recall that this appendix is in pursuit of the following result.

\

\noindent \textbf{Theorem~\ref{thm:catarina}}.
Suppose $\mathbb{A}$ is a $3$-element algebra  that is not $\alpha\beta$-projective, containing a $2$-element G-set as a subalgebra. Then, $\mathbb{A}$ is Collapsible.

\

\noindent Recall $\mathbb{A}$ is an idempotent clone over domain $\{0,1,2\}:=D$, without loss of generality, having a subalgebra induced by $\{0,1\}$ that is a G-set. Further we can assume that $\mathbb{A}$ is neither: $\{0,2\}\{1,2\}$-projective, $\{0,1\}\{1,2\}$-projective nor $\{0,1\}\{0,2\}$-projective. From this last assumption, by collapsing co-ordinates of $4$-ary operations, we arrive at the following.

\begin{lemma}\label{4hypothesis}
$\mathbb{A}$ has three ternary operations $f,g,h$ all of which are projections to the first component on $\{0,1\}$ and for which:
\[
\begin{array}{lll}
\mbox{one of: \ \ \ \ \ \ \ \ \ \ \ \ \ } & \mbox{one of: \ \ \ \ \ \ \ \ \ \ \ \ \ }  & \mbox{one of: \ \ \ \ \ \ \ \ \ \ \ \ \ }  \\
f(0,1,2)=1 & g(1,0,2)=0 & h(0,1,2)=1 \\
f(1,0, 2)=0 & g(1, 0,2)=2 & h(0,1,2)=2 \\
f(2,0,1)=1 & g(0,1,2)=2 & h(1,0,2)=2\\
f(2,0,1)=0 & g(2,0,1)=0 & h(2,0,1)=1
\end{array}
\]
\end{lemma}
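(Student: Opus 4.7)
The plan is to construct the three ternary operations $f$, $g$, $h$ separately from each of the three non-projectivity hypotheses by the same coordinate-collapsing argument; I describe it for $f$, the constructions for $g$ and $h$ being entirely analogous.

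First I would use the assumption that $\mathbb{A}$ is not $\{0,2\}\{1,2\}$-projective to fix a term operation $F \in \mathbb{A}$ of some arity $k$ together with a tuple $\vec{b} = (b_1,\ldots,b_k)$ witnessing the failure of $\{0,2\}\{1,2\}$-projectivity at coordinate $1$, i.e.\ either $b_1 \in \{0,2\}$ with $F(\vec{b}) = 1$, or $b_1 \in \{1,2\}$ with $F(\vec{b}) = 0$. Since $\mathbb{A}$ is idempotent and the subalgebra on $\{0,1\}$ is a $2$-element $G$-set, the permutation in the $G$-set action must be the identity, and so every term operation of $\mathbb{A}$ restricted to $\{0,1\}$ is a plain coordinate projection. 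By permuting the arguments of $F$, I may assume without loss of generality that $F$ restricts on $\{0,1\}^k$ to the projection onto its first coordinate. Note that if $b_1 \in \{0,1\}$ then $\vec{b}$ must contain the value $2$ somewhere, for otherwise the projection property on $\{0,1\}$ would force $F(\vec{b}) = b_1$, contradicting the violation.

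Next I would collapse $F$ to a ternary operation by grouping the coordinates of $F$ according to the value they take in $\vec{b}$. Choose a bijection $\pi:\{0,1,2\}\to\{1,2,3\}$ with $\pi(b_1)=1$ and set
\[
f(x_1,x_2,x_3) \,:=\, F\bigl(x_{\pi(b_1)}, x_{\pi(b_2)}, \ldots, x_{\pi(b_k)}\bigr).
\]
Because clones are closed under variable identification and permutation, $f$ is a term operation of $\mathbb{A}$. On $\{0,1\}^3$ every argument fed to $F$ lies in $\{0,1\}$, so $f$ reduces to the projection onto position $\pi(b_1)=1$, which confirms that $f$ is the first-coordinate projection on $\{0,1\}$. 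Moreover, evaluating $f$ at the representative tuple $(\pi^{-1}(1),\pi^{-1}(2),\pi^{-1}(3))$ returns exactly $F(\vec{b})$.

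Finally I would split into three cases according to $b_1 \in \{0,1,2\}$, and in each case choose $\pi$ so that the representative tuple becomes one of $(0,1,2)$, $(1,0,2)$, $(2,0,1)$ respectively. The case $b_1=0$ yields $f(0,1,2)=1$; the case $b_1=1$ yields $f(1,0,2)=0$; and the case $b_1=2$ yields $f(2,0,1)=F(\vec{b})\in\{0,1\}$, where the binary choice depends on whether the coord-$1$ violation occurs on the $\alpha$-side or the $\beta$-side. These are precisely the four options listed for $f$. The same construction fed with a witness of non-$\{0,1\}\{1,2\}$-projectivity produces $g$, and with non-$\{0,1\}\{0,2\}$-projectivity produces $h$; in each case the four listed options arise because the unique element of $\alpha \cap \beta$ (the value $2$ for $f$, the value $1$ for $g$, the value $0$ for $h$) allows a violation of either type when it sits in the first coordinate, giving two sub-options, whereas each of the other two values of $b_1$ admits a single violation type. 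The main obstacle is purely one of bookkeeping: the initial permutation of $F$'s arguments that places the $\{0,1\}$-projection on coordinate $1$, the bijection $\pi$ chosen for the collapse, and the resulting representative tuple must all be tracked consistently, but no further algebraic content is needed since membership of $f,g,h$ in $\mathbb{A}$ is automatic.
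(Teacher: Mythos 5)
Your proof is correct and follows essentially the same route as the paper, which justifies the lemma only by the one-line remark that one collapses coordinates of a non-$\alpha\beta$-projective operation; your write-up supplies the details of that collapse and the resulting case analysis on the value in the distinguished coordinate. The one point to tighten is the order of your two without-loss-of-generality steps: you should first locate the coordinate onto which $F$ projects on $\{0,1\}$ and only then pick the violating tuple at that same coordinate (which exists because non-$\alpha\beta$-projectivity provides a violating tuple at \emph{every} coordinate), since the collapse yields the first projection on $\{0,1\}$ together with the listed value precisely when the witness coordinate and the projection coordinate coincide.
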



\noindent Our proof proceeds by a lengthy case analysis.
We first look at the cases when $\mathbb{A}$ does not have a $G$-set as a homomorphic image (otherwise $\mathbb{A}$ has EGP). There are 3 possible congruences that originate from the homomorphic image, they are   $\{0, 2\}, \{1\}$, $\{1,2\}, \{0\}$ and $\{0,1\}, \{2\}$.  We will consider the case where no such congruences exist and the case when even if they exist they do not yield a $G$-set.
 
 We start by noting that since $\mathbb{A}$  has ternary operations as mentioned in Lemma \ref{4hypothesis}  some congruences can be discarded:
 
 \begin{claim}
 $\{1,2\}, \{0\}$  is only a congruence of $\mathbb{A}$ if $f(2,0,1)=1$ and $g(1,0,2)=2$, and  $\{0, 2\}, \{1\}$ is only a congruence if $f(2,0,1)=0$ and $h(0,1,2)=2$.
 \end{claim}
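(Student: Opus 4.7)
The plan is a short case analysis driven by a single observation: every ternary operation listed in Lemma~\ref{4hypothesis} acts as projection to the first coordinate on $\{0,1\}$. So for any of the four candidate values of $f$, $g$, $h$ (each of which involves one or more occurrences of $2$ in its input), I can form a ``shadow'' input by replacing each $2$ with an element of $\{0,1\}$ living in the same block of the purported congruence. Evaluating the operation on the shadow input is trivial: it is the first coordinate. A congruence then forces the original value and the shadow value to lie in the same block, and in most cases this equates an element of one block with an element of the other, yielding a contradiction.

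Concretely, for the first part of the claim take the partition with blocks $\{1,2\}$ and $\{0\}$. I will test each of the four options for $f$:
\begin{itemize}
\item $f(0,1,2)=1$: compare $(0,1,2)$ with $(0,1,1)$; the shadow value is $0$, but $0$ and $1$ lie in different blocks.
\item $f(1,0,2)=0$: compare $(1,0,2)$ with $(1,0,1)$; shadow value is $1$, again crossing blocks.
\item $f(2,0,1)=0$: compare $(2,0,1)$ with $(1,0,1)$; shadow value is $1$, crossing blocks.
\item $f(2,0,1)=1$: compare $(2,0,1)$ with $(1,0,1)$; shadow value is $1$, same block as $1$. Consistent.
\end{itemize}
So the only surviving option is $f(2,0,1)=1$. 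The same template applied to the four options for $g$ rules out $g(1,0,2)=0$, $g(0,1,2)=2$, and $g(2,0,1)=0$ (each pairs an output of $0$ with a shadow value of $1$, or vice versa), leaving only $g(1,0,2)=2$, which is consistent because $1$ and $2$ are in the same block.

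For the second part I repeat the template with the partition having blocks $\{0,2\}$ and $\{1\}$. Now I replace each $2$ by $0$ in the shadow input. Going through the four options for $f$, three of them force either $f(\ldots)=1$ together with a shadow value of $0$, or $f(\ldots)=0$ together with a shadow value of $1$; the only option surviving is $f(2,0,1)=0$, in which case the shadow value $0$ is again in the block with $0$. The same analysis for $h$ rules out $h(0,1,2)=1$, $h(1,0,2)=2$, and $h(2,0,1)=1$, and leaves $h(0,1,2)=2$, which is fine since $0$ and $2$ are in the same block.

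There is no real obstacle: the argument is a mechanical twelve-line case check, and the main thing to get right is the bookkeeping of which element of $\{0,1\}$ to substitute for a given $2$ under the given partition. Writing it compactly in the paper likely amounts to a single observation followed by one display per partition listing the four forbidden substitutions side by side.
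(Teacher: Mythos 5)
Your proof is correct and is essentially the paper's own argument: the paper likewise replaces each $2$ in the input by the $\{0,1\}$-element sharing its congruence block, evaluates the resulting tuple via the first-projection property, and eliminates all but one option for each of $f$, $g$, $h$. (One tiny slip in your summary: the case $g(0,1,2)=2$ is excluded because it pairs the output $2$ with the shadow value $0$, not an output of $0$ with a shadow of $1$ — but your stated general principle of cross-block values already handles this correctly.)
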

 \begin{proof}
 Let $\rho$ denote the congruence $\{1,2\}, \{0\}$.  Let $\psi$ be any operation of $A$ that acts as the first projection on $\{0,1\}$. We have $$\psi(0,1,2) \rho\  \psi(0,1,1)\ \Leftrightarrow \ \psi(0,1,2)\rho \ 0$$ 
 $$\psi(2,0,1) \rho \ \psi(1,0,1)\ \Leftrightarrow \ \psi(2,0,1)\rho \ 1$$
  $$\psi(1,0,2) \rho \ \psi(1,0,1)\ \Leftrightarrow \ \psi(1,0,2)\rho \ 1$$ which implies that  $\psi(0,1,2)=0$ and $\psi(2,0,1), \psi(1,0,2)\in \{1,2\}$.  It follows, from the assumptions in Lemma \ref{4hypothesis} that we must have $f(2,0,1)=1$, $g(1,0,2)=2$, and $h(2,0,1)=1$ or $h(1,0,2)=2$.
  
  Now let $\theta$ denote the congruence $\{0, 2\}, \{1\}$. We have 
  $$\psi(0,1,2) \theta \  \psi(0,1,0)\ \Leftrightarrow \ \psi(0,1,2)\theta \ 0$$ 
 $$\psi(2,0,1) \theta \ \psi(0,0,1)\ \Leftrightarrow \ \psi(2,0,1)\theta \ 0$$
  $$\psi(1,0,2) \theta \ \psi(1,0,0)\ \Leftrightarrow \ \psi(1,0,2)\theta  \ 1$$ which implies that  $\psi(0,1,2), \psi (2,0, 1)\in \{0,2\}$ and $\psi(1,0,2)=1$. It follows, from the assumptions in Lemma \ref{4hypothesis} that, for $\theta$ to be a congruence,  we must have $f(2,0,1)=0$,  $h(0,1, 2)=2$, and  $g(0,1,2)=2$ or $g(2,0,1)=0$.
\end{proof}

It is then clear that  $\{1,2\}, \{0\}$ and  $\{0, 2\}, \{1\}$  cannot both be congruences of $\mathbb{A}$ simultaneously. From the operations $f, g$ and $h$ of $\mathbb{A}$ we cannot discard the congruence $\{0,1\}, \{2\}$ without considering different operations. We now look at the two possible cases: $\mathbb{A}$ has no homomorphic images (and so no congruences) and $\mathbb{A}$ has a homomorphic image but it is not a $G$-set.

\subsection{ $\mathbb{A}$ has no congruences}

Since we do not want $\{0,1\}, \{2\}$ to be a congruence of $\mathbb{A}$,  there must exist an operation $r$ on $\mathbb{A}$ 
such that 
$r(0,0,1,1,2)=2$, 
$r(0,1,0,1,2)\in \{0,1\}$ and $r$ is a projection on $\{0,1\}$. We consider 5 cases, depending on the possible projection, we rearrange $r$ such that it always behave as the first projection on $\{0,1\}$.


\subsubsection{Case 1} $r(0,0,1,1,2)=2$, 
$r(0,1,0,1,2)=1$, and $r$ is the first projection on $\{0,1\}$

We aim to prove that $r$ is $4$-Collapsible. We sometimes identify $j$ with $\{j\}$ when it is will not cause great confusion.

Note that our derivation is unlikely to be optimal since $r(0,D,D,D,D)=D$ whereas we use just $r(0,D,D,D,D)\supseteq \{0,2\}$ below.
\begin{lemma}
Let $k \geq 4$. From Adversaries that are co-ordinate permutations of $(D^k,\{0\}^{M-k})$, $(D^k,\{1\}^{M-k})$ and $(D^k,\{2\}^{M-k})$ one can build Adversaries that are co-ordinate permutations of the form $(D^k,\{0,2\}^{M-k})$.
\end{lemma}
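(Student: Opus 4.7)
The plan is to realise $(D^k,\{0,2\}^{M-k})$ as the (possibly iterated) composition via $r$ of five co-ordinate permutations of the three given types. The two identities I rely on are idempotence $r(D,D,D,D,D)=D$ and $r(0,0,1,1,D)=\{0,2\}$; the latter combines the fact that $r$ acts as first projection on $\{0,1\}^5$ (giving $r(0,0,1,1,x)=0$ for $x\in\{0,1\}$) with the specific value $r(0,0,1,1,2)=2$.

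For the base case $M\le 2k$ one composition should suffice. I would pick two $\{0\}$-type and two $\{1\}$-type source adversaries in standard placement, each with $D$-region on the target $D$-positions, and place the $\{2\}$-type adversary ``reversed'', with its entire $D$-region inside the target $\{0,2\}$-region; this is feasible precisely because $M-k\le k$. At every target $D$-position the profile is $(D,D,D,D,D)$ and idempotence gives $r=D$, while at every target $\{0,2\}$-position the profile is $(\{0\},\{0\},\{1\},\{1\},D)$ and the second key identity gives $r=\{0,2\}$. A block matrix in the style of the earlier appendix proofs would make the verification transparent.

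For $M>2k$ the $\{2\}$-type adversary alone cannot cover all $M-k$ target constant positions with its $k$ $D$-coordinates, so I iterate. In a first round I build an intermediate of shape $(D^k,\{0,2\}^k,\{2\}^{M-2k})$ by upgrading only the first block of $k$ constant positions as above; in subsequent rounds the freshly-built intermediate is fed into the fifth slot of $r$, extending the region over which the $r(0,0,1,1,D)$-argument applies, and after $\lceil(M-k)/k\rceil$ rounds every constant position has been lifted from $\{2\}$ to $\{0,2\}$. The step I expect to be the main obstacle is preserving $r\supseteq D$ at the target $D$-positions through each round: whenever the fifth coordinate sees only $\{2\}$, obtaining $0$ in the output of $r$ is not directly guaranteed by the stated hypotheses, so the permutations of the source and intermediate adversaries must be chosen at each stage so that at every target $D$-position either $r(D,D,D,D,D)=D$ or $r(0,D,D,D,D)=D$ (the sharper form of the first identity noted before the lemma) can be invoked.
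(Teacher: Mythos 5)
There is a genuine gap, and it sits exactly where you flag it --- but it already occurs in your base case, not only in the iteration. With your placement (the two $\{0\}$-type and two $\{1\}$-type adversaries all carrying their $D$-regions on the $k$ target $D$-positions, and the $\{2\}$-type adversary's $D$-region covering the $M-k$ target constant positions), the fifth adversary has only $2k-M$ of its $k$ many $D$-coordinates left over for the $k$ target $D$-positions; hence, whenever $M>k$, at $M-k\geq 1$ of those positions the row profile is $(D,D,D,D,\{2\})$, not $(D,D,D,D,D)$. Since $r$ is specified only as the first projection on $\{0,1\}^5$ together with $r(0,0,1,1,2)=2$ and $r(0,1,0,1,2)=1$, no tuple with fifth coordinate $2$ is known to evaluate to $0$, so all you can certify there is $r(D,D,D,D,\{2\})\supseteq\{1,2\}$, and the output loses $0$ at those positions. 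Your proposed repair via $r(0,D,D,D,D)=D$ cannot be invoked at such a position, because that identity needs the \emph{fifth} argument to range over all of $D$, which is precisely what fails; $r(0,D,D,D,\{2\})\ni 0$ is not derivable from the hypotheses. The same defect recurs at every round of your iteration, so the construction as described does not go through.

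The paper's proof avoids this by a different placement: it staggers the $D$-regions of all five source adversaries over the first few rows, so that each target $D$-row sees four $D$'s and exactly one constant entry, the constant sitting in a different column in each row, and it accepts a weak first output (one row $\{0,2\}$, one row $\{1,2\}$, a tail of $\{2\}$'s). It then repeatedly feeds the result back in as the fifth column, using identities such as $r(D,\{0\},D,D,D)=D$, $r(D,D,D,D,\{1,2\})=D$ and $r(0,0,1,1,\{0,2\})=\{0,2\}$ to upgrade one further tail coordinate from $\{2\}$ to $\{0,2\}$ per round while restoring the $D$'s at the front, reaching the desired adversary after $M-k-2$ repetitions. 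Any correct argument must incorporate some such staggering, since a row whose output must contain $0$ can never have its fifth source entry pinned to $\{2\}$.
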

\begin{proof}  
First perform
\[
\begin{array}{ccccccc}
0 & D & D & D & D & & \{0,2\} \\
D & 0 & D & D & D & &  D \\
D & D & 1 & D & D & & D \\
D & D & D & 1 & D & \rightarrow & D \\
D & D & D & D & 2 & & \{1,2\} \\
0 & 0 & 1 & 1 & 2 & & 2 \\
0 & 0 & 1 & 1 & 2 & & 2 \\
\vdots & \vdots & \vdots & \vdots & \vdots & & \vdots 
\end{array}
\]
then
\[
\begin{array}{ccccccc}
0 & D & D & D & D & & \{0,2\} \\
D & 0 & D & D & D & &  D \\
D & D & 1 & D & D & & D \\
D & D & D & 1 & 2 & \rightarrow & \{1,2\} \\
D & D & D & D & \{1,2\} & & D \\
0 & 0 & 1 & 1 & \{0,2\} & & \{0,2\} \\
0 & 0 & 1 & 1 & 2 & & 2 \\
\vdots & \vdots & \vdots & \vdots & \vdots & & \vdots 
\end{array}
\]
and
\[
\begin{array}{ccccccc}
0 & D & D & D & D & & \{0,2\} \\
D & 0 & D & D & D & &  D \\
D & D & 1 & D & D & & D \\
D & D & D & 1 & \{1,2\} & \rightarrow & D \\
D & D & D & D & \{1,2\} & & D \\
0 & 0 & 1 & 1 & \{0,2\} & & \{0,2\} \\
0 & 0 & 1 & 1 & 2 & & 2 \\
\vdots & \vdots & \vdots & \vdots & \vdots & & \vdots 
\end{array}
\]
Now repeat this process $M-k-2$ times.
\end{proof}
\begin{lemma}
Let $k \geq 4$. From Adversaries that are co-ordinate permutations of the form $(D^k,\{0,2\}^{M-k})$ and $(D^k,\{1\}^{M-k})$ one can build Adversaries that are co-ordinate permutations of the form $(D^k,\{0,1\}^{M-k})$.
\end{lemma}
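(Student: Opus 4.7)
The plan is to show that a single application of the $5$-ary operation $r$ suffices, with a carefully chosen arrangement of input adversaries. The key observation combines two evaluations of $r$ on the tuple $(0,1,0,1,\cdot)$: when $\cdot=0$, $r$ is the first projection on $\{0,1\}^{5}$ and so returns $0$; when $\cdot=2$, the hypothesis of Case $1$ gives $r(0,1,0,1,2)=1$. Consequently,
\[
r\bigl(\{0,2\},\{1\},\{0,2\},\{1\},\{0,2\}\bigr) \supseteq \{0,1\}.
\]

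Concretely, I would take five input adversaries, each with its $D^{k}$-block placed at positions $1,\ldots,k$: copies of $(D^{k},\{0,2\}^{M-k})$ in slots $1$, $3$, $5$, and copies of $(D^{k},\{1\}^{M-k})$ in slots $2$, $4$. At each position $j\leq k$ the $r$-image is $r(D,D,D,D,D)=D$ (since $r$ is idempotent, so that $\{r(x,x,x,x,x):x\in D\}=D$). At each position $j>k$ the $r$-image is $r(\{0,2\},\{1\},\{0,2\},\{1\},\{0,2\})$, which contains $\{0,1\}$ by the displayed inclusion above. Hence the composed rectangular adversary dominates $(D^{k},\{0,1\}^{M-k})$. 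To obtain an arbitrary co-ordinate permutation of this target, apply the same permutation to all five inputs; each then remains a co-ordinate permutation of one of the two given input forms, and the $r$-image is permuted accordingly.

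The main obstacle is conceptual rather than computational: one must locate a tuple in $\{0,1,2\}^{5}$ at which $r$ is explicitly known and whose value lies in $\{0,1\}$, such that replacing its unique $2$-coordinate by $0$ gives a tuple of $\{0,1\}^{5}$ whose first projection contributes the \emph{other} element of $\{0,1\}$; moreover both tuples must fit coordinate-wise inside a product built from only the two input adversary forms available (i.e.\ $\{0,2\}$ or $\{1\}$ in each slot). The tuple $(0,1,0,1,2)$ meets all three requirements simultaneously, which is what makes the argument short: in contrast with the preceding lemma, no iterated application of $r$ and no combination with $s$ is needed, because the target's non-$D$ component $\{0,1\}$ is already forced by a single composition.
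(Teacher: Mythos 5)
Your proposal is correct and follows essentially the same route as the paper: a single application of the $5$-ary operation $r$ to five input adversaries (three of type $\{0,2\}$, two of type $\{1\}$, interleaved), extracting $\{0,1\}$ in the non-$D$ coordinates from $r(0,1,0,1,0)=0$ (first projection on $\{0,1\}$) together with $r(0,1,0,1,2)=1$. The only difference is cosmetic: the paper staggers the $D^k$-blocks of the five inputs across the first coordinates, whereas you align them, which if anything makes the verification slightly cleaner.
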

\begin{proof}  
\[
\begin{array}{ccccccc}
\{0,2\} & D & D & D & D & & D \\
D & 1 & D & D & D  & &  \{0,1\} \\
D & D & \{0,2\} & D & D & \rightarrow & D \\
D & D & D & 1 & D & & D \\
D & D & D & D & \{0,2\} & & D \\
\{0,2\} & 1 & \{0,2\} & 1 & \{0,2\} & & \{0,1\} \\
\{0,2\} & 1 & \{0,2\} & 1 & \{0,2\} & & \{0,1\} \\
\vdots & \vdots & \vdots & \vdots & \vdots & & \vdots 
\end{array}
\]
\end{proof}
\begin{lemma}
Let $k \geq 4$. From Adversaries that are co-ordinate permutations of the form $(D^k,\{0,1\}^{M-k})$ and $(D^k,\{0,2\}^{M-k})$ one can build the full Adversarie $(D^m)$.
\end{lemma}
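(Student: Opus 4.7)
The plan is to compose $r$ componentwise with five carefully chosen adversaries drawn from the hypotheses: four copies of an adversary of the form $(D^k,\{0,1\}^{M-k})$ together with a single adversary of the form $(D^k,\{0,2\}^{M-k})$, all five picked (using the available co-ordinate permutations) so that their $D^k$ blocks occupy the same fixed $k$ positions. I claim the resulting componentwise composition dominates the full adversary $(D^M)$.

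To verify this, one checks the output co-ordinate by co-ordinate. At each of the $k$ aligned positions every input equals $D$, so trivially $r(D,D,D,D,D)\supseteq D$. At each of the remaining $M-k$ positions one needs to show $r(\{0,1\},\{0,1\},\{0,1\},\{0,1\},\{0,2\}) \supseteq D$, which is a one-shot check against the defining table of $r$: the value $0$ arises from $r(0,0,0,0,0)=0$ (since $r$ is the first projection on $\{0,1\}$); the value $1$ arises from $r(0,1,0,1,2)=1$ (the second defining evaluation of $r$ in Case 1); and the value $2$ arises from $r(0,0,1,1,2)=2$ (the first defining evaluation of $r$ in Case 1). Hence $r(\{0,1\}^4,\{0,2\})\supseteq D$ and the composed adversary equals $(D^M)$.

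The main obstacle is essentially nil: the argument is a direct table lookup against the four rows defining $r$, and the two non-trivial defining values of $r$ in Case 1 were introduced precisely so that the missing elements $2$ and $1$ would land inside $r(\{0,1\}^4,\{0,2\})$. Chaining with the two preceding lemmas, which produced the required $(D^k,\{0,1\}^{M-k})$ and $(D^k,\{0,2\}^{M-k})$ adversaries from the basic Collapsibility adversaries $(D^{k'},\{j\}^{M-k'})$ for $j\in\{0,1,2\}$, the present lemma then yields the full adversary and completes the $4$-Collapsibility argument for $r$ in Case 1 as soon as $k\geq 4$ (the threshold $k\geq 4$ is inherited from the preceding lemmas; no tightening is needed here).
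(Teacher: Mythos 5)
Your proof is correct and is essentially the paper's argument: both compose $r$ with four adversaries of the form $(D^k,\{0,1\}^{M-k})$ and one of the form $(D^k,\{0,2\}^{M-k})$, and the key check is the same table lookup $r(\{0,1\},\{0,1\},\{0,1\},\{0,1\},\{0,2\})\supseteq D$ via $r(0,0,1,1,2)=2$, $r(0,1,0,1,2)=1$ and the projection behaviour on $\{0,1\}$. The only (immaterial) difference is that you align all five $D^k$ blocks on the same coordinates, whereas the paper staggers them diagonally across the first five positions.
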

\begin{proof}  
\[
\begin{array}{ccccccc}
\{0,1\} & D & D & D & D & &  D \\
D & \{0,1\} & D & D & D & & D \\
D & D & \{0,1\} & D & D & \rightarrow & D \\
D & D & D & \{0,1\} & D & & D \\
D & D & D & D & \{0,2\} & & D \\
\{0,1\} & \{0,1\} & \{0,1\} & \{0,1\} & \{0,2\} & & D \\
\{0,1\} & \{0,1\} & \{0,1\} & \{0,1\} & \{0,2\} & & D \\
\vdots & \vdots & \vdots & \vdots & \vdots & & \vdots 
\end{array}
\]
\end{proof}\qed

\subsubsection{Case 2}  $r(0,0,1,1,2)=2$, 
$r(0,1,0,1,2)=0$, and $r$ is the first projection on $\{0,1\}$

\begin{enumerate}

\item 
We assume  first that $f(0,1,2)=1$.





Then we have the following operations on $\mathbb{A}$: 
\begin{claim}
The  
$45$-ary operation $c_1$ defined as 
$$\begin{array}{rll}

f( 
&  r(f(x_1, \ldots, x_3), f(x_4, \ldots, x_6),f(x_7, \ldots,x_{9}),f(x_{10}, \ldots, x_{12}),f(x_{13}, \ldots, x_{15})),\\
& \cdots\\
& r(f(x_{31}, \ldots, x_{33}),f(x_{34}, \ldots, x_{36}),f(x_{37}, \ldots, x_{39}),f(x_{40}, \ldots, x_{42}),f(x_{43}, \ldots, x_{45})) \ )

\end{array}$$
is a generalized Hubie-pol  and on the tuple $ 000012000111222 (\times 3)$
 it returns $0$.
 \end{claim}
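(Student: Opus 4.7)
The plan is to verify two assertions separately: that $c_1$ returns $0$ on the stated tuple, and that $c_1$ is a generalized Hubie-pol on that same tuple. The first is a direct evaluation. Reading ``$000012000111222\ (\times 3)$'' as three concatenated copies of a single $15$-character string, each of the fifteen inner $f$-blocks of $c_1$ receives a consecutive triple of coordinates from one copy, so in each copy the five blocks have inputs $(0,0,0),(0,1,2),(0,0,0),(1,1,1),(2,2,2)$. Using $f(0,1,2)=1$ (the sub-case hypothesis) together with idempotence of $f$, each of the three outer $r$-applications receives the $5$-tuple $(0,1,0,1,2)$ and returns $0$ by the Case-2 hypothesis $r(0,1,0,1,2)=0$; the outer $f$ then returns $f(0,0,0)=0$ by idempotence, giving the claimed value.

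For the generalized Hubie-pol property I would fix an arbitrary coordinate $x_i$ at its tuple value $z_i$ and show that the remaining $44$ coordinates can be assigned so that $c_1$ attains each element of $D$. The $45$ coordinates partition into $15$ three-variable blocks, five of which feed each outer $r$; only the block containing $x_i$ is constrained. In every other block all three variables are free, and since $f$ is idempotent, setting those three inputs to any $a\in D$ forces the inner $f$ to output $a$. Hence the two outer $r$-applications that do not contain the constrained block can each independently realise any value in $D$: the projection behaviour of $r$ on $\{0,1\}$ supplies outputs $0$ and $1$, while $r(0,0,1,1,2)=2$ supplies the output $2$.

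The main obstacle is controlling the third $r$-application, which contains the constrained inner $f$-block. Two of this block's three inputs remain free but the third is pinned to $z_i$, so the values attainable by the restricted inner $f$ must be analysed by a short case split on $z_i\in\{0,1,2\}$ and on the position of $x_i$ inside its block. Using that $f$ is first projection on $\{0,1\}$ together with $f(0,1,2)=1$ and idempotence, the restricted inner $f$ always attains at least $\{0,1\}$, so the constrained $r$ can be steered into $\{0,1\}$ by the projection rule, and into $\{2\}$ via $r(0,0,1,1,2)=2$ whenever the values $0,0,1,1,2$ can be realised in its five slots. The delicate sub-case is when $z_i=2$ and the restricted inner $f$ fails to supply some needed value; here I would bypass the constrained block by routing the target value through the two free outer $r$'s and using the diagonal values $f(0,0,0)=0$, $f(1,1,1)=1$, $f(2,2,2)=2$ of the outer $f$. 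The finite case check then exhausts all $i\in[45]$.
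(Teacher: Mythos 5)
Your evaluation of $c_1$ on the tuple is correct and matches the paper's, and your high-level strategy for the generalised Hubie-pol property (analyse the image of the single constrained inner $f$-block, then of the outer $r$ containing it, then of the outer $f$) is also the paper's. However, two of your concrete claims fail, and the proposed repair does not close the resulting gap. First, ``the restricted inner $f$ always attains at least $\{0,1\}$'' is false at the co-ordinates congruent to $13$, $14$ or $0$ modulo $15$: there the constrained block is $f(2,D,D)$ (or a permutation thereof), and since nothing is assumed about $f$ outside $\{0,1\}^3$ beyond $f(0,1,2)=1$ and idempotence, its image is only guaranteed to contain $\{2\}$. Consequently the outer $r$ containing it has image only guaranteed to be $\{0,2\}$ (via $r(0,1,0,1,2)=0$ and $r(2,2,2,2,2)=2$); the value $1$ need not be attainable from that $r$ at all and must instead be manufactured by the outer $f$. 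Second, your only route to the output $2$ from the constrained $r$ is $r(0,0,1,1,2)=2$, which is unavailable at co-ordinate $6$ (the second slot is then restricted to $f(D,D,2)\supseteq\{1,2\}$, which need not contain $0$) and at co-ordinates $13,14,0$ modulo $15$; the missing ingredient is idempotence of $r$, i.e.\ $r(2,2,2,2,2)=2$ with the restricted slot set to $2$.

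The proposed ``bypass via diagonal values of the outer $f$'' cannot repair the first problem: exactly one argument of the outer ternary $f$ is the image of the constrained $r$, it cannot be routed around, and if that image is only $\{0,2\}$ then the diagonal supplies $f(0,0,0)=0$ and $f(2,2,2)=2$ but never $1$. The paper instead verifies $f(\{0,2\},D,D)=f(D,\{0,2\},D)=f(D,D,\{0,2\})=D$, where the output $1$ comes from the off-diagonal value $f(0,1,2)=1$ and from the first-projection behaviour on $\{0,1\}$ (e.g.\ $f(1,0,1)=f(1,1,0)=1$). Without invoking these facts your case $z_i=2$ does not go through, so the argument as sketched has a genuine gap precisely at the positions feeding the fifth slot of each $r$.
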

 
 \begin{proof}
 When applying $c_1$ to the tuple above, recalling that $f$, $g$ and $r$ are idempotent, we obtain 
 $$ r(f(0,0,0), f(0, 1,2), f(0,0,0), g(1,1,1),f(2,2,2))=r(0,1,0,1,2)=0.$$
 Now let us check that the operation is a generalized Hubie-pol: we have $f(0, D, D),$ $f(D, 0, D), f(D,D, 0)\supseteq \{0,1\}$, and $r(\{0,1\}, D, \ldots, D)= D$, it follows that $c_1(0, D, \ldots, D)=c_1(D,0, D, \ldots, D)= c_1(D, D, 0, D \ldots, D)=D$ and we obtain the same result  when $c_1$ is applied to all $D$s and one 0, and the $0$ appears in co-ordinates congruent with $1, 2$ or $3$ modulo $15$; we have $f(0, D, D) ,f(D, 1, D)\supseteq \{0,1\}$, $f(D, D, 2)\supseteq \{1,2\}$, and $r(D, \{1,2\}, D, D, D)=D= r(D, \{0,1\}, D, D, D)$ it follows that when $c_1$ is applied to all $D$ and one $0$, $1$ or $2$ in co-ordinates congruent with $4, 5$ or $6$ modulo $15$, respectively,  the result is $D$; now $f(0, D, D), f(D, 0, D), f(D, D, 0)\supseteq \{0,1\}$, and $r(D, D, \{0,1\}, D, D)=D$, thus when $c_1$ is applied to all $D$ and one $0$ in co-ordinates congruent with $7, 8$ or $9$ modulo $15$ the result is $D$; we have $f(1, D, D)\supseteq \{1\}$, $f(D, 1, D), f(D, D, 1)\supseteq \{0,1\}$ and $r(D, D, D, 1, D)=D$, hence when $c_1$ is applied to all $D$ and one $1$ in co-ordinates congruent with $10, 11$ or $12$ modulo $15$ the result is $D$;  finally $f(2, D, D), f(D, 2, D), f(D, D, 2)\supseteq \{2\}$,  $r(D, D, D, D, 2)\supseteq \{0,2\}$, and $f(\{0,2\}, D, D)=f(D, \{0,2\}, D)=f(D, D, \{0,2\})=D$, thus when $c_1$ is applied to all $D$ and one $2$ in co-ordinates congruent with $13, 14 $ or $0$ modulo $15$ the result is $D$. This proves the claim.
  \end{proof}\qed

 
 \begin{claim}
The $45$-ary operation $c_2$ defined by 
 $$\begin{array}{rl}
f(& 
  r(f(x_1, \ldots, x_3), f(x_4, \ldots, x_6),f(x_7, \ldots,x_{9}),f(x_{10}, \ldots, x_{12}),f(x_{13}, \ldots, x_{15})),\\
& \cdots\\
 &r(f(x_{31}, \ldots, x_{33}),f(x_{34}, \ldots, x_{36}),f(x_{37}, \ldots, x_{39}),g(x_{40}, \ldots, x_{42}),f(x_{43}, \ldots, x_{45})) \ )
  \end{array}$$
  is a generalized Hubie-pol  on the elements 
$000000111111222(\times 3)$ and on this tuple it returns $2$.
\end{claim}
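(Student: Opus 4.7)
The plan is to follow exactly the template used for $c_1$, adjusting only for the two differences between $c_2$ and $c_1$: the new tuple $000000111111222(\times 3)$ and the replacement of $f$ by $g$ in coordinates $40$--$42$ of the third $r$-block. I would first verify the returned value on the given tuple, then verify the generalized Hubie-pol property coordinate by coordinate.

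For the value, by idempotence each inner triple on equal arguments returns that argument; and since $g$ is also a first projection on $\{0,1\}$ (Lemma~\ref{4hypothesis}), one has $g(1,1,1)=1$, so the substitution of $g$ for $f$ at positions $40$--$42$ is transparent. Each of the three outer $r$-arguments then reduces to $r(0,0,1,1,2)=2$, and the outer application gives $f(2,2,2)=2$.

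For the generalized Hubie-pol property, the structural point is that fixing a single coordinate $x_i$ constrains only the $r$-block in which $x_i$ sits: every block whose fifteen feeding inputs range over $D$ outputs $D$ by idempotence of $f$, $g$ and $r$. Thus the outer $f$ receives $D$ in its other two slots, and it suffices to check that the affected block either itself outputs $D$, or outputs a set $S\subseteq D$ for which $f$ applied to $S$ in one slot and $D$ in the other two covers $D$. I would split into three cases by the tuple value at $i$. For value $0$ (positions $\equiv 1,\dots,6\pmod{15}$), the inner $f$-call of the form $f(0,D,D)$, $f(D,0,D)$ or $f(D,D,0)$ contains $\{0,1\}$ via $f(0,0,0)=0$, $f(1,0,0)=1$ and $f(0,1,2)=1$; plugged into slot 1 or 2 of the corresponding $r$ this yields $D$ using idempotence of $r$ together with $r(0,0,1,1,2)=2$. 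For value $1$ (positions $\equiv 7,\dots,12\pmod{15}$) the analogous analysis feeds $\{1\}$ or $\{0,1\}$ into slot 3 or 4 of the $r$-block and again gives $D$; the positions $40,41,42$ use $g$, but since $g$ is also a first projection on $\{0,1\}$ the identities $g(1,1,1)=1$, $g(0,1,0)=0$, $g(0,0,1)=0$ suffice exactly as for $f$.

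The delicate case is value $2$ (positions $\equiv 13,14,15\pmod{15}$): the inner $f$-call is only guaranteed to contain the singleton $\{2\}$, and feeding slot $5$ of $r$ we get $r(D,D,D,D,\{2\})\supseteq\{0,2\}$ from $r(2,2,2,2,2)=2$ and $r(0,1,0,1,2)=0$, but no direct witness for $1$. I would close by invoking the outer $f$: one checks $f(\{0,2\},D,D)=f(D,\{0,2\},D)=f(D,D,\{0,2\})=D$ using $f(0,0,0)=0$, $f(1,0,0)=1$ (or $f(0,1,2)=1$ when the non-$D$ slot is the third), and $f(2,2,2)=2$. The main obstacle is precisely this last case: blocks of value $2$ do not produce $D$ on their own, and one has to propagate the partial output $\{0,2\}$ through the outer $f$ to recover the missing value $1$. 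Everything else is a verification of identities already exercised in the $c_1$ proof.
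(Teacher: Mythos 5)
Your proof is correct and follows essentially the same route as the paper's: direct evaluation on the tuple, then a coordinate-class check in which the value-$0$ and value-$1$ positions already give $D$ at the level of the affected $r$-block (with the $g$-positions handled identically because $g$ is also a first projection on $\{0,1\}$), while the value-$2$ positions only yield $\{0,2\}$ and the missing $1$ is recovered by the outer $f$ via $f(\{0,2\},D,D)=f(D,\{0,2\},D)=f(D,D,\{0,2\})=D$. The only cosmetic difference is that you name explicit witnessing identities (e.g.\ $r(2,2,2,2,2)=2$ and $r(0,1,0,1,2)=0$) where the paper simply asserts the containments.
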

\begin{proof}
 When applying $c_2$ to the tuple above, recalling that $f$, $g$ and $r$ are idempotent, we obtain 
 $$ r(f(0,0,0), f(0, 0,0), f(1,1,1), f(1,1,1), f(2,2,2))=r(0,0,1,1,2)=2.$$
 Now let us check that the operation is a generalized Hubie-pol: we have $f(0, D, D),$ $f(D, 0, D), f(D, D, 0)\supseteq \{0,1\}$,  and $r(\{0,1\}, D, \ldots, D)= r(D, \{0,1\}, D, D, D)=D$, it follows that $c_2(0, D, \ldots, D)=D$ and we obtain the same result  when $c_2$ is applied to all $D$s and one 0, and the $0$ appears in co-ordinates congruent with $1, 2, 3, 4, 5$ or $6$ modulo $15$; 
we have $f(1, D, D) \supseteq \{1\}$, $f(D, 1, D), f(D, D, 1)\supseteq \{0,1\}$, and $r(D, D, \{1\}, D, D)= r(D, D, \{0,1\}, D, D)= r(D, D, D,\{1,2\},  D)= r(D, D, D,\{0,1\},  D) =D$ it follows that when $c_2$ is applied to all $D$ and one $1$ in co-ordinates congruent with $7, 8, 9, 10, 11$ or $12$ modulo $15$ the result is $D$;   finally $f(2, D, D), f(D, 2, D), f(D, D, 2)\supseteq \{2\}$,  $r(D, D, D, D, 2)\supseteq \{0,2\}$, and $f(\{0,2\}, D, D)=f(D, \{0,2\}, D)=f(D, D, \{0,2\})=D$, thus when $c_2$ is applied to all $D$ and one $2$ in co-ordinates congruent with $13, 14 $ or $0$ modulo $15$ the result is $D$. This proves the claim.

\end{proof}\qed

\begin{claim}
The   $55$-ary operation $c_3$ defined by 
$$\begin{array}{rl}r(& 
 f(f(x_1, \ldots, x_3),f(x_4, \ldots, x_6), r(x_{7}, \ldots, x_{11}),\\
 &  \cdots,\\
 &   f(f(x_{47}, \ldots, x_{49}),f(x_{50}, \ldots, x_{52}),r(x_{53}, \ldots, x_{55})) \ )\end{array}$$
 is a generalized Hubie-pol on the elements $ 000111 00112 (\times 5) $, returning $1$ on this tuple.
\end{claim}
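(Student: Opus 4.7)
The plan is to verify both asserted properties of $c_3$ directly from the defining identities of $f$ and $r$, following the same template established for $c_1$ and $c_2$. First, I would evaluate $c_3$ on the tuple $000111\,00112$ (read in blocks of $11$ repeated five times) inside-out: within each block, the three innermost sub-terms give $f(0,0,0)=0$, $f(1,1,1)=1$, and $r(0,0,1,1,2)=2$ by idempotency together with the Case~2 hypothesis; the outer $f$ in the block then returns $f(0,1,2)=1$ under the standing assumption $f(0,1,2)=1$; finally the outermost $r(1,1,1,1,1)=1$ by idempotency, yielding the claimed value $1$.

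For the generalized Hubie-pol property I would argue coordinate-by-coordinate, splitting by the residue of the fixed position $i$ modulo $11$. Because the five blocks of eleven variables are combined only at the outermost $r$, and every all-$D$ block evaluates through its outer $f$ to $f(D,D,D)=D$, the verification reduces, for each residue class, to showing that the single block containing the fixed coordinate outputs a subset $S_i \subseteq D$ large enough that $r$ applied with $S_i$ in one argument and $D$ in the other four covers all of $D$.

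I would then dispatch three cohorts of positions. For $i$ in positions $1$--$6$ of a block (fixed values $0$ or $1$), the relevant inner $f$ returns a set containing $\{0,1\}$ by the first-projection property and the identity $f(0,1,2)=1$, and this propagates through the outer $f$ of the block to a block output $S_i \supseteq \{0,1\}$. For $i$ in positions $7$--$11$ (fixed values drawn from $00112$), the fixed coordinate enters the inner $r$, and the identity $r(0,0,1,1,2)=2$ together with $r$'s first-projection behaviour forces the inner $r$ output to contain both an element of $\{0,1\}$ and the element $2$; the subsequent outer $f$ therefore returns a block output $S_i$ containing $\{1,2\}$ at a minimum. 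In every case the outer $r$ applied to $S_i$ in a single coordinate and $D$ in the other four coordinates gives all of $D$: the first-projection behaviour of $r$ on $\{0,1\}$ supplies $\{0,1\}$ from any $\{0,1\}$-input, and the identity $r(0,0,1,1,2)=2$ (or a coordinate-permuted instance, obtained by choosing values in the free $D$ slots) supplies $2$. The main obstacle is the careful bookkeeping of which subset of $D$ is forced at each nested level and in which outer-$r$ slot, but no technique beyond the identities and case splits already employed for $c_1$ and $c_2$ is required.
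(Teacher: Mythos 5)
There is a genuine gap, and it sits exactly where you wave your hands: the assertion that the outer $r$ applied to $S_i\supseteq\{0,1\}$ in a single slot and $D$ in the other four always yields all of $D$ because ``$r(0,0,1,1,2)=2$ \emph{or a coordinate-permuted instance}'' supplies the element $2$. The operation $r$ is a fixed operation, not a symmetric one; no coordinate-permuted instance of the identity $r(0,0,1,1,2)=2$ is available from the hypotheses. Concretely, when the fixed coordinate lies in positions $1$--$6$ of the \emph{fifth} block, the block output that you can certify is only $S_i\supseteq\{0,1\}$ (the hypotheses give no way to force a $2$ out of $f(0,D,D)$ or $f(\{0,1\},D,D)$), and you then need $2\in r(D,D,D,D,\{0,1\})$. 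The identity $r(0,0,1,1,2)=2$ is useless here because it requires a $2$ in the fifth argument, idempotency requires a $2$ in all five, and $r$ restricted to $\{0,1\}$ is a projection and so never returns $2$. Nothing in the standing assumptions determines $r(2,2,2,2,0)$ or $r(2,2,2,2,1)$, which is what you would actually need.

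The paper's proof opens with precisely the case analysis you are missing: it first disposes of $r(2,2,2,2,0)=1$ (which yields an operation $s$ with $s(0,1,0,1,2)=1$ and $s(0,0,1,1,2)=2$, reducing to Case~1) and of $r(2,2,2,2,0)=0$ (which yields a Hubie-pol on $\{2\}$ directly, so Collapsibility follows without $c_3$), and only then assumes $r(2,2,2,2,0)=2$. Under that extra assumption the outer $r$ does return $D$ on $\{0,1\}$ in any slot with $D$ elsewhere, and the rest of your block-by-block bookkeeping (the evaluation to $1$ on the designated tuple, and the residue-class analysis modulo $11$) matches the paper's argument. Without that preliminary trichotomy the claim as you argue it is not provable from the stated hypotheses.
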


\begin{proof}
We start by noting that if $r(2,2,2,2,0)=1$ then the operation 
$s(x_1, x_2, x_3, x_4, x_5)=
r(x5,x5,x5,x5,r(x1,x2,x3,x4,x5))$ satisfies 
$s(0,1,0,1,2) = 1$ and $s(0,0,1,1,2)= 2$, so we are in Case 1; if $r(2,2,2,2,0)=0$ then the operation 
$$f(r(x_1, \ldots, x_5), r(x_6, \ldots, x_{10}), r(x_{11},\ldots, x_{15}))$$ 
is a Hubie-pol on $2$.  Indeed we have $r(2, D, \ldots, D), \ldots, r(D, \ldots, D, 2)\supseteq \{0,2\}$,  and $f(\{0,2\}, D, D)=f(D, \{0,2\}, D)=f(D, D, \{0,2\})=D$.

So we assume now that $r(2, 2, 2, 2 ,0)=2$.

When applying $c_3$ to the tuple $00011100112(\times 5)$, recalling that $f$, $g$ and $r$ are idempotent, we obtain 
 $$f(f (0,0,0), g(1,1,1), r(0,0,1,1,2))=f(0,1,2)=1.$$
Now let us check that the operation is a generalized Hubie-pol: we have $f(0, D, D)$, $f(D, 0, D), f(D, D, 0) \supseteq \{0,1\}$,  and $r(\{0,1\}, D, \ldots, D)=\cdots= r(D, \ldots, D, \{0,1\})=D$, so when applying $c_3$ is applied to all $D$s and one $0$, and the $0$ appears in co-ordinates congruent with $1, 2 $ or $3$ modulo $11$ the result is $D$; 
then $f(1, D, D) \supseteq \{1\},\  f(D, 1, D), f(D, D, 1)\supseteq \{0,1\}$,  it follows then, as above, that when applying $c_3$ to all $D$s and one $1$, and the $1$ appears in co-ordinates congruent with $4, 5$ or $6$ modulo $11$ the result is $D$ 
;  we also have $r(0, D, \ldots, D), r(D, \ldots, D, 2)\supseteq \{0,2\}$, $r(D, 0, D, D, D)=r(D, D, 1, D, D)=r(D, D, D, 1, D)=D$, and $f(D, D, \{0,2\})=D$, hence when $c_3$ is applied to $D$ in 
all co-ordinates except one and that one co-ordinate is either a $0$ if the co-ordinate is congruent with  $7$ or $8$, a $1$ if the co-ordinate is congruent with $9$ or $10$, or a $2$ if the 
co-ordinate is congruent with $0$ modulo $11$ then the result is $D$. This proves the claim.
\end{proof}\qed

\item If $f(2,0,1)=1$ then defining an operation 
$ f'(x,y,z)=f(r(x, x, y, y, z), x, y)$ we have that $f'$ is the first projection on $\{0,1\}$ and it satisfies $f'(0,1,2)=1$, and we are back in the subcases considered above.

\item If $f(1, 0, 2)=0$ then  by applying a permutation  to the elements $0$ and $1$ in all operations of $\mathbb{A}$  we obtain $f'(0,1,2)=1$,  $r'(1,1,0,0,2)=2$ and $r'(1,0,1,0,2)=1$, with $r'$ the first projection on $\{0,1\}$. It follows that the operation 
$$s(x_1, \ldots, x_5)=r'(f'(x_1, x_4, x_5), x_3, x_2, x_1, x_5)$$ is the first projection on $\{0,1\}$ and it satisfies $s(0,0,1,1,2)=2$ and $s(0,1,0,1,2)=1$, so we are back in Case 1 considered above.

\item If  $f(2,0,1)= 0$, then we consider the possibilities for the operations $g$ and $h$. If $h(0,1,2)=1$ then we can just define an operation $f'=h$.
 If $h(1,0,2)=2$ (or $g(1,0,2)\in \{0,2\}$) then the operation
$f'(x,y,z)=f(h(x,y,z),y,x)$ (or $f(g(x,y,z), y, x)$) acts as the first projection on $\{0,1\}$ and it satisfies $f'(1,0,2)=0$. 
If $h(2,0,1)=1$ then  we set $f'=h$ and are back in the subcase considered above.
If  $h(0,1,2)=2$  we have the possibility that $\{1\}, \{0,2\}$ is a congruence of $\mathbb{A}$. To break this congruence we must have an operation $z$ in $\mathbb{A}$ that satisfies $z(0,2,0,2,1)\in \{0,2\}$,  $z(0,0,2,2,1)=1$, and is a projection on $\{0,1\}$

If $z$ is the first or second projection on $\{0,1\}$, we  define and operation $f'(x,y,z)=z(x,x,z,z, y)$ that is the first projection on $\{0,1\}$ and it satisfies $f'(0,1,2)=1$. If $z$ is the third or fourth projection on $\{0,1\}$ we define $f'(x,y,z)=z(y,y,x,x,z)$, this operation is the first projection on $\{0,1\}$ and it satisfies $f'(2,0,1)=1$.

If $z$ is the fifth projection of $\{0,1\}$ then the operation $g'(x,y,z)=z(y,z,y,z, x)$ is the first projection on $\{0,1\}$ and it satisfies $g'(1,0,2)\in \{0,2\}$.  
In all cases we reduced the problem to an already considered case.

\end{enumerate}

\subsubsection{Case 3} $r(1,1,0,0,2)=2$ and $r(0,1,0,1,2)\in \{0,1\}$ and $r$ is the first projection on $\{0,1\}$.

\begin{enumerate}
\item If $f(0,1,2)=1$ then  the operation $$s(x_1, \ldots, x_5)=r(f(x_1, x_3, r(x_3, x_4, x_1, x_2, x_5)), x_4, x_1, x_2, x_5)$$ is the first projection on $\{0,1\}$ and it satisfies $s(0,0,1,1,2)=2$ and $s(0,1,0,1,2)\in \{0,1\}$.

\item If $f(1,0,2)=0$ then permuting $0$ and $1$ in all operations of $\mathbb{A}$ we obtain operations that act as the first projection on $\{0,1\}$ and satisfy $f'(0,1,2)=1$, $r'(0,0,1,1,2)=2$ and $r'(1,0,1,0,2)\in \{0,1\}$. Then defining \[s(x_1, \ldots, x_5)=r'(f'(x_1, x_2, r'(x_3, x_1, x_4, x_2, x_5)), x_1, x_4, x_3, x_5)\] we know it acts as the first projection on $\{0,1\}$ and it satisfies $s(0,0,1,1,2)=2$ and $ s(0,1,0,1,2)\in \{0,1\}$.

\item If $f(2,0,1)=0$ then the operation $f'(x,y,z)=f(r(x,x,y,y,z,z),y, x)$ satisfies $f'(1,0,2)=0$ so we are back in the case considered just above.

\item If $f(2,0,1)=1$ then we look at the possibilities for the operation $g$. If $g(1,0,2)=0$  or $g(2,0,1)=0$ then we use $g$ like we used $f$ above, if $g(0,1,2)=2$ (or $h(0,1,2)=2$) then the operation $f'(x,y,z)=f(g(x,y,z), x, y)$ acts as the first projection on $\{0,1\}$ and it satisfies $f'(0,1,2)=f(g(0,1,2), 0,1)=f(2,0,1)=1$. This case was already considered above. Finally, if $g(1,0,2)=2$ we need to have another operation in $\mathbb{A}$ to break the congruence $\{1,2\}, \{0\}$, let it be $m$. This operation satisfies $m(1,2,1,2,0)\in \{1,2\}$, $m(1,1,2,2,0)=0$ and acts as a projection on $\{0,1\}$. If $m$ is the first or second projection on $\{0,1\}$ then the operation $f'(x,y,z)=m(x,x,z,z,y)$ is the firts projection on $\{0,1\}$ and it satisfies $f'(1,0,2)=0$. If $m$ acts as the third or fourth projection on $\{0,1\}$ then the operation $f'(x,y,z)=m(z,z,x,x,y)$ acts as the first projection on $\{0,1\}$ and it satisfies $f'(2,0,1)=0$. If $m$ acts as the fifth projection on $\{0,1\}$ then $h'(x,y,z)=m(y,z,y,z,x)$ acts as the first projection on $\{0,1\}$ and it satisfies $h'(0,1,2)\in \{1,2\}$. All these cases have already been considered.

\end{enumerate}

 \subsubsection{Case 4} $r(1,1,0,0,2)=2$ and $r(1,0,0,1,2)\in \{0,1\}$
 
 \begin{enumerate}
 \item If $f(0,1,2)=1$
then the operation $$s(x_1, \ldots, x_5)=r(f(x_1, x_4, x_5), x_3, x_1, x_2, x_5)$$ is the first projection on $\{0,1\}$ and it satisfies $s(0,0,1,1,2)=2$ and $s(0,1,0,1,2)\in \{0,1\}$.
 
\item If $f(1,0,2)=0$ then permuting $0$ and $1$ in all operations of $\mathbb{A}$ we obtain operations that act as the first projection on $\{0,1\}$ and satisfy $f'(0,1,2)=1$, $r'(0,0,1,1,2)=2$ and $r'(0,1, 1,0,2)\in \{0,1\}$. Then the operation $s(x_1, \ldots, x_5)=r'(x_1,x_2,x_4,x_3, x_5) $  acts as the first projection on $\{0,1\}$, it satisfies $s(0,0,1,1,2)=2$ and $ s(0,1,0,1,2)\in \{0,1\}$, so we are back in Case 1.

\item If $f(2,0,1)=0$  or $f(2,0,1)=1$ then this can be dealt with just like in  Case 3.

\end{enumerate} 
 
      \subsubsection{Case 5}  $r(0,0,1,1,2)=2$ and $r(1,0,0,1,2)\in \{0,1\}$ 
  \begin{enumerate}
 \item If $f(0,1,2)=1$   then the operation $$s(x_1, \ldots, x_5)=r( f(x_1, x_2, r(x_3, x_1, x_2, x_4, x_5)), x_1, x_3, x_4, x_5)$$   is the first projection on $\{0,1\}$ and it satisfies $s(0,0,1,1,2)=2$ and $s(0,1,0,1,2)\in \{0,1\}$. 
 
 \item   If $f(1,0,2)=0$ then permuting $0$ and $1$ in all operations of $\mathbb{A}$ we obtain operations that act as the first projection on $\{0,1\}$ and satisfy $f'(0,1,2)=1$, $r'(1,1,0,0,2)=2$ and $r'(0,1, 1,0,2)\in \{0,1\}$. Then defining $s(x_1, \ldots, x_5)=r'(x_1,x_2,x_4,x_3, x_5) $ we know it acts as the first projection on $\{0,1\}$ and it satisfies $s(1,1,0,0 2)=2$ and $ s(0,1,0,1,2)\in \{0,1\}$, and we are back in Case 3.

 \item If $f(2,0,1)=0$  or $f(2,0,1)=1$ then this can be dealt with just like in  Case 2.

 \end{enumerate}
   
   \subsubsection{Case 6}  
   
    $r(2,0,0,1,1)=2$ and $r(2,0,1,0,1)\in \{0,1\}$
    
\begin{enumerate}    
 \item If  $g(1,0,2)=2$ (or $h(1,0,2)=2$) the operation $$s(x_1, \ldots, x_5)=r( g(x_1, x_3, x_5),  x_3, x_4, x_2, x_1)$$   is the first projection on $\{0,1\}$ and it satisfies $s(1,1,0,0, 2)=2$ and $s(1, 0,0,1,2)\in \{0,1\}$;
 
 \item If  $g(0,1, 2)=2$ (or $h(0,1,2)=2$) the operation $$s(x_1, \ldots, x_5)=r( g(x_1, x_4, x_5),  x_1, x_2, x_3, x_4)$$   is the first projection on $\{0,1\}$ and it satisfies $s(0,0,1,1, 2)=2$ and $s(0,1,0,1,2)\in \{0,1\}$;

 \item We now assume that $g(1,0,2)=0$  or $g(2,0,1)=0$.
 We start by noting that we also have $h(0,1,2)=1$ or $h(2,0,1)=1$, so in particular $h$ satisfies $h(\{0,2\}, D, D)=h(D, \{0,2\}, D)=h(D, D, \{0,2\})=D$. 
Then we can obtain generalized Hubie-polymorphisms. First notice that:

\begin{claim}

$r(0,2,2,2,2)=2$
\end{claim}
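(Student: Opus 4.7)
The plan is to prove the claim by contradiction: suppose $r(0,2,2,2,2) = a$ with $a \in \{0,1\}$, and derive a contradiction against the standing assumptions of Case~6 (in particular, that $\mathbb{A}$ is not $\{0,2\}\{1,2\}$-projective, not $\{0,1\}\{1,2\}$-projective, and not $\{0,1\}\{0,2\}$-projective, together with the case hypotheses $g(1,0,2)=0$ or $g(2,0,1)=0$, and $h(0,1,2)=1$ or $h(2,0,1)=1$, and the $r$-values $r(2,0,0,1,1)=2$, $r(2,0,1,0,1)\in\{0,1\}$).

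First I would introduce the auxiliary binary term operation $q(x,y):=r(x,y,y,y,y)$, which is idempotent and acts as the first projection on $\{0,1\}$ (so $q(0,1)=0$, $q(1,0)=1$). The claim $r(0,2,2,2,2)=2$ is then equivalent to $q(0,2)=2$. Under the contrary hypothesis, $q(0,2) = a \in \{0,1\}$, and the behaviour of $q$ on $(0,2)$ gives an extra constraint that can be fed back into $r$ via term compositions with $g$ and $h$. The idea is that, together with $r(2,0,0,1,1)=2$, one of the two hypothetical values for $a$ lets us build a fresh $5$-ary term operation of $\mathbb{A}$, still acting as the first projection on $\{0,1\}$, whose tuple of values on the $5$ distinguished inputs ($(0,0,1,1,2)$, $(1,1,0,0,2)$, $(2,0,0,1,1)$, $(1,0,0,1,2)$, $(2,0,1,0,1)$) matches one of the configurations treated in Cases~1--5. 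Since those cases have already been fully resolved, appealing to them would discharge the proof.

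Concretely, for $a=1$ I would form a term like $r'(x_1,\ldots,x_5):=r(x_1,h(x_2,x_3,x_5),h(x_3,x_2,x_5),h(x_4,x_2,x_5),h(x_5,x_2,x_3))$, using the case-hypothesis values of $h$ to push the $2$'s in the second through fifth slots into $\{0,1\}$ (keeping $r'$ acting as the first projection on $\{0,1\}$ by idempotency of $h$ on $\{0,1\}$-inputs). The extra value $q(0,2)=1$ is what makes $r'(0,0,1,1,2)=2$ propagate correctly, and one verifies $r'$ lands in Case~1 or Case~3. For $a=0$, a dual composition involving $g$ (exploiting either $g(1,0,2)=0$ or $g(2,0,1)=0$) produces a term operation falling into Case~2 or Case~4.

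The main obstacle is selecting the precise term composition that (i) preserves the ``first projection on $\{0,1\}$'' property, (ii) uses only operations already guaranteed to be in $\mathbb{A}$, and (iii) lands on a tuple shape identifiable with one of Cases~1--5 rather than creating yet a further sub-case. This is a bookkeeping exercise — unlike the construction of the generalized Hubie-pols $c_1, c_2, c_3$ earlier in the subsection, no genuinely new technique is required; the argument just exploits the rigidity imposed by $r$ already being ``as projection-like as possible'' on $\{0,1\}$ together with the nontrivial constraint $r(2,0,0,1,1)=2$.
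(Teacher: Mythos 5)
Your high-level move is the right one: assume $r(0,2,2,2,2)\in\{0,1\}$ and show that the algebra is then Collapsible anyway, so that the equation may be assumed for the rest of the case analysis. (Neither you nor the paper literally derives a contradiction with the projectivity assumptions; the negation of the claim simply implies the conclusion of the enclosing theorem.) But the step that carries all the mathematical weight --- what you actually do with the hypothesis $r(0,2,2,2,2)=a\in\{0,1\}$ --- is missing. The paper uses it directly: together with idempotency it gives $r(D,\ldots,D,2,D,\ldots,D)\supseteq\{a,2\}$ for each of the second through fifth coordinates, while $r(2,0,0,1,1)=2$ and $r(2,0,1,0,1)\in\{0,1\}$ handle the first; composing with $g$ and $h$ (whose subcase values $g(1,0,2)=0$ or $g(2,0,1)=0$, and $h(0,1,2)=1$ or $h(2,0,1)=1$, supply the missing elements) yields a $45$-ary Hubie-pol on $\{2\}$, hence Collapsibility outright, with no appeal to Cases 1--5.

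Your proposed route --- reducing to Cases 1--5 --- is deferred as ``bookkeeping,'' but that is exactly where the difficulty lies, and your one concrete term does not survive inspection. To land in Cases 1--5 you need a $5$-ary term that is the first projection on $\{0,1\}$ yet maps a tuple whose unique $2$ sits in a \emph{non-projection} coordinate to $2$. In this subcase the only known non-diagonal $2$-output among $f,g,h,r$ is $r(2,0,0,1,1)=2$, where the $2$ occupies the projection coordinate, and the hypothesis $r(0,2,2,2,2)=a$ supplies a value in $\{0,1\}$, not a new $2$-output; so it is far from clear that any such composition exists, and no circularity-free recipe is given. Concretely, your $r'(x_1,\ldots,x_5)=r(x_1,h(x_2,x_3,x_5),\ldots)$ requires evaluating $h(1,0,2)$ and $h(2,0,1)$, which are not determined by the standing hypotheses (only the disjunction $h(0,1,2)=1$ or $h(2,0,1)=1$ is available), so the claimed value $r'(0,0,1,1,2)=2$ cannot be computed. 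As written, the proposal has a genuine gap.
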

\begin{proof}
Let us assume for a contradiction that $r(0,2,2,2,2)\in \{0,1\}$, then the operation 

$$h(g (r(x_1, \ldots, x_5),r(x_6, \ldots, x_{10}),r(x_{11}, \ldots, x_{15})), \ldots, g(r(x_{31}, \ldots, x_{35}), r(x_{36}, \ldots, x_{40}), r(x_{41}, \ldots, x_{45})))$$
is a Hubie-pol on $2$.
Indeed we have $r(2,D, D, D, D)\supseteq \{0,2\}$ or  $r(2,D, D, D, D)\supseteq \{1,2\}$, and, by the assumption,  
\[
\begin{array}{l}
r(D, 2, D, D, D), \cdots, r(D, D, D, D, 2)\supseteq \{0,2\}, \mbox{ or} \\
r(D, 2, D, D, D), \cdots, r(D, D, D, D, 2)\supseteq \{1,2\},
\end{array}
\] 
and $g(\{0,2\}, D, D)\supseteq \{0,2\}$, $g(D, \{0,2\}, D)=g(D, D, \{0,2\})=D$, as well as $g(\{1,2\}, D, D)=g(D, \{1,2\}, D)=g(D, D, \{1,2\})=D$, and $h(\{0,2\}, D, D)=h(D, \{0,2\}, D)=h(D, D, \{0,2\})=D$.
\end{proof}\qed

\begin{claim}
The $45$-ary operation $c_1$ defined as 
$$\begin{array}{rl}r(& 
 h(g(x_1, \ldots, x_3),g(x_4, \ldots, x_6), g(x_{7}, \ldots, x_{9}) ),\\
 &  \cdots,\\
 &   h(g(x_{37}, \ldots, x_{39}),g(x_{40}, \ldots, x_{42}),g(x_{43}, \ldots, x_{45})) \ )\end{array}$$
 is a generalized Hubie-pol on the element $ 102 (\times 15)$ if $g(1,0,2)=0$ or $201( \times  15)$ if we assume that $g(2,0,1)=0$, returning $0$ on this tuple.
\end{claim}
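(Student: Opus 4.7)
The plan is to verify two properties: $(a)$ $c_1$ evaluates to $0$ on $102(\times 15)$ (respectively $201(\times 15)$ in the symmetric subcase), and $(b)$ $c_1$ has the generalised Hubie-pol property at each of its $45$ input coordinates. For $(a)$: each inner call $g(x_{3k-2}, x_{3k-1}, x_{3k}) = g(1,0,2)$ returns $0$ by hypothesis; each of the five middle $h$-calls is then applied to $(0,0,0)$ and returns $0$ by idempotence; and $r(0,0,0,0,0) = 0$. The symmetric subcase is identical after relabelling.

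For $(b)$, the cyclic pattern reduces the analysis to three cases, indexed by the residue of the distinguished coordinate $i$ modulo $3$, which simultaneously determines the fixed value $z_i$ and the argument slot of the inner $g$-call that $i$ occupies. The argument factors through three layers. At the \emph{$g$-layer}, using that $g$ acts as the first projection on $\{0,1\}$ together with idempotence and $g(1,0,2)=0$, one obtains $g(D,D,2) = D$ and $g(1,D,D),\ g(D,0,D) \supseteq \{0,1\}$; every other $g$-call has three free arguments and so outputs $D$. At the \emph{$h$-layer}, the four unconstrained $h$-calls output $D$, and the constrained $h_{r_0}$ receives some $S \in \{\{0,1\}, D\}$ at one argument position and $D$ at the two others, and so outputs $\supseteq \{0,1\}$ (since $0 = h(0,0,0)$ and $1 = h(1,1,1)$ are always realised). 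At the \emph{$r$-layer}, $r$ then receives $\supseteq \{0,1\}$ at one position $r_0 \in [5]$ and $D$ at the remaining four, and we must conclude that the output is $D$.

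The nontrivial part is showing $2 \in r(D,\ldots,D,\{0,1\},D,\ldots,D)$ for each choice of $r_0$. When $r_0 = 1$, the preceding Claim supplies $r(0,2,2,2,2) = 2$. When $r_0 \in \{2,3\}$, the defining Case 6 identity $r(2,0,0,1,1) = 2$ exhibits a $2$ with a $0$ at position $r_0$; when $r_0 \in \{4,5\}$, the same identity exhibits a $2$ with a $1$ at position $r_0$. Combined with the inclusion $\{0,1\} \subseteq r(\ldots)$ (immediate from first-projection behaviour on $\{0,1\}$ and idempotence), this gives $r(D,\ldots,D,\{0,1\},D,\ldots,D) = D$ in every case, whence $c_1(D^{i-1}, z_i, D^{45-i}) = D$. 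The symmetric subcase $g(2,0,1) = 0$ runs similarly, with $\{0,2\}$ replacing $\{0,1\}$ at one residue class, which triggers the already-established identities $h(\{0,2\}, D, D) = h(D, \{0,2\}, D) = h(D, D, \{0,2\}) = D$ directly at the $h$-layer, short-circuiting the $r$-layer there.

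The principal obstacle is the mismatch between the set $\{0,1\}$ produced by the constrained $g$-call (when $z_i \in \{0,1\}$) and the $h$-identities derived in the preceding Claim (which concern $\{0,2\}$ instead). The resolution is to defer the burden to the outer $r$: the identities $r(0,2,2,2,2) = 2$ (from the preceding Claim) and $r(2,0,0,1,1) = 2$ (the Case 6 definition) together cover all five input positions of $r$ and supply the required $2$ regardless of which $h$-output carries the $\{0,1\}$-restriction.
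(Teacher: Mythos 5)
Your overall architecture --- evaluate $c_1$ on the distinguished tuple, then push sets through the $g$-, $h$- and $r$-layers, closing the $r$-layer with $r(0,2,2,2,2)=2$ from the preceding Claim and $r(2,0,0,1,1)=2$ from the Case~6 definition --- is the same as the paper's, and part $(a)$ together with your $r$-layer analysis is correct. However, there is a genuine gap at the $g$-layer in the main subcase $g(1,0,2)=0$: you assert $g(D,D,2)=D$, and this does not follow from the hypotheses. All you know about $g$ is that it is idempotent, acts as the first projection on $\{0,1\}$, and satisfies $g(1,0,2)=0$; this gives $g(D,D,2)\ni g(2,2,2)=2$ and $g(D,D,2)\ni g(1,0,2)=0$, i.e.\ $g(D,D,2)\supseteq\{0,2\}$, but nothing forces $1$ into this image (an operation sending every triple containing a $2$, other than $(1,0,2)$ and $(2,2,2)$, to $2$ is consistent with everything assumed). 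Consequently, for the fifteen coordinates $i\equiv 0\pmod 3$ of $102(\times 15)$ the constrained $h$-call receives the set $\{0,2\}$, which falls outside your $h$-layer case analysis $S\in\{\{0,1\},D\}$, and your argument for those coordinates collapses.

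The repair is already in your own toolkit: you invoke $h(\{0,2\},D,D)=h(D,\{0,2\},D)=h(D,D,\{0,2\})=D$ only for the symmetric subcase $g(2,0,1)=0$, but exactly these identities handle the residue class $0\pmod 3$ in the main subcase as well. This is precisely what the paper does: it records $g(2,D,D),\,g(D,D,2)\supseteq\{0,2\}$ (not $=D$), lets the $h$-identities upgrade $\{0,2\}$ to $D$ at the middle layer so that the outer $r$ sees all five arguments equal to $D$ there, and reserves the computation $r(\{0,1\},D,D,D,D)=\cdots=r(D,D,D,D,\{0,1\})=D$ for the residue classes where the constrained $g$-output is $\supseteq\{0,1\}$.
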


\begin{proof}
When applying $c_1$ to the tuple above, recalling that $g$, $h$ and $r$ are idempotent, we obtain 
 $$h(g(1,0,2), g(1,0,2), g(1,0,2))=h(0,0,0)=0 \ {\rm or } $$
 $$ h(g(2,0,1), g(2,0,1), g(2,0,1))=h(0,0,0)=0.$$
 Now let us check that the operation is a generalized Hubie-pol: we have  $g(1, D, D), g(D, 0, D),$ $g(D, D, 1)\supseteq \{0,1\}$, $g(2, D, D), g(D, D, 2)\supseteq \{0,2\}$, $h(\{0,2\}, D, D)=h(D, \{0,2\}, D)=h(D, D, \{0,2\})=D$,  and   $r(\{0,1\}, D, D, D, D)=\cdots =r(D, D, D, D, \{0,1\})=D$ (see Claim above). This proves the claim.
\end{proof}\qed

\begin{claim}
The $45$-ary operation $c_2$ defined as 
  $$\begin{array}{rl}
  g(& h(r(x_1, \ldots, x_5), r(x_6, \ldots, x_{10}), r(x_{11}, \ldots, x_{15}))\\
  & \cdots\\
  &  h(r(x_{31}, \ldots, x_{35}), r(x_{36}, \ldots, x_{40}), r(x_{41}, \ldots, x_{45})) \ )
  \end{array}$$
s a generalized Hubie-pol  on the elements 
$20011(\times 9)$ and on this tuple it returns $2$.
\end{claim}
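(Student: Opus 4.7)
The argument has two halves: first, evaluate $c_2$ on the distinguished tuple $20011(\times 9)$; second, check the generalized Hubie-pol condition coordinate by coordinate. For the evaluation, substitute and reduce inside out: each of the nine inner $r$-calls receives $(2,0,0,1,1)$ and returns $2$ by the Case~6 hypothesis $r(2,0,0,1,1)=2$, each of the three middle $h$-calls then receives $(2,2,2)$ and returns $2$ by idempotency, and finally $g(2,2,2)=2$.

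For the Hubie-pol condition, fix a coordinate $i\in[45]$, set the $i$-th argument to the prescribed value $z_i$, and let every other argument range over $D$. The crucial structural observation is that every inner $r$-call not containing position $i$ receives only $D$'s, and $r(D,D,D,D,D)\supseteq\{r(a,a,a,a,a):a\in D\}=D$ by idempotency; likewise, each middle $h$-call not containing the distinguished $r$-call outputs $D$. It therefore suffices to trace the output $R_i$ of the distinguished inner $r$-call through the unique $h$-call that contains it and then through $g$, and verify that the end result is $D$.

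The analysis of $R_i$ splits by $i\bmod 5$. For $i\equiv 2,3,4,0\pmod 5$ (values $0,0,1,1$), I will show $R_i=D$ using two sources: the first-projection property of $r$ on $\{0,1\}$ contributes $\{0,1\}$ (e.g. $r(0,0,0,0,0)=0$ and $r(1,0,0,0,0)=1$ for $i\equiv 2$, and analogously for the other positions), and the Case~6 identity $r(2,0,0,1,1)=2$ contributes $2$ (its pattern matches each of these four slot-positions). When $R_i=D$, the subsequent $h$- and $g$-calls both return $D$ trivially. For $i\equiv 1\pmod 5$ (value $2$), $R_i=r(2,D,D,D,D)$ contains $r(2,0,0,1,1)=2$ and $r(2,0,1,0,1)\in\{0,1\}$, so $R_i\supseteq\{0,2\}$ or $R_i\supseteq\{1,2\}$. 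In the $\{0,2\}$ subcase, the already-established identity $h(\{0,2\},D,D)=h(D,\{0,2\},D)=h(D,D,\{0,2\})=D$ lifts $R_i$ to $D$ under $h$, and then $g(D,D,D)=D$.

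The main obstacle I anticipate is the symmetric $\{1,2\}$ subcase, which was not packaged as a prior lemma. The plan is to establish $h(\{1,2\},D,D)=h(D,\{1,2\},D)=h(D,D,\{1,2\})=D$ from the properties available in Case~6 subcase~3, namely that $h$ is the first projection on $\{0,1\}$ together with the alternative $h(0,1,2)=1$ or $h(2,0,1)=1$. This likely requires a further split: if $h(2,0,1)=1$ then $h(2,D,D)$ already contains $1$, and combined with $h(1,1,1)=1$, $h(2,2,2)=2$, and a witness of $0$ obtained from $h$'s first-projection behaviour on $\{0,1\}^3$, one reaches $D$; the case $h(0,1,2)=1$ is handled by permuting the slot in which $\{1,2\}$ sits and repeating. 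Once $H_i=D$ is in hand, $g(D,D,D)=D$ finishes the argument at the outer layer, completing all 45 coordinates.
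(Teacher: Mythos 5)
Your evaluation of $c_2$ on $20011(\times 9)$ and your treatment of the coordinates $i\equiv 2,3,4,0 \pmod 5$ match the paper: there $r(D,0,D,D,D)=\cdots=r(D,D,D,D,1)=D$ follows from the first-projection behaviour of $r$ on $\{0,1\}$ together with $r(2,0,0,1,1)=2$, and then $h$ and $g$ trivially return $D$. The problem is exactly where you anticipated it, in the $\{1,2\}$ subcase of the coordinates $i\equiv 1\pmod 5$, and your proposed resolution does not work. You aim to prove $h(\{1,2\},D,D)=h(D,\{1,2\},D)=h(D,D,\{1,2\})=D$, citing as your witness for the value $0$ ``$h$'s first-projection behaviour on $\{0,1\}^3$''. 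But that behaviour only yields $h(a,b,c)=a$ for $a,b,c\in\{0,1\}$, so with first argument constrained to $\{1,2\}$ it produces $h(1,b,c)=1$, never $0$; and the alternatives $h(0,1,2)=1$ and $h(2,0,1)=1$ also output $1$. Nothing in the hypotheses of Case~6, item~3 forces $h(a,b,c)=0$ for any $a\in\{1,2\}$, so $h(\{1,2\},D,D)$ may well equal only $\{1,2\}$ and your intermediate claim is not derivable (the suggestion to ``permute the slot in which $\{1,2\}$ sits'' is also not available: that slot is fixed by the coordinate $i$).

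The paper closes this case one level further out. From idempotency one only records $h(\{1,2\},D,D)\supseteq\{1,2\}$ (and likewise for the other two slots), and then the outer operation $g$ lifts $\{1,2\}$ to $D$: the standing assumption of item~3 is that $g(1,0,2)=0$ or $g(2,0,1)=0$, which together with $g(1,1,1)=1$ and $g(2,2,2)=2$ gives $g(\{1,2\},D,D)=D$, while the first-projection behaviour of $g$ on $\{0,1\}$ gives $g(D,\{1,2\},D)=g(D,D,\{1,2\})=D$. In other words, the value $0$ is manufactured by $g$, not by $h$ -- this is precisely why the claim is stated for the composition $g(h(r,\ldots),\ldots)$ rather than for $h$ alone. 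Your argument as written has a genuine gap at this step; replacing ``show $H_i=D$'' by ``show $H_i\supseteq\{1,2\}$ and then use $g(\{1,2\},D,D)=g(D,\{1,2\},D)=g(D,D,\{1,2\})=D$'' repairs it and recovers the paper's proof.
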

\begin{proof}
When applying $c_2$ to the tuple above, recalling that $g$, $h$ and $r$ are idempotent, we obtain 
 $$h(r(2,0,0,1,1),r(2,0,0,1,1),r(2,0,0,1,1))=h(2,2,2)=2.$$

 Now let us check that the operation is a generalized Hubie-pol: we have $r(2, D, D, D, D)\supseteq \{0,2\}$ or $r(2, D, D, D, D)\supseteq \{1,2\}$ and $r(D, 0, D, D, D)= r(D, D, 0, D,D)=r(D, D, D, 1, D)= r(D, D, D, D, 1)=D$. We also have  $h(\{0,2\}, D, D)=h(D, \{0,2\}, D)=h(D, D, \{0,2\})=D$ and $g(\{1,2\}, D, D)=h(D, \{1,2\}, D)=h(D, D, \{1,2\})=D$, this proves the claim.
 \end{proof}\qed

\begin{claim}
The $45$-ary operation $c_3$ defined as 
$$\begin{array}{rl}r(& 
 g(h(x_1, \ldots, x_3),h(x_4, \ldots, x_6), h(x_{7}, \ldots, x_{9}) ),\\
 &  \cdots,\\
 &   g(h(x_{37}, \ldots, x_{39}),h(x_{40}, \ldots, x_{42}),h(x_{43}, \ldots, x_{45})) \ )\end{array}$$
 is a generalized Hubie-pol on the elements $ 012(\times 15)$ if $h(0,1,2)=1$, or on the element $201(\times 15)$ if we assume that $h(2,0,1)=1$, returning $1$ on this tuple.
\end{claim}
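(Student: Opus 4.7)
The plan is to mirror the template used in the two preceding claims (for $c_1$ and $c_2$). We must verify (1) that $c_3$ evaluated on the stated tuple returns $1$, and (2) that $c_3$ is a generalized Hubie-pol on this tuple, i.e., for every coordinate $i \in \{1,\dots,45\}$, applying $c_3$ to $D$ in all positions except $i$ (where we put the tuple's $i$-th entry $v_i$) yields all of $D$.

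Part (1) is a routine unfolding. For the tuple $012(\times 15)$ with hypothesis $h(0,1,2)=1$, each innermost triple evaluates as $h(0,1,2)=1$, then each middle block of nine evaluates as $g(1,1,1)=1$ by idempotence, and finally the outer $r$ gives $r(1,1,1,1,1)=1$ by idempotence. The case $201(\times 15)$ under the hypothesis $h(2,0,1)=1$ is entirely symmetric.

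For Part (2), I decompose the $45$ coordinates into five outer blocks of nine (one per argument of $r$), each such block further into three triples (one per argument of the middle $g$), each triple being an innermost $h$-call. If only coordinate $i$ is fixed to $v_i$ while all others are $D$, then exactly one inner $h$-call receives a non-$D$ input, so four of the five outer $r$-arguments are already $g(D,D,D)=D$. I then push the surviving set $S \subseteq D$ up through $h$, $g$, $r$ using the following spreading facts: $(i)$~under the hypothesis $h(0,1,2)=1$, combining $h(0,1,2)=1$, idempotence, and $h$ being the first projection on $\{0,1\}$ gives $h(0,D,D),h(D,1,D) \supseteq \{0,1\}$ and $h(D,D,2) \supseteq \{1,2\}$ (symmetrically for the $201$ case); $(ii)$~using $g(0,0,0)=0$, $g(1,1,1)=1$, $g(2,2,2)=2$, together with the subcase identity $g(1,0,2)=0$ or $g(2,0,1)=0$, the set $g$ returns when given two $D$-arguments and one argument $\supseteq \{0,1\}$ or $\supseteq \{1,2\}$ contains $\{0,1\}$; and $(iii)$~with four $D$-inputs to $r$ and the fifth input $\supseteq \{0,1\}$, the Case~6 identity $r(2,0,0,1,1)=2$ combined with $r$ being the first projection on $\{0,1\}$ and idempotence pushes the output back up to all of $D$.

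The principal obstacle is not any single identity but the bookkeeping: one must check the spreading at each of the three within-triple positions, under each of the two alternative hypotheses on $h$, and separately at each of the three layers $h \to g \to r$. None of these steps is individually hard once the preceding cases have been digested, and the proof is entirely parallel to the verifications for $c_1$ and $c_2$; the only subtle point is that, in the position corresponding to the value $2$, the spread is only partial at the $h$-layer, so one must rely on the outer $r$-layer (rather than the $g$-layer alone) to complete the recovery of $D$.
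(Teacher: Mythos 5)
Your Part (1) and the decomposition in Part (2) match the paper's argument, and your spreading facts at the $h$- and $g$-layers are essentially the ones the paper uses (the paper actually completes the recovery to all of $D$ already at the $g$-layer for the value-$2$ positions, via $g(D,D,\{1,2\})=D$, whereas you defer to the $r$-layer; that difference is harmless). The genuine gap is in your step $(iii)$. When the fixed coordinate lies in the \emph{first} block of nine variables, the first argument of the outer $r$ is a set that you can only guarantee contains $\{0,1\}$, and you must show $r(\{0,1\},D,D,D,D)=D$, in particular that $2$ is attained. None of the identities you cite can produce this: $r(2,0,0,1,1)=2$ has a $2$ in the first coordinate, idempotence only gives $r(2,2,2,2,2)=2$, and the first-projection behaviour on $\{0,1\}$ only ever outputs values in $\{0,1\}$ when the first argument is restricted to $\{0,1\}$. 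So there is no way to witness $2\in r(\{0,1\},D,D,D,D)$ from what you have assumed.

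The paper closes exactly this hole with a separate preliminary claim, namely $r(0,2,2,2,2)=2$, established ``without loss of generality'': if instead $r(0,2,2,2,2)\in\{0,1\}$, one either lands back in Case~1 (via the operation $r(x_5,x_5,x_5,x_5,r(x_1,\dots,x_5))$ when the value is $1$) or directly exhibits a Hubie-pol on $2$ (when the value is $0$), so that the remaining case to treat is $r(0,2,2,2,2)=2$, which then yields $r(\{0,1\},D,D,D,D)=\cdots=r(D,D,D,D,\{0,1\})=D$. Your proof needs this auxiliary claim (or some substitute argument producing output $2$ from $r$ with first argument in $\{0,1\}$); as written, the verification fails for coordinates $1$ through $9$.
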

\begin{proof}
When applying $c_3$ to the tuple above, recalling that $g$, $h$ and $r$ are idempotent, we obtain 
 $$g(h(0,1,2), h(0,1,2), h(0,1,2))=g(1,1,1)=1 \ {\rm or } $$
 $$ g(h(2,0,1), h(2,0,1), h(2,0,1))=g(1,1,1)=1.$$
 Now let us check that the operation is a generalized Hubie-pol: we have  $h(0, D, D),h (D, 0,D),$ $h(D, 1, D), h(D, D, 1)\supseteq \{0,1\}$, $h(2, D, D), h(D, D, 2)\supseteq \{1,2\}$, $g(\{1,2\}, D, D)=g(D, \{1,2\}, D)=g(D, D, \{1,2\})=D$,  and   $r(\{0,1\}, D, D, D, D)=\cdots =r(D, D, D, D, \{0,1\})=D$ (see Claim above). This proves the claim.
 \end{proof}\qed

\end{enumerate}

\subsection{$\mathbb{A}$ has congruences but they do not yield $G$-sets} 

If $\{0,1\}, \{2\}$ is the kernel of a congruence then there must  exist an operation $z$ on the two element algebra  with domain $\{ \{0,1\}, \{2\}\}$ that acts as either majority, minority, meet, or join. 
We must get similar operations if $\{0,2\}, \{1\}$ or $\{1,2\}, \{0\}$ are congruences. Suppose that $l$ is an operation on the two element domain $\{\{0,2\}, \{1\}\}$ that acts as either majority, minority, or semilattice. When extending $l$ to $\mathbb{A}$ we obtain that $l(0,1,1)=1$if $l$ is a majority, $l(0, 0, 1)=1$ is $l$ is a minority, $l(0,1)=1$ or $l(1,0)=0$ if $l$ is a semilattice operation.  All these options contradict the fact that $\{ 0,1\}$ is a $G$-set, hence  $\{0,2\}, \{1\}$, and in a similar way  $\{1,2\}, \{0\}$, cannot be congruences of $\mathbb{A}$.

We look at the different possibilities for the operation $z$:

\subsubsection{$z$ is a majority}

then extending $z$ to $\mathbb{A}$ we must have $z(2,2,x)=z(2,x,2)=z(x,2,2)=2$ and $z(x,y, 2)=z(2,x,y)=z(x,2,y)\in \{x,y\}$ for any $x, y\in \{0,1\}$ and $z$ acts as a projection on $\{0,1\}$, we assume wlog that it is the first projection.

Then, the 
$9$-ary operation
$$f(z(x_1,x_2,x_3),z(x_4,x_5,x_6),z(x_7,x_8,x_9))$$ is a Hubie-pol on $\{0\}$ if $f(0,1,2)=1$ or $f(2,0,1)=1$, and is a Hubie-pol on $\{1\}$ if $f(1,0,2)=0$ or $f(2,0,1)=0$.

\subsubsection{$z$ is a minority}

then extending $z$ to $\mathbb{A}$ we must have $z(2,2,x)=z(2,x,2)=z(x,2,2)\in \{x,y\}$ and $z(x,y, 2)=z(2,x,y)=z(x,2,y)=2$ for any $x, y\in \{0,1\}$ and $z$ acts as a projection on $\{0,1\}$. 

Then, as above,  the $9$-ary operation
$$f(z(x_1,x_2,x_3),z(x_4,x_5,x_6),z(x_7,x_8,x_9))$$ is a Hubie-pol on $\{0\}$ if $f(0,1,2)=1$ or $f(2,0,1)=1$, and is a Hubie-pol on $\{1\}$ if $f(1,0,2)=0$ or $f(2,0,1)=0$.

\subsubsection{$z$ is join semilattice}
from  $z(\{0,1\}, \{2\})=z(\{2\}, \{0,1\})=\{2\}$, extending $z$ to $\mathbb{A}$ we obtain $z(x,2)=z(2,x)=2$ for any $x\in \{0,1\}$ and $z$ is, wlog, the first projection on $\{0,1\}$.
Then the $6$-ary operation
$$f(z(x_1,x_2),z(x_3,x_4,),z(x_5,x_6))$$ is a Hubie-pol on $\{0\}$ if $f(0,1,2)=1$ or $f(2,0,1)=1$, and is a Hubie-pol on $\{1\}$ if $f(1,0,2)=0$ or $f(2,0,1)=0$.

\subsubsection{$z$ is meet semilattice}
from  $z(\{0,1\}, \{2\})=z(\{2\}, \{0,1\})=\{0,1\}$, extending $z$ to $\mathbb{A}$ we obtain $z(x,2)=z(2,x)\in \{0,1\}$ for any $x\in \{0,1\}$. 

\begin{enumerate}
\item If $z(2, 0)=z( 2,1)=0 $ then  the $18$-ary  operation 
$$h( f(z(x_1, x_2),\ldots, z(x_{5}, x_{6})), \ldots, f(z(x_{13}, x_{14}), \ldots, z(x_{17}, x_{18}) ))$$
is a Hubie-pol on $\{2\}$ whenever $f(0,1,2)=1$,  $f(1,0,2)=0$,  or $f(2,0,1)=1$,  and $h(0,1,2)=1$ or $h(2,0,1)=1$. Note that if $f(1,0,2)=0$ then, by permuting $0$s and $1$s is all operations of $\mathbb{A}$ we obtain $f(0,1,2)=1$, and this permutation does not affect $z$. 

Assume now that we have $f(2,0,1)=0$. If $h(1,0,2)=2$ then the operation $f'(x,y,z)=f(h(x,y,z), y, x)$ is the first projection on $\{0,1\}$ and it satisfies $f'(1,0,2)=0$ so we are back in a previous case.
If  $h(0,1,2)=2$  then we must also have an operation in $\mathbb{A}$ that breaks the congruence $\{0,2\}, \{1\}$. We have seen in part 4 of Case 2 considered above that we can then reduce this case to another one previously considered.

 \item If $z(2, 0)=z( 2,1)=1 $ then  the $18$-ary  operation 
$$g( f(z(x_1, x_2),\ldots, z(x_{5}, x_{6})), \ldots, f(z(x_{13}, x_{14}), \ldots, z(x_{17}, x_{18}) ))$$
is a Hubie-pol on $\{2\}$ whenever $f(0,1,2)=1$,  $f(1,0,2)=0$,  or $f(2,0,1)=0$,  and $g(1,0,2)=0$ or $g(1,0,2)=2$. Note that if $f(0,1,2)=1$ then, by permuting $0$s and $1$s is all operations of $\mathbb{A}$ we obtain $f(1,0,2)=0$, and this permutation does not affect $z$. 

Assume now that we have $f(2,0,1)=1$. If $g(0,1, 2)=2$ then the operation $f'(x,y,z)=f(g(x,y,z), y, x)$ is the first projection on $\{0,1\}$ and it satisfies $f'(0,1,2)=1$ so we are back in a previous case.
If  $g(1,0 ,2)=2$  then we must also have an operation   that breaks the congruence $\{1,2\}, \{0\}$.  We have seen in part 4 of Case 3 considered above that we can then reduce this case to another one previously considered.

\item If $z(2, D)=D$ then $z$ is a Hubie-pol on $2$.
\end{enumerate}

\section*{Appendix C: A three-element vignette}

\noindent \textbf{Theorem~\ref{thm:vignette}}.
Let $\mathbb{A}$ be an idempotent algebra on a $3$-element domain. Either 
\begin{itemize}
\item $\Pi_k$-CSP$(\mathrm{Inv}(\mathbb{A}))$ is in NP, for all $k$; or
\item $\Pi_k$-CSP$(\mathrm{Inv}(\mathbb{A}))$ is co-NP-complete, for all $k$; or
\item $\Pi_k$-CSP$(\mathrm{Inv}(\mathbb{A}))$ is $\Pi^{\mathrm{P}}_2$-hard, for some $k$.
\end{itemize}

\begin{proof}
If $\mathbb{A}$ has PGP then it is Switchable and QCSP$(\mathrm{Inv}(\mathbb{A}))$ is in NP from Theorem~\ref{thm:easy}. It follows that $\Pi_k$-CSP$(\mathrm{Inv}(\mathbb{A}))$ is in NP, for all $k$, \emph{a fortiori}. Suppose now that $\mathbb{A}$ has EGP. 

If $\mathbb{A}$ does not contain a G-set as a factor, then $\mathbb{A}$ generates the  semilattice-without-unit $s$ and it is known that $\Pi_k$-CSP$(\mathrm{Inv}(\mathbb{A}))$ is in co-NP, for all $k$ \cite{HubieExRes}. Since we have co-NP-hardness from Theorem~\ref{thm:hard}, we can indeed in this case upgrade to co-NP-completeness, for all $k$.  

We now assume that $\mathbb{A}$ contains a G-set as a factor. If $\mathbb{A}$ is a G-set then an examination of the proof in \cite{BBCJK} will show that $\Pi_2$-CSP$(\mathrm{Inv}(\mathbb{A}))$ is already $\Pi^{\mathrm{P}}_2$-hard. This is because additional auxiliary existential quantification may always be pushed innermost (see Proposition~7 of \cite{BovaChenC14}). More generally, if $\mathbb{A}$ has a G-set as a homomorphic image then $\Pi_2$-CSP$(\mathrm{Inv}(\mathbb{A}))$ is $\Pi^{\mathrm{P}}_2$-hard (see Lemma~5 from \cite{QCSPmonoids}). Thus, since $\mathbb{A}$ is over three elements, we can assume the remaining case is that $\mathbb{A}$ has a $2$-element G-set as a subalgebra.

Since $\mathbb{A}$ has EGP, there exist $\alpha,\beta$ strict subsets of $A$ so that $\alpha \cup \beta=A$ and all operations of $\mathbb{A}$ are $\alpha \beta$-projective. If $\alpha \cap \beta = \emptyset$ then $\mathbb{A}$ has a $2$-element G-set as a homomorphic image (with $\alpha$ and $\beta$ the two equivalence classes) and we are in a previous case. Let us assume \mbox{w.l.o.g.} that $\alpha:=\{0,2\}$ and $\beta:=\{1,2\}$.

Case A. G-set is on $\{0,1\}$. We will argue that the co-NP-hardness proof of Theorem~\ref{thm:hard} can be readily extended to $\Pi^{\mathrm{P}}_2$-hardness already for the $\Pi_2$-CSP$(\mathrm{Inv}(\mathbb{A}))$. Observe that $\{0,1\}$ is a subalgebra and therefore is in Inv$(\mathbb{A})$. To add existential quantification $\exists v$, for a reduction from the complement of 3-$\Pi_2$-NAESAT, we simply need to add the stipulation $v \in \{0,1\}$, which appears, as everything else, in DNF.


Case B. G-set is on $\{0,2\}$. Whereas in Case A we extended by alternation the co-NP-hardness proof that used $0$ and $1$ to indicate true and false, we will here extend by alternation the NP-hardness proof of CSP$(\mathrm{Inv}(\mathbb{A}))$ that arises from $\{0,2\}$ inducing a G-set. Thus, we will use $0$ and $2$ to represent true and false. We first make the crucial observation that the binary relation $Z:=\{(0,0),(2,1),(2,2),(0,2)\}$ is in Inv$(\mathbb{A})$. To see this, imagine a term operation $f$ of $\mathbb{A}$ which is both  $\alpha \beta$-projective and actually projective on tuples from $\{0,2\}$ (recall this induces a G-set). Furthermore, this must be the same co-ordinate that is being projected upon for both of these since if this co-ordinate is a $2$, the outcome must be a $2$ already by $\alpha\beta$-projectivity. Since $Z$ has only four pairs we may assume $f$ is at most $4$-ary and we can consider its action columnwise on
\[
\begin{array}{cc}
f & f \\
0 & 0 \\
2 & 1 \\
2 & 2 \\
0 & 2 \\
\hline
x & y. \\
\end{array}
\]
If $f$ ($\alpha \beta$-)projects to the first co-ordinate, then we have $(x,y):=(0,0)$ or $(0,2)$. If it ($\alpha \beta$-)projects to the second co-ordinate, then we have $(x,y):=(2,1)$ or $(2,2)$. If it ($\alpha \beta$-)projects to the third co-ordinate, then we must have $(x,y):=(2,2)$. Finally, if it ($\alpha \beta$-)projects to the fourth co-ordinate then we can have only $(x,y):=(0,2)$.

Now, let us imagine a reduction from 3-$\Pi_2$-NAESAT where $0$ and $2$ will represent true and false. Note that the ternary predicate $R:=\{0,2\}^3\setminus\{(0,0,0),(2,2,2)\}$ is in Inv$(\mathbb{A})$. We will use $R$ to enforce the not-all-equal predicate on $\{0,2\}$ in the obvious fashion and the existential variables from 3-$\Pi_2$-NAESAT will become existential variables of $\Pi_2$-CSP$(\mathbb{A})$ restricted to be from $\{0,2\}$ which, as a subalgebra, is in Inv$(\mathbb{A})$. The trick is how to encode universal variables $\forall v$ and for this we augment a new auxiliary variable $v'$ and substitute by $\forall v' \exists v \ Z(v,v')$. When $v'$ is evaluated as $1$, $v$ is forced to be $2$; when $v'$ is evaluated as $0$, $v$ is forced to be $0$; and when $v'$ is evaluated as $2$, $v$ can be either $0$ or $2$. Ostensibly this does not result in an instance of $\Pi_2$-CSP$(\mathrm{Inv}(\mathbb{A}))$ until we notice, as per the previous sentence, that the existential quantification of all the auxiliary variables may be pushed innermost.

Case C. G-set is on $\{1,2\}$. This case is symmetric with Case B.
\end{proof}

\end{document}